\documentclass[dvipsname, 11pt, letterpaper]{article}

\usepackage{algorithm}
\usepackage{algorithmic}
\usepackage{amsmath}\allowdisplaybreaks
\usepackage{amssymb}
\usepackage{amsthm}
\usepackage{color,xcolor}
\usepackage[margin = 1in]{geometry}
\usepackage{hyperref}
\usepackage{ifthen}
\usepackage{mathtools}
\usepackage{braket}
\usepackage{graphicx}
\usepackage{caption}
\usepackage{makecell,multirow}
\usepackage{subcaption}
\usepackage{framed}
\usepackage{authblk}
\linespread{1.0}

\hypersetup{colorlinks=true, linkcolor=red, citecolor=blue, anchorcolor=blue, urlcolor=blue}  

\DeclareMathOperator*{\argmax}{argmax}


\def \bv{\boldsymbol{v}}
\def \bhv{\hat{\boldsymbol{v}}}
\def \Mcal{\mathcal{M}}

\def \F{\mathcal{F}^{(2)}}
\def \Fl{\mathcal{F}^{(2,\ell)}}
\def \EFOt#1{EFO^{(2)}_{#1}}
\def \EFOl{EFO^{(2,\ell)}}
\def \maxV{\textsc{maxV}}
\def \X{\mathcal{X}}
\def \bb{\boldsymbol{b}}

\newtheorem{theorem}{Theorem}

\newtheorem{lemma}{Lemma}
\newtheorem{definition}{Definition}
\newtheorem{remark}{Remark}
\newtheorem{corollary}{Corollary}

\newtheorem{mechanism}{Mechanism}

\newenvironment{thmbis}[1]
  {%
   \addtocounter{theorem}{-1}%
   \begin{theorem}}
  {\end{theorem}}

\newenvironment{etmechanism}[1]
  {%
   \addtocounter{mechanism}{-1}%
   \begin{mechanism}}
  {\end{mechanism}}
  
\newcommand{\eq}[1]{(\ref{eq:#1})}

\newcommand{\sect}[1]{\hyperref[sect:#1]{Section~\ref*{sect:#1}}}
\newcommand{\append}[1]{\hyperref[append:#1]{Appendix~\ref*{append:#1}}}
\newcommand{\lem}[1]{\hyperref[lem:#1]{Lemma~\ref*{lem:#1}}}
\newcommand{\obs}[1]{\hyperref[obs:#1]{Observation~\ref*{obs:#1}}}
\newcommand{\thm}[1]{\hyperref[thm:#1]{Theorem~\ref*{thm:#1}}}
\newcommand{\mech}[1]{\hyperref[mech:#1]{Mechanism~\ref*{mech:#1}}}
\newcommand{\cor}[1]{\hyperref[cor:#1]{Corollary~\ref*{cor:#1}}}
\newcommand{\figg}[1]{\hyperref[fig:#1]{Figure~\ref*{fig:#1}}}
\newcommand{\tab}[1]{\hyperref[tab:#1]{Table~\ref*{tab:#1}}}
\newcommand{\ex}[1]{\hyperref[ex:#1]{Example~\ref*{ex:#1}}}
\newcommand{\defi}[1]{\hyperref[def:#1]{Definition~\ref*{def:#1}}}
\newcommand{\assump}[1]{\hyperref[assump:#1]{Assumption~\ref*{assump:#1}}}

\title{Competitive Auctions with Imperfect Predictions}
\author[1,2]{Pinyan Lu\thanks{lu.pinyan@mail.shufe.edu.cn}}
\author[3,4]{Zongqi Wan\thanks{wanzongqi20s@ict.ac.cn}}
\author[3,4]{Jialin Zhang\thanks{zhangjialin@ict.ac.cn}}

\affil[1]{ITCS, Shanghai University of Finance and Economics}
\affil[2]{Key Laboratory of Interdisciplinary Research of Computation and Economics (Shanghai University of Finance and Economics), Ministry of Education}
\affil[3]{SKLP, Institute of Computing Technology, Chinese Academy of Sciences}
\affil[4]{School of Computer Science and Technology, University of Chinese Academy of Sciences}

\date{}

\begin{document}
\maketitle
\begin{abstract}
     The competitive auction was first proposed by Goldberg, Hartline, and Wright. In their paper \cite{goldberg2001competitive}, they introduce the competitive analysis framework of online algorithm designing into the traditional revenue-maximizing auction design problem. While the competitive analysis framework only cares about the worst-case bound, a growing body of work in the online algorithm community studies the learning-augmented framework. In this framework, designers are allowed to leverage imperfect machine-learned predictions of unknown information and pursue better theoretical guarantees when the prediction is accurate(consistency). Meanwhile, designers also need to maintain a nearly-optimal worst-case ratio(robustness).

     In this work, we revisit the \emph{competitive auctions} in the learning-augmented setting. We leverage the imperfect predictions of the private value of the bidders and design the learning-augmented mechanisms for several competitive auctions with different constraints, including: \emph{digital good auctions}, \emph{limited-supply auctions}, and \emph{general downward-closed permutation environments}. We also design the learning-augmented mechanism for \emph{online auctions}. For all these auction environments, our mechanisms enjoy $1$-consistency against the strongest benchmark $OPT$, which is impossible to achieve $O(1)$-competitive without predictions. At the same time, our mechanisms also maintain the $O(1)$-robustness against all benchmarks considered in the traditional competitive analysis, including $\F, \maxV$, and $\EFOt{}$. 
     Considering the possible inaccuracy of the predictions, we provide a reduction that transforms our learning-augmented mechanisms into an error-tolerant version, which enables the learning-augmented mechanism to ensure satisfactory revenue in scenarios where the prediction error is moderate.
     We also prove a $3$ robustness ratio lower bound for mechanisms with perfect consistency. The lower bound is strictly greater than the optimal $2.42$ competitive ratio of digital good auctions, showing the impossibility of maintaining the optimal worst-case bound with perfect consistency.
\end{abstract}

\thispagestyle{empty}
\newpage
\thispagestyle{empty}
\tableofcontents
\newpage

\setcounter{page}{1}
\section{Introduction}
Revenue maximization is an important problem in the field of algorithmic mechanism design. Goldberg et al.~\cite{goldberg2001competitive} introduced the competitive analysis framework into the revenue maximization problems and they called this type of auction the \emph{competitive auctions}. In the competitive auction framework, there are $n$ bidders who want to buy an abstract service. Each bidder $i$ has a private value $v_i\in \mathbb{R}^+$ which represents his valuation of the service. The auctioneer initiates a sealed-bid auction and receives the bid $b_i$ of each bidder $i$. Then the auctioneer decides whether to serve each bidder $i$ according to a given feasibility constraint and the amount $i$ should pay for the service. The bidder aims to maximize his utility, which is the difference between $v_i$ and his payment if he gets served and $0$ if he does not get served. Our goal is to design a mechanism that maximizes the auctioneer's revenue while encouraging buyers to truthfully report their private valuations, i.e. $b_i=v_i$. That is, the mechanism is \emph{truthful} or \emph{incentive compatible}. 

Different from the Bayesian setting of the revenue maximization auctions, competitive auctions do not assume a prior distribution on private values. Instead, competitive auctions take a traditional algorithmic view of \emph{worst-case analysis}. In the framework, we compare the revenue of the mechanism with a revenue benchmark function and use the worst-case ratio (i.e. \emph{competitive ratio}) to measure the performance of the proposed mechanisms. Let $f$ denote a benchmark function, then $f$ maps $\mathcal{V}$ to $\mathbb{R}^+$, where $\mathcal{V}:=(\mathbb{R}^+)^n$ denotes the set of possible value vectors and $f$ assigns a benchmark revenue to each vector. Let $\mathcal{M}$ be a truthful mechanism and let $\mathcal{M}(\boldsymbol{v})$ denote the expected revenue of $\mathcal{M}$ when the private value vector is $\bv$. Then $\mathcal{M}$ is \emph{$\alpha$-competitive} against $f$ if
$\mathcal{M}(\bv)  \geq  \frac{f(\bv)}{\alpha}$ for all $\bv\in \mathcal{V}$.

The worst-case analysis of competitive ratio is a prominent mathematical framework for analyzing the online algorithms and it has guided the design of online algorithms for many years. However, with the development of machine learning technology in recent years, people are able to predict various unknown information. It has been found that such predictions can be applied in the design of online algorithms, resulting in better performance than worst-case bounds. There are many works that follow the line of the learning-augmented online algorithm design. Classical problems such as ski rental, caching, and scheduling are revisited in this setting \cite{purohit2018improving,anand2020customizing,lykouris2021competitive,rohatgi2020near,jiang2022online,lattanzi2020online,im2021non,li2021online}. The idea of utilizing the imperfect prediction was naturally brought to the algorithmic mechanism design field by Xu and Lu~\cite{xu2022mechanism} and Agrawal et al.~\cite{agrawal2022learning} independently. They show that the imperfect prediction of the private information of the selfish bidders can also be leveraged to improve the performance of truthful mechanisms.

The spirit of these learning-augmented algorithms is to leverage imperfect predictions robustly. That is, the algorithm should have a strong theoretical guarantee when the given prediction is accurate, but it does not lose much even when the prediction is arbitrarily bad. This is captured by the standard \emph{consistency-robustness} analysis framework proposed in \cite{purohit2018improving}. Within this framework, we assess the algorithm using two metrics: the consistency ratio and the robustness ratio. The consistency ratio is the approximation ratio when the prediction is correct and the robustness ratio is the approximation ratio when the prediction is wrong.

\subsection{Our Results}
In this paper, we investigate the learning-augmented competitive setting where the auctioneer is given the prediction $\hat{v}_i$ for the private value $v_i$ of each bidder $i$. The predicted values are public information for all bidders. Based on the different feasibility constraints of the auctioneer, we have considered three specific competitive auctions, namely \textit{digital good auctions}, \textit{limited supply auctions}, and \textit{general downward-closed permutation environments}. In the digital good auction, there is no feasibility constraint since the digital good can be duplicated infinitely and allocated arbitrarily. In limited supply auctions, the service can only be allocated to a limited number of bidders. In the general downward-closed permutation environments, the constraint is only required to be downward-closed, i.e. the subset of a feasible set is also feasible. We also consider the setting where the bidders arrive in a random order, which is called \textit{online auctions}. 

In the learning-augmented setting, our mechanism takes the prediction values as the input. In other words, our goal is to find a collection of truthful mechanisms, denoted as $\Mcal:=\{\mathcal{M}^{\bhv}\}_{\bhv\in\mathcal{V}}$, where $\bhv:=(\hat{v}_i)_{i=1}^n$ is the vector of predictions. We say $\Mcal$ is $\mu$-consistent against benchmark $f_1$, $\alpha$-robust against $f_2$ if $\mathcal{M}^{\bhv}$ is $\mu$-competitive against $f_1$ when the prediction is perfect and $\alpha$-competitive against $f_2$ when the prediction is imperfect. That is,
\begin{align}
   \Mcal^{\bhv}(\bhv) \geq \frac{f_1(\bhv)}{\mu}, \quad \Mcal^{\bhv}(\bv) \geq \frac{f_2(\bv)}{\alpha}, \quad \forall \bhv,\bv\in \mathcal{V}.
\end{align}
Notice that we are allowed to use distinct benchmarks for consistency ratio and robustness ratio, which is different from other works on learning-augmented mechanism design. 

Throughout the paper, we take the strongest benchmark $OPT(\bv):=\max_{\boldsymbol{x}\in \X}\sum_{i=1}^n x_i \cdot v_i$ as the revenue benchmark for the consistency ratio. Here $\boldsymbol{x}$ is the allocation, $\X$ is the feasibility constraint which depends on the concrete auction environment. It should be pointed out that the $OPT$ benchmark is not considered in the literature of competitive auctions without predictions since there is no truthful mechanism that can achieve a constant or even non-trivial competitive ratio against it \cite{goldberg2001competitive}. Instead, several weaker benchmarks are considered in the competitive analysis, including $\F$, $\Fl$, $\maxV$, and $\EFOt{}$. We use these weaker benchmarks as our robustness benchmarks. $\F$ is the largest revenue obtained by setting a fixed price for the service, with existing at least two bidders who can afford the price. $\maxV$ is the largest revenue obtained by a multi-unit Vickrey auction. These two benchmarks are mainly studied in digital good auctions. 
The benchmark $\F$ was extended to more complicated settings. Actually, there are two different extensions of  $\F$  in the limited-supply auctions, namely $\Fl$ and $\EFOt{}$. While $\Fl$ is more straightforward, $\EFOt{}$, the optimal revenue obtained by an envy-free allocation, is more generally applicable and economically meaningful.   $\EFOt{}$  can be further extended to the general downward-closed permutation environment.


\paragraph{Upper Bounds} If the prediction is perfect, the problem becomes purely an algorithm design problem rather than a mechanism design problem. Thus, it is not difficult to achieve perfect consistency alone. The challenging task is to achieve perfect consistency and constant robustness simultaneously. The main result of this paper is to achieve this goal for all the auction environments and benchmarks studied before. 

\begin{theorem}\label{thm:main}
  For digital good auctions, limited supply auctions, auctions with general symmetric downward-closed feasibility constraints, and online auctions, there are truthful auctions with perfect consistency with respect to the optimal offline solution and constant robustness with respect to all the benchmarks studied for competitive auctions before.  
\end{theorem}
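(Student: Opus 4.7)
My plan is to construct, for each of the three auction settings, a family of truthful mechanisms that simultaneously achieves $1$-consistency against $OPT$ and constant robustness against the classical competitive benchmarks. I will describe the idea for digital good auctions in detail and then indicate how the template lifts to the other two environments.

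The mechanism I have in mind is a prediction-augmented version of the random sampling optimal price scheme. It first partitions the bidders uniformly at random into two halves $A$ and $B$. For each bidder $i \in A$, it computes a personalized posted price $p_i$ as follows: if the other half's bids coincide with the other half's predictions, that is $b_B = \bhv_B$, then set $p_i = \hat{v}_i$; otherwise set $p_i$ equal to the optimal uniform price computed on $b_B$. The rule for $i \in B$ is symmetric. Since $p_i$ depends only on the other half's bids and the (exogenous) predictions, and never on $b_i$, the mechanism is trivially DSIC. The design is essentially forced by a rigidity argument: in order for a truthful mechanism to extract $OPT(\bv) = \sum_i v_i$ under correct predictions, bidder $i$'s price must be a point mass at $\hat{v}_i$, so any $1$-consistent scheme must reduce to this prediction-based rule whenever predictions look right.

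For $1$-consistency, when $\bv = \bhv$ truthful bidding yields $b_B = \bhv_B$ and $b_A = \bhv_A$, so the prediction branch triggers for every bidder, each is charged $\hat{v}_i = v_i$ and accepts, and the total revenue equals $\sum_i v_i = OPT$. For robustness, I split on the mismatch set $T = \{i : v_i \neq \hat{v}_i\}$. When $T$ intersects both halves of the random partition, the prediction branch is disabled on both sides and the mechanism reduces to classical random sampling, giving $O(1)$-competitiveness against $\F$ and $\maxV$ by the standard analysis. When $T$ lies entirely in one half --- a low-probability event under a uniform partition when $|T|$ is moderate, and an event where the undamaged half's prediction revenue is already $\Omega(|U|)$ for $U = [n] \setminus T$ when $|T|$ is small --- I plan to argue by a case split on $|T|$ that either the predicted revenue from the correct-prediction side is already a constant fraction of $f(\bv)$, or enough probability mass sits on the mixed partition where the classical sampling guarantee fires. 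Pinning down a clean constant in this ``one-sided $T$'' regime is the main obstacle and requires carefully combining the partition probabilities with the structure of $\F$ and $\maxV$.

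For limited-supply auctions the template lifts directly: replace the posted-price rule with a top-$k$ allocation constrained by personalized reserves $\hat{v}_i$, retain the ``prediction branch when the other half's bids match, otherwise fall back'' structure, and substitute the fallback with the known $\Fl$- or $\EFOt{}$-competitive mechanism for the supply-constrained setting. Feasibility is preserved because the fallback is itself supply-respecting. For general symmetric downward-closed permutation environments, the posted-price rule is replaced by a VCG-style allocation over the feasible family with reserves drawn from $\bhv$, and the fallback is the $\EFOt{}$-competitive mechanism from the competitive-auction literature. In both extensions the consistency argument is identical; robustness again reduces to the classical analysis of the fallback together with the same one-sided-$T$ accounting as in the digital good case.
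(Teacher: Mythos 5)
Your construction gets truthfulness and $1$-consistency for free (the price/allocation offered to a bidder never depends on his own bid, and when $\bv=\bhv$ every bidder faces $\hat v_i$), and for digital goods your ``one-sided $T$'' accounting is plausibly completable: for a single mismatch the correct-prediction side already yields, in expectation over the partition, a constant fraction of $\sum_{i\ge 2}v_i\ge \tfrac12 OPT^{(2)}(\bv)\ge\tfrac12\F(\bv)$, which is essentially the observation the paper exploits. But as written the proposal has a genuine gap exactly where you flag it, and a second, more serious one that you do not flag. First, the fallback analysis does not transfer: in the one-sided regime only one half is priced by sampling from the other half, so the classical RSOP/random-sampling guarantee (whose proof prices \emph{both} halves against each other) does not apply off the shelf, and you never complete the case split over $|T|$; this is the entire content of the robustness claim for digital goods and cannot be left as ``requires carefully combining the partition probabilities with the structure of $\F$ and $\maxV$.''

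Second, and decisively, feasibility in the limited-supply and general downward-closed settings is not addressed. In any mixed event (mismatches confined to one half, or a half whose complement matches while the other does not) your scheme runs \emph{two different mechanisms simultaneously on disjoint sets of bidders}: a prediction-based allocation on one part and the black-box fallback on the other. The sentence ``feasibility is preserved because the fallback is itself supply-respecting'' is not a proof: two individually feasible allocations need not be jointly feasible --- with supply $\ell$ the union can serve more than $\ell$ bidders, and under a general downward-closed constraint $\X$ the union can be outright infeasible. This is precisely the central difficulty of the problem. The paper resolves it for limited supply only by exploiting a special property of the employed black box (its winner set is always among the top $\ell$ bidders, \lem{lsa 2} and \lem{employee}), and for general downward-closed environments by interposing the all-reject mechanism $\Mcal_{\emptyset}$ so that $OPT_{\X}(\bhv)$ and the black box never run together, which in turn forces extra machinery for the cases of one and two wrong predictions (the insensitivity \lem{-1to2}, \mech{insensitive benchmark}, and the restricted Vickrey mechanism for the $2v_2$ term), all glued by the benchmark decomposition \lem{benchmark decomposition}. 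Your proposal contains no analogue of any of this, so the extensions to $\Fl$, $\EFOl$, and $\EFOt{\X}$ are currently unsupported; to salvage the plan you would either need a structural property of your fallback guaranteeing joint feasibility with the prediction branch, or a separating device playing the role of $\Mcal_{\emptyset}$ together with an argument for the low-mismatch cases.
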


We summarize our robustness ratio in different auction environments and for different benchmarks in \tab{results}. It is worth noting that the mechanisms we designed for digital good auction and downward-closed permutation environments are both black-box reductions. Let $\alpha$ be the competitive ratio of the employed black-box mechanism, our mechanisms achieve $(\alpha+2)$-robust for digital good auctions and $(\alpha+7)$-robust for auctions with general downward-closed permutation environments.

\begin{table}[htbp]
  \centering
    \begin{tabular}{|c|c|c|c|c|c|}
    \hline
    \textbf{Auction Environment} & \textbf{\makecell{Robustness \\Benchmark}} & \textbf{\makecell{Robustness Ratio\\(with perfect consistency)}} &\textbf{\makecell{Competitive Ratio\\(without prediction)}}\\
    \hline
    \multirow{2}{*}{Digital Good Auction}  & $\F$   & $4.42^*$~[\cor{digital good auction}] & $2.42$~\cite{chen2014optimal}\\
    \cline{2-4}
    & $\maxV$ & $e+1^*$~[\cor{digital good auction}] & $e-1$~\cite{chen2014optimal}\\
    \cline{1-4}
    \makecell{Downward-closed\\ Permutation Environment} & $\EFOt{}$ & $13.5^*$~[\cor{general}] & $6.5$~\cite{chen2015competitive}\\
    \cline{1-4}
    \multirow{2}{*}{$\ell$-Limited Supply}   & $\Fl$   & 4.42~[\thm{limited supply f}] & 2.42~\cite{chen2014optimal}\\
    \cline{2-4}
    & $\EFOt{}$ & 5.42~[\thm{limited supply efo}] & 3.42~\cite{chen2015competitive}\\
    \hline
    \multirow{2}{*}{Online Auction}  & $\F$   & $8.84$~[\thm{online auction}] & $2.42$~\cite{chen2014optimal}\\
    \cline{2-4}
    & $\maxV$ & $2e+2$~[\thm{online auction}] & $e-1$~\cite{chen2014optimal}\\
    \cline{1-4}
    \end{tabular}%
\caption{Our robustness ratio upper bound results. We omit the consistency ratio results since all results shown are with perfect consistency against the strongest benchmark $OPT$. $^*$The robustness ratio with mark means that the related mechanism is a black-box reduction.}
  \label{tab:results}%
\end{table}%

\paragraph{Lower Bound} 
In the table, we can see that there is some degradation in ratio if we compare our robustness ratios with the previous best-known competitive ratios without considering consistency. Is such degradation necessary? Can we achieve perfect consistency while still keeping the same robustness ratio? We prove that this is impossible by the following lower bound result. 

\begin{theorem}\label{thm:lower bound F}
  Any $1$-consistent truthful mechanism for digital good auction has a robustness ratio against $\F$ no less than $3$.
\end{theorem}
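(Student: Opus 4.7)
The plan is to fix a convenient prediction vector and exhibit adversarial true-value inputs that jointly force any truthful mechanism family with perfect consistency to have robust ratio at least $3$. The starting observation is that in a digital good auction $OPT(\bv) = \sum_i v_i$, so $1$-consistency against $OPT$ requires $\Mcal^{\bhv}(\bhv) = \sum_i \hat{v}_i$; combined with the truthfulness bound (payment $\le$ value), this forces every bidder's payment on input $\bhv$ to equal $\hat{v}_i$ exactly. By Myerson's threshold characterization, the random threshold $T_i^{\bhv}(\bhv_{-i})$ that bidder $i$ sees when the others bid $\bhv_{-i}$ must equal $\hat{v}_i$ almost surely. This is the only hard consequence of $1$-consistency, but it pins down the mechanism at exactly one point in the bid space, and the robustness constraint then has to pay for any deviation from that point.

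The natural prediction to use is the uniform one, $\bhv = (1,1,\dots,1)$ on some small $n$ bidders; the argument above pins the common threshold $T(1,\dots,1)$ to $1$ almost surely. I would then have the adversary consider inputs of the form $\bv^{(k)} = (\underbrace{v,\dots,v}_{k},\underbrace{1,\dots,1}_{n-k})$ for a parameter $v$ and several values of $k$. For $k=1$, the single $v$-bidder faces the frozen threshold $T(1,\dots,1)=1$ and contributes only $1$ to revenue, so the ratio against $\F(\bv^{(1)})=n$ depends on the remaining thresholds $T(v,1,\dots,1)$, and keeping the ratio small forces this threshold to place substantial mass in $[0,1]$. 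For $k\ge 2$ and $v$ large, however, $\F(\bv^{(k)}) = kv$, and recovering revenue $\Omega(v)$ from the $v$-bidders requires the same $T(v,1,\dots,1)$ to place substantial mass near $v$. These two requirements pull in opposite directions on the same threshold distribution.

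Then I would parametrize the distribution of $T(v,1,\dots,1)$ and set up the resulting feasibility problem. A two-point distribution on $\{1,v\}$ with weight $p$ at $1$ is the natural extremal case (justified afterwards via linearity of expectation on the marginals of the threshold random variable, so the reduction to two-point support is without loss), giving $p \ge c_1(\alpha,n)$ from the $k=1$ constraint and $p \le c_2(\alpha,n,v)$ from each $k\ge 2$ constraint. To extract the tight constant $3$ rather than a weaker bound such as $2$ from the $k=1,k=2$ pair alone, I would couple several threshold levels simultaneously by also using inputs with a second distinct non-unit value, for example $(v,w,1,\dots,1)$, so that constraints on $T(v,1,\dots,1)$, $T(v,v,1,\dots,1)$ and $T(v,w,1,\dots,1)$ interact. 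Finally, I would send $v\to\infty$ (and pick $n$, $w$ as needed) and read off the sharp lower bound $\alpha \ge 3$ from the resulting linear program.

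The hardest step is precisely the coupling in the last paragraph: extracting the constant $3$ rather than a softer bound like $2$ or $5/3$. A naive two-input argument along the lines above already yields $\alpha\ge 2$, but the gap between $2$ and $3$ must come from the fact that once $T(v,1,\dots,1)$ is forced to split its mass between the regions $[0,1]$ and near $v$ (to handle $k=1$ and $k\ge 2$ simultaneously), the mechanism has essentially no remaining probability to cover the intermediate inputs $(v,w,1,\dots,1)$, and this squeeze is what produces the extra factor. Making this squeeze quantitative — and verifying it holds uniformly against all randomized truthful mechanisms rather than just two-point ones — is, I expect, the real obstacle.
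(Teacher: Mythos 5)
Your proposal correctly identifies the first consequence of $1$-consistency (the threshold each bidder faces at the prediction point is pinned to $\hat v_i$ almost surely), but it does not contain a proof of the bound $3$: by your own account, the two-family argument with inputs $(v,\dots,v,1,\dots,1)$ only yields $\alpha\ge 2$, and the step that would close the gap to $3$ --- the coupling of several threshold distributions via inputs $(v,w,1,\dots,1)$, the reduction to two-point supports ``without loss,'' and the claim that the resulting LP has optimum exactly $3$ in the limit --- is left as a conjecture. That is precisely the hard part, and there is reason to doubt it goes through as stated: with the uniform prediction $\bhv=(1,\dots,1)$, consistency only pins the mechanism at the single bid profile $\mathbf{1}$, and the truthfulness constraint $\int_1^\infty v^{-2}\Mcal_i(v;\mathbf{1}_{-i})\,dv\le 1$ is already saturated there without constraining any other bid profile; so the uniform prediction gives you no leverage beyond the $k=1$ versus $k\ge2$ tension, which is exactly why you get stuck at $2$.

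The paper's proof extracts the extra factor from a deliberately \emph{asymmetric} prediction with two bidders, $\bhv=(1,N)$, $N\to\infty$. Consistency forces the mechanism to offer bidder $1$ the deterministic price $1$ when $b_2=N$ and bidder $2$ the deterministic price $N$ when $b_1=1$; the integral constraint at the profile $b_1=1$ is then exhausted by the mass at $v_2\ge N$, so bidder $2$ contributes \emph{zero} revenue whenever $b_1=1$ and $v_2<N$. Robustness at $(1,v_2)$ therefore forces bidder $1$'s revenue to be at least $2/\alpha$ for all $v_2<N$, and by monotonicity for all $v_1\ge1$ as well --- i.e.\ a constant chunk of bidder $1$'s threshold mass is wasted below $\sqrt N$. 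The bound $3$ then comes not from a finite LP but from averaging both the revenue and the benchmark $\F$ against the equal-revenue distribution $g(\bv)=1/(v_1^2v_2^2)$ conditioned on $[\sqrt N,N]^2$ (with a Jensen/convexity bound for bidder $2$'s contribution), applying the probabilistic method, and letting $N\to\infty$. If you want to salvage your route, you would at minimum need to replace the uniform prediction by one with a very large coordinate so that consistency constrains the mechanism away from the prediction point, and then justify quantitatively the squeeze you describe; as written, the proposal establishes only a weaker bound.
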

The competitive ratio of the optimal mechanism for digital good auction is $2.42$, thus our lower bound shows that it is impossible to maintain the optimal competitive ratio when we require the perfect consistency of a mechanism. Since the digital good auction is a special case of other auction environments, the lower bound also applies to other auction environments considered.

\paragraph{Error-Tolerant Design} The original consistency-robustness analysis framework is too extreme that the consistency guarantee only works when the prediction is perfect, which makes the theoretical results not practical. To give a practical solution, we also propose \textit{Error-Tolerant Mechanisms} for digital good auctions and the auctions with downward-closed permutation environments respectively in \sect{error-tolerant}. The error-tolerant mechanism takes a confidence level $\gamma$ as the input parameter. Let $\eta = \max_{i\in [n]}\{\frac{\hat{v}_i}{v_i},\frac{v_i}{\hat{v}_i}\}$ denote the maximal relative prediction error, if $\gamma\geq \eta$, our mechanism is $O(\lceil\log \gamma\rceil)$-competitive against the strongest $OPT$ benchmark. Otherwise, our mechanism is $O(\lceil\log \gamma\rceil)$-competitive against $\mathcal{F}^{(2)}$ or $\EFOt{}$.



\subsection{Techniques and Challenges}\label{sect:techniques}
Notice that our problem is an objective-maximization problem, such a problem admits a trivial random combined mechanism to leverage the predictions. The combined mechanism is, with probability $p$ we provide service and charge payment according to the optimal revenue calculated from $\bhv$; with probability $1-p$, we run a competitive truthful mechanism. These two mechanisms are truthful so their random combination is also truthful. With a bit of abuse of notation, we also use $OPT(\bhv)$ to represent the optimal mechanism while assuming the prediction is perfect. As an example, in the digital good auctions, $OPT(\bhv)$ mechanism is just to offer the price $\hat{v}_i$ to each bidder $i$. Assume that $\mathcal{M}$ is an $\alpha$-competitive mechanism. The trivial random combination of $OPT(\bhv)$ and $\Mcal$ is $\frac{1}{p}$-consistent and $\frac{\alpha}{1-p}$-robust. In a word, for the objective-maximization problem, it is trivial to obtain $O(1)$-consistent and $O(\alpha)$-robust mechanism if the problem admits $\alpha$-competitive mechanism without prediction. However, such a trivial random combination is far from sufficient to achieve $1$-consistency, since we need $p=1$ to obtain perfect consistency and this implies an unbounded robustness ratio. 

Our idea comes from another kind of combination of $OPT(\bhv)$ and a black-box competitive mechanism $\Mcal$. At a glance at the learning-augmented competitive auction problem, one may come up with a naive combination to achieve perfect consistency. That is, to run the $OPT(\bhv)$ mechanism when $\bb=\bhv$, and to run $\Mcal$ when $\bb\neq \bhv$. However, the combined mechanism is not truthful since the bidder can alter the mechanism selected by misreporting his value. Let $\bb_{-i}$ the bid vector without $b_i$. To ensure truthfulness, an observation is that, if we have several truthful mechanisms and we decide which mechanism to apply to bidder $i$ only based on $\bb_{-i}$ and $\bhv$, then the combined mechanism will be truthful since the bidder cannot determine the mechanism applied to it. We call this type of combined mechanism a \emph{Bid-Independent Combination}, which is the core technical component of our proof.

Back to the naive combination, we can modify it by using the bid-independent combination approach. If $\bb_{-i} = \bhv_{-i}$, indicating that the prediction is correct for all bidders except $i$, we apply the $OPT(\bhv)$ mechanism to bidder $i$. If $\bb_{-i} \neq \bhv_{-i}$, indicating that there is at least one incorrect prediction for the bidders other than $i$, we apply $\mathcal{M}$ to bidder $i$. This combined mechanism is truthful by the Bid-Independent Combination trick.  If $\bhv =\bv$, then $\bhv_{-i}=\bv_{-i}, \forall i$, every bidder is applied with $OPT(\bhv)$ mechanism, so the mechanism actually reduces to $OPT(\bhv)$ mechanism and it gains optimal revenue in this situation. Thus, the mechanism is $1$-consistent.

Through the above Bid-Independent Combination trick, we achieve truthfulness and perfect consistency freely. However, compared with the random combination and the naive untruthful combination, the Bid-Independent Combination has a very subtle issue that it may be the case that different mechanisms are operating simultaneously on the different bidders. Thus the robustness and feasibility guarantee of the combined mechanism cannot be directly inherited from $OPT(\bhv)$ and $\Mcal$. This is the main challenge we need to overcome in our paper.

In digital good auctions, we only face the robustness problem since the bid-independent combination of arbitrary mechanisms is feasible in this case. We notice that the combined mechanism runs two mechanisms simultaneously only when the number of wrong predictions is exactly one. 
It is not clear if the bid-independent combination can obtain a constant robustness ratio or not. To save it, we employ another random combination outside the bid-independent combination. 
We find that the $OPT(\bhv)$ mechanism guarantees $2$-robustness of the robustness benchmark when the number of wrong predictions is one which is exactly the case that bid-independent combination fails. Here we crucially use the fact that  $\F$ (and actually all the competitive benchmarks considered) does not depend on the highest value. Thus, one wrong prediction will not ruin the mechanism significantly.  


For more complex constraints, we must deal with the robustness and the feasibility at the same time. 
In the general downward-closed permutation environment, the feasibility constraint becomes very challenging. The difficulty we meet is that the downward-closed feasibility requirement is so general that we have no idea how to satisfy it if the mechanism  $OPT(\bhv)$ and $\Mcal$ run simultaneously. So our idea is to incorporate a \emph{rejection mechanism} $\Mcal_{\emptyset}$ into bid-independent combination to separate $OPT(\bhv)$ and $\Mcal$. Here, the rejection mechanism means the mechanism that trivially rejects all bidders. Rigorously, we apply $OPT(\bhv)$ on bidder $i$ if $\bb_{-i}=\bhv_{-i}$; apply $\Mcal_{\emptyset}$ on bidder $i$ if there is exactly one wrong prediction in $\bhv_{-i}$; apply $\Mcal$ on bidder $i$ if there is at least two wrong predictions in $\bhv_{-i}$. We can show that it is the only possible case that mechanism $\Mcal_{\emptyset}$ and $OPT(\bhv)$ run simultaneously or mechanism $\Mcal_{\emptyset}$ and $\Mcal$ run simultaneously. On the other hand, $\Mcal_{\emptyset}$ can run concurrently with any mechanism in the downward-closed constraint. By inserting the mechanism $\Mcal_{\emptyset}$, we solve the feasibility issue. However, the bid-independent combination of these three mechanisms can only guarantee a robustness ratio when the total number of wrong predictions is at least $3$. So we are still left with the case that the number of wrong predictions is $1$ and $2$.  In the case of one wrong prediction, we can rely on the intuition that our chosen benchmark is not significantly affected by a single incorrect prediction. However, this intuition no longer holds when there are two wrong predictions, as altering two values can completely change the benchmarks. To address this situation, we employ the \emph{benchmark decomposition} technique proposed in \cite{chen2015competitive}. We decompose the benchmark into the sum of a new benchmark that remains robust even with two wrong predictions and a simple benchmark $2v_2$.  While $2v_2$ is not resilient to two wrong predictions, it is simple enough so that we can design a specific mechanism to handle it.

\subsection{Related Works}
\paragraph{Competitive Auctions} The study of digital good auctions was initiated by Goldberg et al.~\cite{goldberg2001competitive}, where they propose the random sampling optimal price auction. Fiat et al.~\cite{fiat2002} and Alaei et al.~\cite{alaei2009random} showed that the auction has a competitive ratio of $15$ and $4.68$ against the benchmark $\F$ respectively. Fiat et al. proposed the random sampling cost-sharing auction and proved that it achieves $4$-competitiveness against $\F$. There was a sequence of works on improving the competitive ratio, including \cite{goldberg2003competitiveness, hartline2005optimal,feige2005competitive,ichiba2010averaging}. Goldberg et al.~\cite{goldberg2004lower} proved the $2.42$ competitive lower bound of the digital good auction. Chen et al.~\cite{chen2014optimal} proved that the optimal competitive ratio is exactly $2.42$ and they also study the $\maxV$ benchmark and show the optimal competitive ratio is $e-1$. The online random order setting of the digital good auction was considered by Koutsoupias et al.~\cite{koutsoupias2013competitive} where the authors provided a reduction from the offline setting to the online setting.

The limited supply auction was studied with respect to the benchmark $\Fl$ by Goldberg et al.~\cite{goldberg2006competitive}, where the authors showed a straightforward reduction from digital good auctions to limited supply auctions. Thus, the competitive ratio results for the digital good auctions also held for the limited supply setting. Hartline and Yan~\cite{hartline2011envy} defined the economically significant benchmark $\EFOt{}$ and showed a constant competitive ratio against it in limited supply environments and general downward-closed permutation environments. The competitive ratio in the general downward-closed permutation environment was improved by several works \cite{devanur2015envy,ha2012biased,ha2013mechanism,devanur2013prior}. So far, the best competitive ratio against $EFO^{(2)}$ is $3.42$ in the limited supply environment and $6.5$ in the downward-closed permutation environment, which was obtained in \cite{chen2015competitive} via the benchmark decomposition technique. 

\paragraph{Learning-augmented Mechanisms} Xu and Lu~\cite{xu2022mechanism} revisited several mechanism design problems in the learning-augmented settings, including single-item revenue maximization, frugal path auction, truthful job scheduling, and two facility location on a line. The revenue maximization auction they studied is different from ours, not only because the constraint is restricted to single-item, but also because they compete with the $OPT$ benchmark thus the robustness ratio is not constant and depends on the scaling of the bids. Balkanski et al.~\cite{balkanski2023strategyproof} studied truthful job scheduling independently and attained the tight asymptotic consistency-robustness trade-off. Agrawal et al.~\cite{agrawal2022learning} designed the learning-augmented mechanism for the one-facility location problem in the two-dimensional Euclidean space. Istrate and Bonchis~\cite{istrate2022mechanism} studied the learning-augmented mechanism for the obnoxious facility location problem on several metric spaces. Gkatzelis et al.~\cite{gkatzelis2022improved} showed the ability of imperfect predictions to improve the Price of Anarchy in games. Balkanski et al.~\cite{balkanski2023online} studied the learning-augmented online mechanism. A very recent and relevant work by  Balcan et al.~\cite{balcan2023bicriteria} investigated the general multi-dimensional mechanism design problem with side information (predictions). The authors proposed a mechanism that achieves an $O(\log(H))$-consistency ratio, where $H$ represents an upper bound on any bidder's value for any allocation. To make a comparison between our paper and the one of Balcan et al.~\cite{balcan2023bicriteria}, they considered more general auction environments and the maximization problem of both social welfare and revenue simultaneously, our paper only focuses on the single-parameter auction environment and revenue maximization. When applying their mechanism to our problem, it achieves $O(\log(H))$-consistency and $O(\log(H))$-robustness, and our mechanism achieves $1$-consistency and $O(1)$-robustness. Moreover, their robustness benchmark is the revenue obtained by the VCG mechanism, which is a far weaker benchmark in the problems considered in our paper. For example, the revenue of the VCG mechanism applying on any digital good auction is $0$, making this benchmark meaningless here.



\section{Preliminary}\label{sect:preliminary}
\subsection{Auction Environments}\label{sect:auction environment}
\paragraph{Notations and the Indices} We use $\boldsymbol{v}:=(v_1,v_2,\ldots,v_n)$ and $\hat{\boldsymbol{v}}:=(\hat{v}_1,\hat{v}_2,\ldots,\hat{v}_n)$ to denote the private value vector and predicted value vector respectively. We arrange the indices such that $v_1\geq v_2\geq \cdots \geq v_n$, and we define $\sigma:[n]\rightarrow [n]$ to be the order of predicted values, such that $\hat{v}_{\sigma(i)}\geq \hat{v}_{\sigma(i+1)}$ for all $i$. When there are bidders with the same private value or predicted value, we break ties to ensure consistency between the two orders. That is, if $v_i=v_j, \hat{v}_i=\hat{v}_j$, and $i<j$, then we set $\sigma(i)<\sigma(j)$, and vice versa.

In addition, we allow for random mechanisms since deterministic mechanisms have been shown to not have a non-trivial competitive ratio against some meaningful benchmarks. A random mechanism is said to be truthful if it is given by a distribution of truthful deterministic mechanisms.

In the main text, we consider only the following two auction environments: \textit{digital good auctions} and \textit{general downward-closed permutation environments}. 

\paragraph{Digital good auction} The auctioneer can offer an unlimited number of services and any combination of bidders can be served simultaneously~(e.g., digital goods).

\paragraph{$\ell$-limited supply auction} At most $\ell$ bidders can be served simultaneously. For example, the auctioneer has $\ell$ identical items to sell, and each bidder needs at most one item. 

\paragraph{General downward-closed permutation environments} The feasibility constraint of the allocation is a general \emph{symmetric} \textit{downward-closed} set $\X\subseteq [0,1]^n$. A set $\X$ is said to be downward-closed if for every $\boldsymbol{y}\leq \boldsymbol{x}$ and $\boldsymbol{x}$ feasible, we have that $\boldsymbol{y}$ is also feasible. We say that $\X$ is symmetric if, for any $\boldsymbol{x}\in \X$, the entries of $\boldsymbol{x}$ can be arbitrarily permuted without affecting its feasibility. We also assume that $\X$ is convex since we consider randomized mechanisms and the random combination of two feasible allocations is also feasible. We focus on the symmetric constraint mainly because we can only define the envy-free optimal revenue benchmark on a symmetric constraint. Notice this auction environment includes the digital good auction and $\ell$-limited supply auction, where $\mathcal{X}=\{\boldsymbol{x}\mid \sum_{i=1}^n x_i\leq \ell, x_i\in [0,1], \forall i\in [n]\}$, as its special case. But the environment is strictly more general than $\ell$-limited supply auctions. In fact, given any downward-closed subset system $\mathcal{I}\subseteq 2^{[n]}$ (which may not be symmetric), we can define its symmetric version as follows. Let $\mathbb{S}_n$ be the symmetric group of $[n]$, $\mathtt{Conv}(\mathcal{I})\subseteq [0,1]^n$ be the convex hull of $\mathcal{I}$. For $\delta\in \mathbb{S}_n$, let $\mathcal{I}_{\delta}$ be the subset system obtained by permuting the elements in $\mathcal{I}$ with $\delta$. Then $\mathcal{X}:= \sum_{\delta\in\mathbb{S}_n}\frac{1}{n!}\cdot\mathtt{Conv}(\mathcal{I}_{\delta})$ is a symmetric downward-closed set. Here the sum between sets is the Minkowski Sum. This is usually called a \emph{permutation environment} \cite{hartline2011envy}.

\paragraph{Online auctions} Bidders arrive online in random order. Upon the arrival of each bidder $i$, the prediction $\hat{v}_i$ is revealed to the auctioneer, then the auctioneer posts an irrevocable price for the services based on the bids and predictions of previously arrived bidders. There is no feasibility constraint, which is the same as the digital good auctions.

\subsection{Benchmarks}


\paragraph{Consistency Benchmark} Throughout the paper, we employ the optimal revenue $OPT_{\X}(\bv) = \max_{\boldsymbol{x}\in \X}\sum_{i=1}^n x_i\cdot v_i$
as the benchmark for consistency ratio, where $\X$ represents the feasibility constraint and varies across different auction environments. Specifically, in digital good auctions, $OPT_{\X}(\bv) = \sum_{i\in [n]}v_i$. With clear context, we may omit the subscript $\X$ and just write $OPT(\bv)$.

\paragraph{Robustness Benchmark} Regarding the benchmark of robustness ratio, we investigate several candidates who are widely considered in the competitive analysis. The first is 
\begin{align}
    \F(\bv) = \max_{k\geq 2} k \cdot v_k,
\end{align}
where $v_k$ is reordered as we have mentioned before. $\F$ is the largest revenue obtained by setting a fixed price for the service, with at least two bidders can afford the price. $\F$ benchmark is only meaningful in digital good auctions and online auctions which have no feasibility constraint.
One may feel confused about the additional assumption that at least two bidders get served. This is because any truthful mechanism has an arbitrarily bad competitive ratio if $v_1\gg v_2$, which makes the benchmark uninteresting without this assumption. Another common-used benchmark for digital good auctions and online auctions is $\maxV(\bv) = \max_{k<n} k\cdot v_{k+1}$, which is the largest revenue obtained by a multi-unit Vickrey auction.

For the more general downward-closed permutation environment, Hartline and Yan~\cite{hartline2011envy} introduced a new benchmark named $EFO$, which denotes the maximum revenue that can be obtained through an envy-free allocation. Similarly to the consideration of $\F$, people actually consider the competitive ratio with respect to $\EFOt{}(\bv) = EFO(\bv^{(2)})$ where $\bv^{(2)}:= (v_2,v_2,v_3,\ldots,v_n)$ is the vector obtained by replacing $v_1$ with $v_2$. $\EFOt{}$ is defined on symmetric constraint set $\X\subseteq [0,1]^n$, we use the notation $\EFOt{\X}$ to specify the constraint. Given a monotone allocation $\boldsymbol{x}:=(x_1,x_2,\ldots,x_n)$ where the monotone allocation means $x_i\geq x_{i+1},\forall i$, its envy-free revenue is defined as
$
    EF^{\boldsymbol{x}}(\boldsymbol{v}) = \sum_{i\in [n]}\sum_{j\geq i}^n v_j(x_j-x_{j+1}).
$
Then $EFO_{\X}(\bv)$ is the largest possible envy-free revenue while $\boldsymbol{x}\in \X$. Hartline and Yan~\cite{hartline2011envy} also give a characterization of envy-free payment through \textit{virtual value} so that we can better understand the $EFO$ benchmark.
\begin{definition}[Virtual value]\label{lem:virtual revenue}
    Given a vector $\bv$, we assume that the entries of $\bv$ have been sorted in descending order. We can calculate the \textit{virtual value} $\phi_i$ of each bidder corresponding to $\boldsymbol{v}$ as follows. First, we calculate the smallest non-decreasing concave function $R$ satisfying $R(j) \geq j\cdot v_j, \forall j \in [n]$ and $R(0)=0$. The formal definition is
    \begin{align}
       R(j) = \max_{1\leq l\leq j}\max_{\genfrac{}{}{0pt}{}{i,k:}{ 1\leq i\leq l\leq k\leq n}} \left[i\cdot v_i \frac{k-l}{k-i} + k\cdot v_k\frac{l-i}{k-i}\right]. 
    \end{align}
    Then the virtual value is,
    \begin{align}
        \phi_i = R(i)-R(i-1).
    \end{align}
\end{definition}

\begin{figure}
    \centering
    \includegraphics[width=0.5\textwidth]{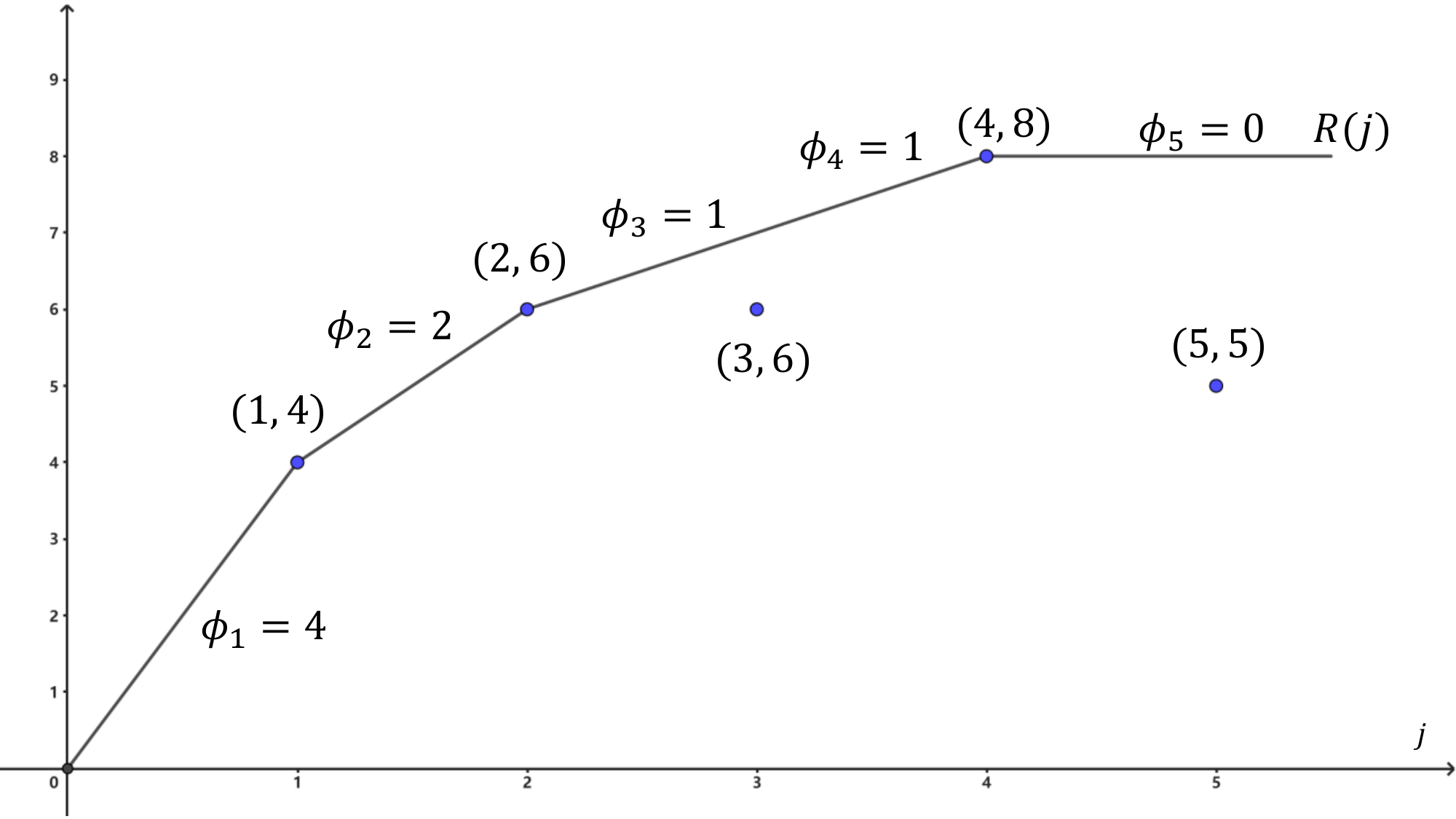}
    \caption{An example of virtual values with bidder vector $\bv=(4,3,2,2,1)$. The points on the graph are $(j,j v_j)$ and the line represents $R(j)$.}
    \label{fig:virtual value}
\end{figure}

For ease of understanding, we show an example of the virtual values with $\bv=(4,3,2,2,1)$ in Fig. \ref{fig:virtual value}.

\begin{lemma}[Restatement of Theorem 2.4 in \cite{hartline2011envy}]\label{lem:characterization of envy-free payment}
    Let $\X \subseteq [0,1]^n$ be a symmetric set and $\phi$ is the virtual value of $\bv$, then the envy-free optimal revenue over $\mathcal{X}$ is
    \begin{align}
        EFO_{\mathcal{X}}(\boldsymbol{v}) = \max_{\boldsymbol{x}\in\mathcal{X}}\sum_{i = 1}^n \phi_i\cdot x_i .
    \end{align}
\end{lemma}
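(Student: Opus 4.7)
The plan is to prove the lemma by combining Abel summation with a concavification argument, together with a symmetry-convexity step that realizes the linear-programming optimum as an envy-free allocation. Since $R$ is concave with $R(0)=0$, the virtual values $\phi_j=R(j)-R(j-1)$ are non-increasing in $j$. By the symmetry of $\X$ (and the rearrangement inequality), both the envy-free revenue and $\sum_j \phi_j x_j$ are maximized at a monotone allocation $x_1\geq x_2\geq\cdots\geq x_n$, so I may restrict attention to such $\bx$ and set $x_{n+1}:=0$. Swapping the order of summation in the definition of $EF^{\bx}(\bv)$ and applying Abel's identity gives the two identities
\begin{align*}
EF^{\bx}(\bv) \;=\; \sum_{j=1}^n (j\,v_j)(x_j-x_{j+1}), \qquad \sum_{j=1}^n \phi_j x_j \;=\; \sum_{j=1}^n R(j)(x_j-x_{j+1}),
\end{align*}
where the second uses $R(0)=0$ and telescoping.

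For the upper bound $EFO_{\X}(\bv)\le \max_{\bx\in\X}\sum_j \phi_j x_j$, I use that $R(j)\geq j\,v_j$ pointwise by construction and $x_j-x_{j+1}\geq 0$ by monotonicity; summing the termwise inequality gives $EF^{\bx}(\bv)\leq \sum_j \phi_j x_j$ for every monotone $\bx\in\X$, and taking the maximum yields the desired bound.

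For the matching lower bound I need to exhibit a single $\bx^*\in\X$ for which the upper bound is tight. Let $\bx^*$ be a monotone maximizer of $\sum_j \phi_j x_j$ over $\X$, and call a maximal interval $[a,b]$ an \emph{ironing interval} if on $(a,b)$ the function $R$ is strictly linear and strictly larger than $j\mapsto j v_j$ (the endpoints being the touching points $R(a)=a v_a$, $R(b)=b v_b$). On such an interval the virtual values $\phi_a,\ldots,\phi_b$ are all equal, so the objective $\sum_j \phi_j x^*_j$ is invariant under permutations of the coordinates $(x^*_a,\ldots,x^*_b)$; averaging over these permutations and invoking symmetry together with convexity of $\X$ produces a new maximizer that is \emph{constant} on every ironing interval while still lying in $\X$. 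After this ironing, the only indices $j$ with $x^*_j>x^*_{j+1}$ are touching points where $R(j)=j v_j$, so the pointwise inequality used for the upper bound becomes an equality and
\begin{align*}
EF^{\bx^*}(\bv) \;=\; \sum_{j=1}^n (j v_j)(x^*_j-x^*_{j+1}) \;=\; \sum_{j=1}^n R(j)(x^*_j-x^*_{j+1}) \;=\; \sum_{j=1}^n \phi_j x^*_j.
\end{align*}
Combined with the upper bound this gives the equality claimed in the lemma.

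The main obstacle is the ironing step in the third paragraph: one has to justify that averaging/permuting within an ironing interval produces an allocation that still lies in $\X$ (this is exactly where symmetry and convexity of $\X$ are both needed) while also preserving monotonicity across the interval boundaries, so that the envy-free payment formula of the lemma statement remains applicable. Once this ironing is in place, the rest is Abel summation and the pointwise bound $R(j)\geq j v_j$.
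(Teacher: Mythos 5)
The paper does not prove this lemma at all --- it is quoted as a restatement of Theorem~2.4 of Hartline and Yan --- so your proposal can only be compared with the standard argument from that source, which it follows in outline: Abel summation, the pointwise bound $R(j)\geq j\,v_j$ for the upper bound, and an ironing/averaging step (using symmetry and convexity of $\X$) for the lower bound. The upper-bound half and the two Abel identities are correct, and the worry you flag about monotonicity across ironing boundaries is harmless, since the within-interval average lies between the interval's extreme entries.

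There is, however, a genuine gap in the lower bound, in the region where the \emph{non-decreasing} constraint on $R$ binds rather than concavity. You define an ironing interval as a maximal interval on which $R$ is linear and strictly above $j\mapsto j\,v_j$, with touching points at \emph{both} ends. But $R$ is the smallest non-decreasing concave majorant, so whenever $j\,v_j$ is not maximized at $j=n$, $R$ is constant on a terminal segment $\{a^{\ast},\dots,n\}$ with $R(j)>j\,v_j$ for $j>a^{\ast}$ and no right touching point. If the monotone maximizer $\boldsymbol{x}^{\ast}$ puts positive mass there, your averaging makes it a constant $c>0$ on that segment, and then the boundary term $j=n$ contributes $(n\,v_n)\,c<R(n)\,c$, so the termwise equality your final display relies on fails. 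The fix is to observe that $\phi_j=0$ for $j>a^{\ast}$ and to replace those coordinates by $0$, which preserves the objective and monotonicity --- but that step needs $\X$ to be downward-closed, an assumption you never invoke and which is genuinely necessary: for $\X=\{(1,1)\}$ (symmetric and convex) and $\boldsymbol{v}=(1,0.1)$ one has $EFO_{\X}(\boldsymbol{v})=0.2$ while $\max_{\boldsymbol{x}\in\X}\sum_i \phi_i x_i=R(2)=1$. In the paper's downward-closed permutation environments this is a one-sentence patch, but as written your argument does not cover the flat-tail case. (A cosmetic point: on an ironing interval $[a,b]$ the equal virtual values are $\phi_{a+1},\dots,\phi_b$, not $\phi_a,\dots,\phi_b$, since $\phi_a$ is the slope of the preceding piece.)
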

Hence, $EFO_{\X}(\boldsymbol{v})$ can be seen as the optimal solution to a linear maximization problem over $\X$ with virtual values as linear weights. So far we always assume that $\bv$ is in descending order, but this is only for simplicity. If a value vector is not descending ordered, its envy-free optimal value is calculated by first ordering it and then using \lem{characterization of envy-free payment}. In other words, $EFO_{\X}(\bv)$ is a symmetric function and the identification of the bidder doesn't influence the benchmark.
\subsection{Benchmark Decomposition Lemma}
Another tool apart from the bid-independent combination trick is the following benchmark decomposition lemma. With this lemma, we decompose the benchmark into parts that correspond to different scenarios, then we can handle them separately.
\begin{lemma}[Benchmark Decomposition with Prediction]\label{lem:benchmark decomposition}
    Let $\mathcal{M}_1$ and $\Mcal_{2}$ be truthful mechanism and both with $1$-consistency ratio against $OPT$ benchmark. If $\Mcal_1$ is $\alpha$-robust against $f_1$ benchmark and $\Mcal_2$ is $\beta$-robust against $f_2$ benchmark.
    Then there is a truthful mechanism, which is $1$-consistent against $OPT$ and $(\alpha+\beta)$-robust against $f_1+f_2$ benchmark.
\end{lemma}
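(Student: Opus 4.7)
The plan is to take a \emph{randomized combination} of $\Mcal_1$ and $\Mcal_2$ with carefully chosen mixing weights. Specifically, I would define the combined mechanism $\Mcal$ that runs $\Mcal_1$ with probability $p := \alpha/(\alpha+\beta)$ and runs $\Mcal_2$ with probability $1-p = \beta/(\alpha+\beta)$. Since a random combination of truthful mechanisms is truthful, $\Mcal$ is truthful automatically. This handles the incentive aspect with no work, which is why I expect the main content of the lemma to be purely the revenue calculation rather than any mechanism-design subtlety.

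Next I would verify the two ratios by direct computation on the expected revenue $\Mcal^{\bhv}(\bv) = p\,\Mcal_1^{\bhv}(\bv) + (1-p)\,\Mcal_2^{\bhv}(\bv)$. For consistency, plug in $\bv = \bhv$: both mechanisms satisfy $\Mcal_i^{\bhv}(\bhv) \geq OPT(\bhv)$ by the $1$-consistency hypothesis, so the convex combination also dominates $OPT(\bhv)$, giving the $1$-consistency of $\Mcal$. For robustness against $f_1 + f_2$, use the hypotheses $\Mcal_1^{\bhv}(\bv) \geq f_1(\bv)/\alpha$ and $\Mcal_2^{\bhv}(\bv) \geq f_2(\bv)/\beta$ to obtain
\begin{align*}
\Mcal^{\bhv}(\bv) \;\geq\; \frac{p}{\alpha}\,f_1(\bv) + \frac{1-p}{\beta}\,f_2(\bv) \;=\; \frac{f_1(\bv)+f_2(\bv)}{\alpha+\beta},
\end{align*}
where the final equality is exactly the reason for choosing $p = \alpha/(\alpha+\beta)$: this choice balances the two coefficients so that both $p/\alpha$ and $(1-p)/\beta$ equal $1/(\alpha+\beta)$.

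There is essentially no obstacle here beyond choosing the mixing probability correctly; a naive $1/2$--$1/2$ mixture would give a bound of the form $f_1/(2\alpha) + f_2/(2\beta)$, which is not comparable to $(f_1+f_2)/(\alpha+\beta)$ in general, so the weighted choice is necessary. The proof is a short, self-contained convex-combination argument and does not need any property of $\Mcal_1, \Mcal_2$ beyond truthfulness and the stated competitive guarantees, which is what makes the lemma a convenient black-box tool for the later sections where $f_1$ and $f_2$ come from a benchmark decomposition.
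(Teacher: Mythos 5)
Your proposal is correct and uses exactly the same argument as the paper: a randomized combination running $\Mcal_1$ with probability $\alpha/(\alpha+\beta)$ and $\Mcal_2$ with probability $\beta/(\alpha+\beta)$, with truthfulness inherited trivially and the two ratios verified by the same direct expectation computation. The only (welcome) addition beyond the paper's proof is your remark explaining why the weights must be $\alpha/(\alpha+\beta)$ rather than $1/2$.
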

\begin{proof}
    Consider the mechanism which runs $\mathcal{M}_1$ with probability $\frac{\alpha}{\alpha+\beta}$ and runs $\mathcal{M}_2$ with probability $\frac{\beta}{\alpha+\beta}$. Let $\mathcal{M}_1(\boldsymbol{v})$ and $\mathcal{M}_2(\boldsymbol{v})$ denote the total revenue generated by running $\mathcal{M}_1$ and $\mathcal{M}_2$ on instance $\boldsymbol{v}$, respectively.
    
    If the prediction is perfect, then $\mathcal{M}_1(\boldsymbol{v})=\mathcal{M}_2(\boldsymbol{v})=OPT(\boldsymbol{v})$ by the consistency ratio of them. So the revenue of the combined mechanism is $OPT(\boldsymbol{v})$. Thus the combined mechanism is $1$-consistent.
    
    If the prediction is not perfect, then $\mathcal{M}_1(\boldsymbol{v})\geq \frac{f_1(\boldsymbol{v})}{\alpha}$, $\mathcal{M}_2(\boldsymbol{v})\geq \frac{f_2(\boldsymbol{v})}{\beta}$. So the total revenue of the combined mechanism is $\frac{\alpha}{\alpha+\beta}\frac{f_1(\boldsymbol{v})}{\alpha} + \frac{\beta}{\alpha+\beta}\frac{f_2(\boldsymbol{v})}{\beta}=\frac{f_1(\boldsymbol{v})+f_2(\boldsymbol{v})}{\alpha+\beta}$ which means that the mechanism is $(\alpha+\beta)$-robust against $f_1+f_2$ benchmark.
\end{proof}
Chen et al.~\cite{chen2015competitive} first proposes the benchmark decomposition lemma in the competitive analysis framework, \lem{benchmark decomposition} can be seen as a corresponding lemma in the consistency-robustness analysis framework.

\section{Digital Good Auction}\label{sect:digital good auction}
In this section, we will discuss digital good auctions against $\F$ and $\maxV$ benchmarks. Both $\F$ and $\maxV$ enjoy a property that is key to the analysis of the robustness ratio, they are both \textit{dominated} by $OPT^{(2)}(\bv) = OPT((v_2,v_2,v_3,\ldots,v_n))$.
\begin{definition}[Benchmark domination]
 For two benchmark $f_1$ and $f_2$ over $\mathcal{V} := (\mathbb{R}^+)^n$, we say $f_1$ dominates $f_2$ if $f_1(\boldsymbol{v})\geq f_2(\boldsymbol{v}), \forall \boldsymbol{v}\in \mathcal{V}$.
\end{definition}
Our result applies to all benchmarks dominated by $OPT^{(2)}$, specifically, we have the following theorem.

\begin{theorem}\label{thm:digital good auction}
    If benchmark $f$ is dominated by $OPT^{(2)}$ and there is an $\alpha$-competitive truthful mechanism against $f$, then there is a truthful mechanism, which is $1$-consistent against $OPT$ and $(\alpha+2)$-robust against $f$.
\end{theorem}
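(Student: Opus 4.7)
The plan is to construct the required mechanism as a randomized mixture of two truthful sub-mechanisms, exploiting the hypothesis $f\leq OPT^{(2)}$ to cover the ``one wrong prediction'' regime in which a naive bid-independent combination fails.

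First I would introduce the posted-price mechanism $\Mcal_2$ that offers each bidder $i$ the take-it-or-leave-it price $\hat{v}_i$ irrespective of the bid; truthfulness is immediate. A case analysis on the number of mispredictions shows the following. If all predictions are correct, $\Mcal_2$ collects $OPT(\bv)$. If exactly one prediction is wrong, say at bidder $k$, $\Mcal_2$ collects at least $\sum_{i\neq k}v_i$, and a short comparison against $OPT^{(2)}(\bv)=2v_2+v_3+\cdots+v_n$ (splitting on $k=1$, $k=2$, and $k\geq 3$, using $v_1\geq v_2\geq\cdots\geq v_n$) gives at least $f(\bv)/2$ in every sub-case. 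For two or more wrong predictions no bound is claimed.

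Next I would define the bid-independent combination $\Mcal_1$: for each bidder $i$, if $\bb_{-i}=\bhv_{-i}$ assign $i$ the $OPT(\bhv)$ outcome (i.e., the take-it-or-leave-it price $\hat{v}_i$); otherwise assign $i$ the outcome it would receive under the black-box $\alpha$-competitive mechanism $\Mcal$ on $\bb$. Since the rule that picks the sub-mechanism applied to $i$ depends only on $\bb_{-i}$ and $\bhv$, individual-bidder truthfulness is preserved, and feasibility is automatic in the digital-good setting because allocations are unrestricted. When all predictions are correct, $\bb_{-i}=\bhv_{-i}$ for every $i$ and $\Mcal_1$ reduces to $OPT(\bhv)$, collecting $OPT(\bv)$. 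When at least two predictions are wrong, every bidder $i$ has a misprediction somewhere in $\bb_{-i}$, so every bidder is served by $\Mcal$, making the total revenue exactly $\Mcal(\bb)\geq f(\bv)/\alpha$.

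Finally I would take the convex combination that runs $\Mcal_1$ with probability $p$ and $\Mcal_2$ with probability $1-p$. Both sub-mechanisms collect $OPT(\bv)$ on correct predictions, so the mixture is $1$-consistent. For robustness, the regime with at least two wrong predictions is handled by $\Mcal_1$ (contributing at least $p\cdot f(\bv)/\alpha$) and the regime with exactly one wrong prediction is handled by $\Mcal_2$ (contributing at least $(1-p)\cdot f(\bv)/2$); balancing $\alpha/p=2/(1-p)$ yields $p=\alpha/(\alpha+2)$ and the claimed robust ratio $\alpha+2$. The hard part is exactly the one-wrong-prediction case of $\Mcal_1$, where the unique misprediction isolates one bidder on the $OPT(\bhv)$ side while the remaining bidders receive outcomes from $\Mcal$ restricted to a subset, a partial revenue that admits no generic lower bound; the plan side-steps this by letting $\Mcal_2$ alone cover this case, which is exactly where the hypothesis $f\leq OPT^{(2)}$ is essential.
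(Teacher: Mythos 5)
Your proposal is correct and follows essentially the same route as the paper: the same bid-independent combination of $OPT(\bhv)$ with the black-box $\Mcal$ to cover $\#(\bv\neq\bhv)\geq 2$, the same pure posted-price $OPT(\bhv)$ mechanism (whose revenue $\sum_{i\neq k}v_i\geq\sum_{i\geq 2}v_i\geq\frac{1}{2}OPT^{(2)}(\bv)\geq\frac{1}{2}f(\bv)$) to cover $\#(\bv\neq\bhv)\leq 1$, and the same $\frac{\alpha}{\alpha+2}$/$\frac{2}{\alpha+2}$ mixture. The only cosmetic difference is that you invoke a case split on $k\in\{1,2,\geq 3\}$ where the paper uses the one-line inequality $\sum_{i\neq k}v_i\geq\sum_{i\geq 2}v_i$.
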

Let $\#(\boldsymbol{v}\neq \hat{\boldsymbol{v}})$ denote the number of wrong predictions. To prove the theorem, we decompose 
\begin{align}
    f(\boldsymbol{v}) = f(\boldsymbol{v})\cdot \mathbb{I}[\#(\boldsymbol{v}\neq \hat{\boldsymbol{v}})\leq 1]+f(\boldsymbol{v})\cdot \mathbb{I}[\#(\boldsymbol{v}\neq \hat{\boldsymbol{v}})\geq 2].
\end{align}
We design $1$-consistent mechanisms for the above two benchmarks respectively. Then we use \lem{benchmark decomposition} to obtain a mechanism for $f(\boldsymbol{v})$ benchmark. The mechanism for $f(\bv)\cdot \mathbb{I}[\#(\boldsymbol{v}\neq \hat{\boldsymbol{v}})\geq 2]$ is obtained via the bid-independent combination trick.

\begin{mechanism}\label{mech:dga 2}
    We first run the black-box mechanism $\mathcal{M}$ and calculate the allocation rule and payment rule of every bidder. For each bidder $i$, if $\boldsymbol{v}_{-i} = \hat{\boldsymbol{v}}_{-i}$, then offer price $\hat{v}_i$ to bidder $i$; if $\boldsymbol{v}_{-i} \neq \hat{\boldsymbol{v}}_{-i}$, we apply the black-box mechanism $\mathcal{M}$ on bidder $i$.
\end{mechanism}
\mech{dga 2} is a bid-independent combination of $OPT(\bhv)$ and $\mathcal{M}$, thus it is $1$-consistent and truthful. Moreover, it maintains the robustness of $\mathcal{M}$ when against the benchmark $f(\bv)\cdot \mathbb{I}[\#(\boldsymbol{v}\neq \hat{\boldsymbol{v}})\geq 2]$.
\begin{lemma}\label{lem:dga 2}
    If $\Mcal$ is $\alpha$-competitive against $f$, then \mech{dga 2} is truthful, and it is $1$-consistent against $OPT(\bv)$ and $\alpha$-robust against $f(\bv)\cdot \mathbb{I}[\#(\boldsymbol{v}\neq \hat{\boldsymbol{v}})\geq 2]$.
\end{lemma}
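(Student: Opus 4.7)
The plan is to verify the three claims of the lemma in sequence: truthfulness, consistency, then robustness. The mechanism applied to bidder $i$ depends only on $\bhv$ and $\bv_{-i}$, not on $b_i$ itself, so the bid-independent combination trick described in \sect{techniques} immediately yields truthfulness, provided both component mechanisms are truthful. The fixed-price offer of $\hat{v}_i$ to bidder $i$ is trivially truthful (a take-it-or-leave-it price independent of the bid), and $\Mcal$ is truthful by hypothesis, so I would quickly dispense with this step.

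For $1$-consistency, I would simply observe that when $\bv = \bhv$ we have $\bv_{-i} = \bhv_{-i}$ for every $i$, so every bidder is offered the price $\hat{v}_i = v_i$. Each bidder accepts (truthfully bidding $v_i$), and the total collected is $\sum_{i} v_i = OPT(\bv)$, proving the consistency bound.

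The interesting step is the robustness claim. I would split on the number of wrong predictions. If $\#(\bv \neq \bhv) \leq 1$, then the indicator vanishes and the benchmark is $0$, so the bound holds vacuously. If $\#(\bv \neq \bhv) \geq 2$, then for every bidder $i$ the subvector $\bv_{-i}$ still contains at least one coordinate where $v_j \neq \hat{v}_j$ (because removing a single index can eliminate at most one mismatch), so $\bv_{-i} \neq \bhv_{-i}$ for all $i$. This is the structural observation that makes the mechanism clean: \emph{every} bidder is processed by $\Mcal$ simultaneously, so the combined mechanism's allocation and payments coincide with those of $\Mcal$ run on $\bv$. Consequently the revenue equals $\Mcal(\bv) \geq f(\bv)/\alpha = f(\bv) \cdot \mathbb{I}[\#(\bv \neq \bhv) \geq 2] / \alpha$.

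The main (mild) obstacle is making sure the ``all bidders get processed by the same mechanism'' step is rigorous; in particular, one must argue that in this case the allocation output really is a valid allocation of $\Mcal$ on the true value vector $\bv$ (as opposed to some mixture that could violate feasibility). For digital-good auctions this is automatic since there is no feasibility constraint, and indeed the paper signals this by saying ``bid-independent combination of arbitrary mechanisms is feasible in this case.'' So no extra feasibility bookkeeping is needed, and the three bullets above constitute a complete proof.
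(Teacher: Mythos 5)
Your proposal is correct and follows essentially the same argument as the paper: truthfulness and $1$-consistency come from the bid-independent combination (every bidder sees $\hat{v}_i=v_i$ when $\bhv=\bv$), and the robustness step hinges on exactly the paper's observation that $\#(\bv\neq\bhv)\geq 2$ forces $\bv_{-i}\neq\bhv_{-i}$ for every $i$, so the mechanism degenerates to $\Mcal$ and inherits its $\alpha$-competitiveness. No gaps.
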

\begin{proof}
    We only prove the robustness ratio. When $\#(\boldsymbol{v}\neq \hat{\boldsymbol{v}})\geq 2$, for any bidder $i$, there is at least one incorrect prediction other than $i$. Therefore \mech{dga 2} runs $\Mcal$ for every bidder, thus its revenue is at least $\frac{1}{\alpha}f(\bv)$ by the competitive ratio of $\Mcal$. Then \mech{dga 2} is $\alpha$-robust against $f(\bv) \cdot \mathbb{I}[\#(\boldsymbol{v}\neq \hat{\boldsymbol{v}})\geq 2]$.
\end{proof}

We then use the $OPT(\bhv)$ mechanism to deal with the robustness in the case of one wrong prediction value. 
\begin{mechanism}\label{mech:dga 1}
    Use $OPT(\bhv)$ mechanism, that is, offer price $\hat{v}_i$ to bidder $i$.
\end{mechanism}

\begin{lemma}\label{lem:dga 1}
    \mech{dga 1} is truthful and it is $1$-consistent against $OPT(\bv)$ and $2$-robust against $f(\bv)\cdot \mathbb{I}[\#(\boldsymbol{v}\neq \hat{\boldsymbol{v}})\leq 1]$.
\end{lemma}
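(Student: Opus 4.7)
The plan is to verify the three claims (truthfulness, $1$-consistency, and $2$-robustness on the conditional benchmark) in turn, with nearly all the work concentrating on the last one. Truthfulness is essentially free: \mech{dga 1} offers each bidder a posted price $\hat{v}_i$ that depends only on the prediction, not on the bid, so it is truthful by the standard posted-price argument. For $1$-consistency, note that when $\bv = \bhv$ every bidder is offered exactly its true value $v_i$, and under the usual tie-breaking rule accepts, so the revenue equals $\sum_i v_i = OPT(\bv)$.

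The heart of the proof is the $2$-robustness bound against $f(\bv)\cdot \mathbb{I}[\#(\bv\neq \bhv)\leq 1]$. I will split on the value of $\#(\bv\neq \bhv)$:
\begin{itemize}
\item If $\#(\bv\neq\bhv)=0$, the revenue is $OPT(\bv)$, which dominates $OPT^{(2)}(\bv)$ (since $OPT - OPT^{(2)} = v_1 - v_2 \geq 0$), which in turn dominates $f(\bv)$ by hypothesis.
\item If $\#(\bv\neq\bhv)\geq 2$, the conditional benchmark evaluates to $0$ and nothing is to be shown.
\item The interesting case is $\#(\bv\neq\bhv) = 1$; let $i^{*}$ be the unique index where $\hat{v}_{i^{*}}\neq v_{i^{*}}$.
\end{itemize}

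In the one-error case, every bidder $j\neq i^{*}$ is offered its true value $v_j$, accepts, and contributes $v_j$ to the revenue. Hence the revenue is at least $\sum_{j\neq i^{*}} v_j$, which is minimized when $i^{*}=1$ and thus is at least $\sum_{j=2}^{n} v_j$. The crucial inequality is then
\begin{align}
OPT^{(2)}(\bv) \;=\; v_2 + \sum_{j=2}^{n} v_j \;\leq\; 2\sum_{j=2}^{n} v_j,
\end{align}
which follows because $v_2$ is itself one of the terms in the sum on the right. Combined with the hypothesis $f \leq OPT^{(2)}$, this yields revenue $\geq \frac{1}{2}OPT^{(2)}(\bv) \geq \frac{1}{2}f(\bv)$, finishing the case.

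I do not anticipate any real obstacle here: the only place where one must be a little careful is the worst-case choice of $i^{*}$ in the one-error case, i.e.\ verifying that dropping $v_{i^{*}}$ from the sum of true values is no worse than dropping the largest value $v_1$. Once this is noted, the bound $\sum_{j\geq 2}v_j \geq \tfrac{1}{2}OPT^{(2)}(\bv)$ is a one-line computation, and the appeal to $f \leq OPT^{(2)}$ is immediate from the hypothesis of \thm{digital good auction}.
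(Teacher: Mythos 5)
Your proof is correct and follows essentially the same route as the paper's: truthfulness and consistency are immediate from the bid-independent posted price, and the one-wrong-prediction case is handled by the same chain $\text{revenue} \geq \sum_{j\geq 2} v_j \geq \tfrac{1}{2}OPT^{(2)}(\bv) \geq \tfrac{1}{2}f(\bv)$. The only difference is that you spell out the trivial cases $\#(\bv\neq\bhv)=0$ and $\geq 2$, which the paper leaves implicit.
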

\begin{proof}
    The truthfulness and consistency are obvious. We only prove the robustness ratio. If $\#(\boldsymbol{v}\neq \hat{\boldsymbol{v}})=1$, let $j$ be the bidder with the wrong prediction. Since the predictions for bidders other than $j$ are correct, they will accept the price. Thus the revenue is $\sum_{i\neq j} v_i\geq \sum_{i\geq 2} v_i=OPT^{(2)}(\boldsymbol{v})-v_2$. Since $v_2<\sum_{i\geq 2} v_i$, we have $OPT^{(2)}(\bv)=v_2+\sum_{i\geq 2}v_i\leq 2\sum_{i\geq 2}v_i$. Therefore, we have
        \begin{align*}
            \sum_{i\neq j} v_i\geq \sum_{i\geq 2} v_i\geq \frac{1}{2}OPT^{(2)}(\boldsymbol{v})\geq \frac{1}{2}f(\boldsymbol{v}).
        \end{align*}
    The last inequality is because $OPT^{(2)}$ dominates $f$. Thus \mech{dga 1} is $2$-robust against $ f(\boldsymbol{v})\cdot \mathbb{I}[\#(\boldsymbol{v}\neq \hat{\boldsymbol{v}})\leq 1]$.
\end{proof}

Then \thm{digital good auction} is a direct corollary of \lem{dga 2}, \lem{dga 1}, and \lem{benchmark decomposition}. In \cite{chen2014optimal}, the authors show the existence of a $2.42$-compeitive mechanism for $\F$ benchmark and an $(e-1)$-competitive mechanism for $\maxV$ benchmark, which leads to the following corollary.

\begin{corollary}\label{cor:digital good auction}
    There is a truthful mechanism, which is $1$-consistent against $OPT$ and $4.42$-robust against $\mathcal{F}^{(2)}$-benchmark. Moreover, there is a truthful mechanism, which is $1$-consistent against $OPT$ and $(e+1)$-robust against $\maxV$ benchmark.
\end{corollary}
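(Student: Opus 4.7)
The plan is to derive the corollary as a direct instantiation of \thm{digital good auction}. The theorem requires two ingredients for each benchmark $f\in\{\F,\maxV\}$: (i) an $\alpha$-competitive truthful mechanism against $f$, and (ii) the domination $f(\bv)\leq OPT^{(2)}(\bv)$ for all $\bv$. Once both ingredients are in place, the theorem yields a mechanism that is $1$-consistent against $OPT$ and $(\alpha+2)$-robust against $f$. For $\F$, the competitive factor $\alpha=2.42$ is provided by Chen et al.~\cite{chen2014optimal}, yielding the bound $4.42$; for $\maxV$, their $(e-1)$-competitive mechanism yields $e+1$.

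The first substantive step is to verify the domination. Since the entries of $\bv$ are sorted in descending order, $OPT^{(2)}(\bv)=v_2+\sum_{j\geq 2} v_j$ in the digital good setting. For $\F$, given any $k\geq 2$, we have $v_j\geq v_k$ for every $j\leq k$, so
\begin{align*}
k\cdot v_k \;=\; \sum_{j=1}^{k} v_k \;\leq\; v_2+\sum_{j=2}^{k} v_j \;\leq\; v_2+\sum_{j\geq 2} v_j \;=\; OPT^{(2)}(\bv),
\end{align*}
where the first inequality replaces $v_1$ by $v_2$ (which only decreases the sum since $v_1\geq v_2$) and then uses $v_j\geq v_k$. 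Taking the maximum over $k\geq 2$ gives $\F(\bv)\leq OPT^{(2)}(\bv)$. For $\maxV$, given $k<n$, using $v_j\geq v_{k+1}$ for $j\leq k+1$ we similarly have
\begin{align*}
k\cdot v_{k+1} \;\leq\; \sum_{j=2}^{k+1} v_j \;\leq\; v_2+\sum_{j\geq 2} v_j \;=\; OPT^{(2)}(\bv),
\end{align*}
and taking the maximum yields $\maxV(\bv)\leq OPT^{(2)}(\bv)$.

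The second step is simply to assemble. Plugging $f=\F$ with $\alpha=2.42$ into \thm{digital good auction} produces a truthful mechanism that is $1$-consistent against $OPT$ and $(2.42+2)=4.42$-robust against $\F$; plugging $f=\maxV$ with $\alpha=e-1$ produces one that is $1$-consistent against $OPT$ and $(e-1+2)=(e+1)$-robust against $\maxV$. No real obstacle arises here because the heavy lifting (bid-independent combination, benchmark decomposition, and the two-case analysis for $\#(\bv\neq\bhv)\leq 1$ versus $\geq 2$) has already been carried out in \lem{dga 2}, \lem{dga 1}, and \lem{benchmark decomposition}; the only content specific to the corollary is the routine domination check above.
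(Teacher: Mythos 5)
Your derivation is correct and matches the paper's: instantiate \thm{digital good auction} with the $2.42$- and $(e-1)$-competitive mechanisms of \cite{chen2014optimal}, giving $4.42$ and $e+1$; the paper merely asserts that $\F$ and $\maxV$ are dominated by $OPT^{(2)}$, which you verify explicitly. One small prose slip: in your $\F$ chain, ``replaces $v_1$ by $v_2$, which only decreases the sum'' reads backwards for the direction you need --- the displayed bound $\sum_{j=1}^k v_k\le v_2+\sum_{j=2}^k v_j$ is fine, and the clean justification is simply that for $k\ge 2$ each of the $k$ right-hand summands ($v_2$, and $v_j$ for $2\le j\le k$) is at least $v_k$.
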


\section{\texorpdfstring{$\ell$}{}-Limited Supply Auction}\label{sect:lsa}

In this section, we will discuss $\ell$-limited supply auction where at most $\ell$ bidders can be served simultaneously. For example, the auctioneer has $\ell$ identical items to sell, and each bidder needs at most one item. 

In this setting, our benchmark for consistency is $OPT_{\X}(\bv) = \sum_{i=1}^{\ell}v_i$. For robustness, we consider the $\Fl$ and $\EFOl$ benchmarks as our benchmarks. Notice that we have defined $\EFOl$ in Section \ref{sect:preliminary}. $\Fl$ is a natural extension of $\mathcal{F}^{(2)}$, defined as 
\begin{align}
    \Fl(\bv) = \max_{2\leq k\leq l} k\cdot v_k.
\end{align}

For $\Fl$ benchmark, Goldberg et al.~\cite{goldberg2006competitive} show the equivalence between digital good auctions and limited-supply auctions. The key observation is that $\Fl$ only depends on the highest $\ell$ bidders, so we can throw away the bids except for the highest $\ell$ and run a digital good auction on these $\ell$ bidders charging the threshold bidding. This reduction maintains the competitive ratio of digital good auctions. The equivalence also exists in our setting, so the following theorem is obvious.

\begin{theorem}\label{thm:limited supply f}
    If there exists a digital good auction with $1$-consistency ratio against $OPT$, $\alpha$-robustness ratio against $\F$, then there exists a $\ell$-limited supply auction with the same consistency and robustness against $\Fl$.
\end{theorem}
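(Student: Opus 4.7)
The plan is to realize Goldberg et al.'s classical reduction (from $\ell$-limited supply to digital good auctions) within the learning-augmented framework. Given the hypothesized LA digital good auction $\Mcal^{DGA}$, I would construct the LA $\ell$-limited supply auction $\Mcal^{LSA}$ via a bid-independent per-bidder description. For each bidder $i$, let $r_i$ denote the $\ell$-th highest value in $\bb_{-i}$, and let the reduced instance consist of bidder $i$ together with the top $\ell-1$ bidders of $\bb_{-i}$. Invoke $\Mcal^{DGA}$ on this reduced instance (with the matching predictions) to obtain the bid-independent price $p_i^{DGA}$ it offers to $i$. Offer $i$ the price $p_i = \max(p_i^{DGA}, r_i)$; bidder $i$ wins and pays $p_i$ iff $b_i \geq p_i$.

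Four checks follow. Truthfulness holds because $p_i$ is a function of $(\bb_{-i}, \bhv)$ alone. Feasibility holds because any winner satisfies $b_i \geq r_i$, so $b_i$ belongs to the top $\ell$ overall, capping the number of winners at $\ell$. For $1$-consistency against $OPT_{\X}$, when $\bhv = \bv$ the $1$-consistency of $\Mcal^{DGA}$ against $OPT$ forces $p_i^{DGA} = v_i$; for each bidder in the top $\ell$ we have $r_i = v_{\ell+1} \leq v_i$, so $p_i = v_i$ and the bidder accepts, while bidders outside the top $\ell$ are rejected by the reserve $r_i = v_\ell \geq v_i$; the total revenue equals $\sum_{i=1}^{\ell} v_i = OPT_{\X}(\bv)$. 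For $\alpha$-robustness against $\Fl$, observe that the reduced instance coincides with the true top $\ell$ for every bidder in that set, so the prices $\{p_i^{DGA}\}_{i \in [\ell]}$ are generated by a single execution of $\Mcal^{DGA}$ on $\bv_{\text{top-}\ell}$, which in expectation yields revenue at least $\F(\bv_{\text{top-}\ell})/\alpha = \Fl(\bv)/\alpha$; the reserve-truncation $\max(p_i^{DGA}, r_i)$ can only (weakly) increase this revenue, because $r_i \leq v_\ell \leq v_i$ keeps every top-$\ell$ bidder solvent at the reserve.

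The main obstacle is reconciling the per-bidder, bid-independent presentation (which we need for truthfulness of the combined mechanism) with the aggregate revenue guarantee of $\Mcal^{DGA}$, which is naturally stated for a single execution on a fixed set of bidders. The saving grace, and the reason this reduction works cleanly, is that both $\Fl(\bv)$ and $\F(\bv_{\text{top-}\ell})$ depend only on the top $\ell$ bids: the reduced instance is the same common set for every bidder in the true top $\ell$, so the $n$ per-bidder invocations of $\Mcal^{DGA}$ implement the very same DGA execution on that set. The inequality $r_i \leq v_\ell$ on the top-$\ell$ bidders then ensures that the $\max$-truncation is benign for both the consistency and the robustness bound.
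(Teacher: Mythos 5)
Your construction is precisely the learning-augmented version of Goldberg et al.'s reduction from limited supply to digital goods, which is exactly what the paper invokes for this theorem: the per-bidder bid-independent description with the reserve $r_i$ is the right way to make the reduction truthful, and your consistency and robustness arguments (observing that the $n$ per-bidder invocations of $\Mcal^{DGA}$ collapse to a single run on the true top $\ell$, and that the $\max$-truncation with $r_i\leq v_\ell$ is benign) are sound. The one caveat is that the feasibility step ``$b_i\geq r_i$ implies $b_i$ is in the top $\ell$, capping winners at $\ell$'' fails under ties---if several bidders share the $\ell$-th highest bid, all of them can satisfy $b_i\geq r_i$ and win simultaneously---so you must thread a consistent deterministic tie-breaking rule (as the paper sets up in its preliminaries) through the definition of $r_i$ and the winning condition to obtain a strict threshold.
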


When the constraint $\X$ corresponds to the $\ell$-limited supply environment, we use $\EFOl$ to substitute $\EFOt{\X}$. The $\EFOl$ benchmark depends on all bids, so we cannot use the aforementioned reduction. We use the bid-independent combination techniques again to design a $1$-consistent mechanism for this problem. Different from the unlimited supply setting, we must be careful with the feasibility here. The benchmark decomposition is the same as in the digital good auctions.
\begin{align}\label{eq:lsa decomposition}
    \EFOl = \EFOl\cdot \mathbb{I}[\#(\bv\neq\bhv)\leq 1] + \EFOl\cdot \mathbb{I}[\#(\bv\neq\bhv)\geq 2].
\end{align}

Our mechanism for $\EFOl\cdot \mathbb{I}[\#(\bv\neq\bhv)\geq 2]$ also employs a black-box competitive mechanism of $\ell$-limited supply auction.
\begin{mechanism}\label{mech:lsa 2}
 For each bidder $i$, if $\boldsymbol{v}_{-i} = \hat{\boldsymbol{v}}_{-i}$, apply $OPT_{\X}(\bhv)$ mechanism on it; if $\boldsymbol{v}_{-i} \neq \hat{\boldsymbol{v}}_{-i}$, apply the black-box mechanism $\mathcal{M}$ on bidder $i$.
\end{mechanism}

Note that different bidders may apply different mechanisms in \mech{lsa 2}, so it is not naturally a feasible mechanism. Fortunately,  we can show the feasibility holds if the employed black-box $\Mcal$ satisfies an extra assumption.
\begin{lemma}\label{lem:lsa 2}
    If the winner set of $\mathcal{M}$ is always a subset of the highest $\ell$ bidders and it is $\alpha$-competitive against $\EFOl$, then \mech{lsa 2} is feasible, $1$-consistent against $OPT$ and $\alpha$-robust against $\EFOl \cdot \mathbb{I}[\#(\bv\neq\bhv)\geq 2]$. 
\end{lemma}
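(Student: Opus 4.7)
The plan is to verify four properties of \mech{lsa 2}: truthfulness, feasibility, $1$-consistency against $OPT$, and $\alpha$-robustness against $\EFOl \cdot \mathbb{I}[\#(\bv \neq \bhv) \geq 2]$. Truthfulness is immediate from the Bid-Independent Combination trick, since for every bidder $i$ the choice between $OPT_{\X}(\bhv)$ and $\Mcal$ is a function of $\bhv$ and $\bv_{-i}$ only. Both consistency and the restricted robustness claim follow by identifying when the combined mechanism degenerates into a single component: if $\bv = \bhv$ then every $\bv_{-i}$ equals $\bhv_{-i}$, so $OPT_{\X}(\bhv)$ is applied uniformly and the collected revenue is $OPT(\bv)$; if instead $\#(\bv \neq \bhv) \geq 2$ then every $\bv_{-i}$ still contains at least one disagreement with $\bhv_{-i}$, so $\Mcal$ is applied uniformly and the revenue is at least $\EFOl(\bv)/\alpha$.

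The genuine obstacle, and the main reason the extra hypothesis on the winner set of $\Mcal$ is invoked, is feasibility: once two different components may act on different bidders, the combined winner set has no a priori reason to fit inside the $\ell$ supply slots. I would split on $k := \#(\bv \neq \bhv)$. When $k = 0$ the mechanism coincides with $OPT_{\X}(\bhv)$, and when $k \geq 2$ it coincides with $\Mcal$, so feasibility is inherited in both extremes. The delicate case is $k = 1$: letting $j$ denote the unique mispredicted bidder, the mechanism applies $OPT_{\X}(\bhv)$ to $j$ and $\Mcal$ to every $i \neq j$, so the combined winner set is $(\{j\}\cap W_{OPT})\cup(S_\Mcal\setminus\{j\})$, where $W_{OPT}$ and $S_\Mcal$ denote the winner sets of the two components on the common input $\bv$.

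This set has size at most $\ell$ unless $j\in W_{OPT}$, $|S_\Mcal|=\ell$, and $j\notin S_\Mcal$ all hold simultaneously; ruling out this configuration is the heart of the argument. The containment $S_\Mcal\subseteq$ top-$\ell$-by-$\bv$ together with $|S_\Mcal|=\ell$ pins $S_\Mcal$ down to exactly the top $\ell$ bidders by $\bv$, so $j\notin S_\Mcal$ forces $j$ out of that set. On the other hand, $j\in W_{OPT}$ under the natural posted-price implementation of $OPT_{\X}(\bhv)$ requires $j$ to be in the top $\ell$ by $\bhv$ and to satisfy $v_j\geq \hat v_j$. Since $\bv$ and $\bhv$ coincide on every coordinate other than $j$, some bidder $k\neq j$ must swap into the top $\ell$ when going from $\bhv$ to $\bv$, yielding $\hat v_j > \hat v_k = v_k > v_j$, which contradicts $v_j\geq\hat v_j$. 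This closes the feasibility argument and, combined with the first paragraph, completes the proof of the lemma.
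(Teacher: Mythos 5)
Your proposal is correct and follows essentially the same route as the paper: feasibility is reduced to the single-misprediction case, and the only problematic configuration is excluded by noting that a mispredicted bidder served under $OPT_{\X}(\bhv)$ must have $v_j\geq\hat v_j$ and hence (since all other coordinates agree) also lie among the top $\ell$ bidders by true value, so the combined winner set stays within the supply; the paper phrases this as a direct containment of $\{j\}\cup S_{\Mcal}$ in the top-$\ell$ set rather than your contradiction, but the idea is identical. The consistency and robustness arguments likewise match the paper's.
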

\begin{proof}
    \textbf{Feasibility:} If there are more than $1$ wrong predictions or there is no wrong prediction, \mech{lsa 2} reduces to $\Mcal$ and $OPT(\bhv)$ mechanism, respectively. Thus the only situation we need to investigate is when there is exactly one wrong prediction. Denote the bidder with the wrong prediction as $j$. If $\hat{v}_j>v_j$ or the predicted value of bidder $j$ is not the highest $\ell$ of all the predictions (i.e. $\sigma^{-1}(j)>\ell$), we know the bidder is not served in $OPT(\bhv)$ mechanism since he will reject the price and does not appear in the winner set. In this case, the total number of services allocated by \mech{lsa 2} is at most $\ell$ since other bidders use the same allocation rule output by $\mathcal{M}$, which is feasible. If $\hat{v}_j< v_j$ and $\sigma^{-1}(j)\leq \ell$, then bidder $j$ is allocated with one item by applying $OPT(\bhv)$ mechanism. But in this case, since the predicted value of $j$ is one of the highest $\ell$ predicted value, $v_j>\hat{v}_j$ and for $i\neq j$, $v_i=\hat{v}_i$, we can derive that the bidder $j$ also has the highest $\ell$ true value. By the assumption that the winner set of $\mathcal{M}$ is a subset of the highest $\ell$ bidders, we know the union of $j$ and the winner set of $\mathcal{M}$ is also a subset of the highest $\ell$ bidders. So in this case, we also have the total number of services allocated by \mech{lsa 2} is at most $\ell$. This finalizes our proof of feasibility.

      \textbf{Consistency:} The consistency ratio is obviously $1$.
    
      \textbf{Robustness:} If there are at least $2$ wrong predictions, \mech{lsa 2} degenerates to $\Mcal$. The expected revenue is at least $\frac{1}{\alpha} \EFOl(\bv)$ in this case by the competitive ratio of $\Mcal$. Then \mech{lsa 2} is $\alpha$-robust against $\EFOl \cdot \mathbb{I}[\#(\bv\neq\bhv)\geq 2]$.
\end{proof}

The mechanism for $\ell$-limited supply auctions proposed in \cite{chen2015competitive} meets our assumption about the black-box mechanism in \lem{lsa 2}.
\begin{lemma}\label{lem:employee}
    The winner set of the $3.42$-competitive mechanism for $\ell$-supply auction in \cite{chen2015competitive} is always a subset of the highest $\ell$ bidders.
\end{lemma}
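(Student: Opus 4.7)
The plan is to verify the claim by direct inspection of the construction of the $3.42$-competitive mechanism of Chen et al.\cite{chen2015competitive}. Their mechanism is built via the benchmark decomposition technique recalled just after \lem{benchmark decomposition}: the benchmark $\EFOl$ is upper-bounded (up to a constant) by a weighted sum of simpler ``$k$-bidder benchmarks'' of the form $k\cdot v_k$ for various $k\leq \ell$, which is essentially a consequence of the virtual-value characterization in \lem{characterization of envy-free payment} together with the formula for the concave envelope $R(j)$. For each such simpler benchmark a dedicated truthful sub-mechanism is designed, and the final mechanism is a convex combination of these sub-mechanisms, executed by sampling one at random. First, I would recall this decomposition and list the individual sub-mechanisms explicitly.

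Second, I would verify the top-$\ell$ property for every sub-mechanism in turn. The key observation is that each sub-mechanism, being aimed at a benchmark of the form $k\cdot v_k$ with $k\leq \ell$, effectively posts prices at the level of $v_k$; since $v_k\geq v_\ell$, any bidder who accepts must have true value at least $v_\ell$ and therefore lies in the top $\ell$. For posted-price sub-mechanisms this is immediate from the acceptance condition. For random-sampling sub-mechanisms, the prices derived from one sampled half are designed to approximate the $k\cdot v_k$ benchmark of that half and are therefore essentially lower-bounded by $v_k$. Because a convex combination of mechanisms inherits the top-$\ell$ property pointwise in the internal randomness, the full mechanism then also has this property.

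The main obstacle I anticipate is rigorously handling the random-sampling step, where a price learned from one half is offered to bidders in the other half; a priori, a bad split could produce a learned price small enough to be accepted by a bidder outside the top $\ell$. To close this gap I would inspect Chen et al.'s construction and confirm that their random sampling is either (i) restricted to the top $\ell$ bidders after an initial truncation, as in the standard digital-good-to-$\ell$-supply reduction of Goldberg et al.\cite{goldberg2006competitive} that underlies \thm{limited supply f}, or (ii) combined with a price floor that prevents bidders outside the top $\ell$ from winning. If their construction does not already enforce this by inspection, a light modification that restricts the pool of candidate winners to the top $\ell$ suffices and does not change the $3.42$ ratio, since the envy-free optimization over the $\ell$-limited supply constraint in \lem{characterization of envy-free payment} only assigns positive allocation to the top $\ell$ virtual values.
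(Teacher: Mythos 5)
Your plan (inspect the construction of \cite{chen2015competitive} component by component and check the top-$\ell$ property for each) is the same in spirit as the paper's argument, but as written it has a genuine gap: the proposal never actually identifies the structure of the $3.42$-competitive mechanism, and that identification is the entire content of the proof. The paper simply observes that the mechanism is a random combination of two pieces, namely the $\ell$-item Vickrey auction and the optimal $\ell$-bidder digital good auction run \emph{only on the highest $\ell$ bidders} (the Goldberg et al.~\cite{goldberg2006competitive} style truncation, your option (i)); both pieces confine their winners to the top $\ell$ by construction, so the combination does too. Your sketch instead posits a decomposition into benchmarks of the form $k\cdot v_k$ with dedicated posted-price or random-sampling sub-mechanisms, which misdescribes the construction: the relevant decomposition is into a Vickrey-type part and an $\F$-type part, handled by the two mechanisms above.

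Moreover, the substitute argument you offer in place of inspection --- ``each sub-mechanism posts prices at the level of $v_k\geq v_\ell$, so any accepting bidder is in the top $\ell$'' --- would not be sound if it were needed. The optimal digital good auction is randomized and can offer prices well below $v_\ell$; if it were run on all $n$ bidders, bidders outside the top $\ell$ could win. The property holds only because that component is applied exclusively to the highest $\ell$ bidders, not because of any floor on the posted prices; and for the Vickrey component the winners are the top $\ell$ by definition, while the price charged is $v_{\ell+1}$, so a price-threshold argument does not place winners in the top $\ell$ either. Your fallback (``if inspection fails, lightly modify the mechanism to restrict winners to the top $\ell$'') proves a statement about a modified mechanism rather than the one asserted in the lemma; it could still serve the downstream use in \lem{lsa 2}, but then you would owe a separate argument that the restriction preserves the $3.42$ ratio against $\EFOl$, which the proposal only gestures at.
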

\begin{proof}
    The mechanism is the random combination of the $\ell$-items Vickrey auction and the optimal mechanism for $\ell$-bidders digital good auction applied on the highest $\ell$ bidders. So the winner set is always a subset of the highest $\ell$ bidders.
\end{proof}

For $\EFOl\cdot\mathbb{I}[\#(\bv\neq\bhv)\leq 1]$ benchmark, we use the $OPT_{\X}(\bhv)$ mechanism.
\begin{mechanism}\label{mech:lsa 1}
    Run the $OPT(\bhv)$ mechanism. That is, offer price $\hat{v}_{\sigma(j)}$ for $j\leq \ell$ and reject bidder $\sigma(j)$ for $j>\ell$.
\end{mechanism}

\begin{lemma}\label{lem:lsa 1}
    \mech{lsa 1} is $1$-consistent against $OPT_{\X}$ and $2$-robust against $\EFOl\cdot\mathbb{I}[\#(\bv\neq\bhv)\leq 1]$.
\end{lemma}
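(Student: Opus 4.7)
The plan is to verify truthfulness and consistency directly from the posted-price structure, then tackle the $2$-robustness bound via a case analysis on the number of incorrect predictions.

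Truthfulness holds because every offered price is a function of $\bhv$ alone, yielding a standard posted-price mechanism. Consistency is immediate: when $\bhv = \bv$, each of the top-$\ell$ bidders is offered a price equal to her own value, so all accept, and the revenue is exactly $\sum_{i=1}^{\ell} v_i = OPT_{\X}(\bv)$.

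For robustness against $\EFOl(\bv)\cdot \mathbb{I}[\#(\bv\neq\bhv)\leq 1]$, I would first record the structural upper bound
\[
\EFOl(\bv) = EFO_{\X}\bigl((v_2, v_2, v_3, \ldots, v_n)\bigr) \leq 2v_2 + \sum_{i=3}^{\ell} v_i,
\]
obtained because any envy-free winner pays at most her own value, so the envy-free revenue is bounded by the sum of the top-$\ell$ entries of the underlying vector. When all predictions are correct, the revenue equals $OPT_{\X}(\bv) = v_1 + v_2 + \sum_{i=3}^{\ell} v_i \geq 2v_2 + \sum_{i=3}^{\ell} v_i \geq \EFOl(\bv)$, settling that subcase.

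Now suppose exactly one prediction is wrong, label the erroneous bidder $j$, and let $S$ denote the top-$\ell$ bidders in predicted order. Since $\hat{v}_i = v_i$ for every $i \neq j$, the predicted and true orderings agree on $[n]\setminus\{j\}$, which pins $S \setminus \{j\}$ down as the top-$\ell$ (respectively top-$(\ell-1)$) true-value bidders drawn from $[n]\setminus\{j\}$, according to whether $j \notin S$ or $j \in S$. Each bidder in $S \setminus \{j\}$ accepts her offer at price $\hat{v}_i = v_i$, so the revenue is at least $\sum_{i\in S\setminus\{j\}} v_i$. A short case split on whether $j \notin S$, $j \in S$ with $j = 1$, or $j \in S$ with $j \geq 2$ reduces the claim to verifying $\sum_{i\in S\setminus\{j\}} v_i \geq v_2 + \tfrac{1}{2}\sum_{i=3}^{\ell} v_i$, which yields the desired $\EFOl(\bv)/2$ bound via the structural inequality.

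The delicate branch, and the main obstacle, is $j \in S$ with $j = 1$: here the highest-value bidder contributes nothing (her offer price $\hat{v}_1$ may exceed $v_1$), and $S \setminus \{j\}$ collapses to $\{2, \ldots, \ell\}$. The saving observation is that the structural bound inflates only the coefficient of $v_2$, so the retained revenue $v_2 + \sum_{i=3}^{\ell} v_i$ still dominates $\tfrac{1}{2}\bigl(2v_2 + \sum_{i=3}^{\ell} v_i\bigr)$, closing the argument.
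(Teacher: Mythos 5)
Your proof is correct and follows the paper's overall strategy — lower-bound the mechanism's revenue from the correctly-predicted bidders in the offered set, then compare against the structural upper bound $\EFOl(\bv) \leq 2v_2 + \sum_{i=3}^{\ell}v_i$ — but your treatment of the one-wrong-prediction case is actually more careful than the paper's. The paper asserts that every bidder $i$ with $i \neq j$ and true-value rank $i \leq \ell$ contributes revenue $v_i$; read literally this can fail, because a mispredicted bidder $j$ with inflated $\hat{v}_j$ can displace one of the true top-$\ell$ bidders from the offered set, which is selected by \emph{predicted} rank, not true rank. You avoid this by reasoning directly about the predicted top-$\ell$ set $S$, correctly identifying $S\setminus\{j\}$ as the top $\ell$ (resp.\ top $\ell-1$) bidders of $[n]\setminus\{j\}$ depending on whether $j\in S$, and then closing the case split; in every branch $\sum_{i\in S\setminus\{j\}}v_i \geq \sum_{i=2}^{\ell}v_i \geq v_2 + \frac{1}{2}\sum_{i=3}^{\ell}v_i \geq \frac{1}{2}\EFOl(\bv)$. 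So the two proofs land on the same final inequality, and yours quietly patches a small gap in the paper's intermediate step.
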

\begin{proof}
    \textbf{Truthfulness}, $\textbf{Feasibility}$ and $\textbf{Consistency}$ are obvious.

      \textbf{Robustness:} If there is exactly one wrong prediction, denote this bidder as $j$. Consider bidder $i$ satisfying $i\neq j$ and $i\leq \ell$, the expected revenue from $i$ is $v_i$, so the total revenue is at least
        \begin{align*}
            \sum_{i\leq \ell, i\neq j} v_i&\geq \sum_{i=2}^{\ell} v_i \geq \frac{1}{2}(v_2+ \sum_{i=2}^{\ell} v_i) \geq \frac{1}{2} \EFOl(\boldsymbol{v}),
        \end{align*}
        which shows the $2$-robustness ratio against $\EFOl\cdot\mathbb{I}[\#(\bv\neq\bhv)\leq 1]$.
\end{proof}

Combining equation \eq{lsa decomposition}, \lem{employee}, \lem{lsa 2}, \lem{lsa 1} and \lem{benchmark decomposition}, the following theorem is immediate.
\begin{theorem}\label{thm:limited supply efo}
    There is a mechanism with $1$-consistency ratio against $OPT_{\X}$ and $5.42$-robustness ratio against $\EFOl$. 
\end{theorem}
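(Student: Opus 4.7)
The plan is to assemble \thm{limited supply efo} directly from the ingredients built up in this section: the benchmark decomposition \eq{lsa decomposition}, the two tailored mechanisms \mech{lsa 2} and \mech{lsa 1}, and the generic benchmark decomposition combiner \lem{benchmark decomposition}. Since every component is already in place, the proof is essentially a bookkeeping step, but it is worth laying out explicitly.

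First, I would split the target benchmark as in \eq{lsa decomposition}, writing $\EFOl(\bv) = \EFOl(\bv)\cdot \mathbb{I}[\#(\bv\neq\bhv)\leq 1] + \EFOl(\bv)\cdot \mathbb{I}[\#(\bv\neq\bhv)\geq 2]$. The idea is to design one mechanism that handles each summand and then take a random combination. For the ``many errors'' summand, I would instantiate the black-box slot of \mech{lsa 2} with the $3.42$-competitive mechanism of Chen et al.\ \cite{chen2015competitive}. By \lem{employee}, this mechanism satisfies the ``winners are among the top $\ell$ bidders'' hypothesis required by \lem{lsa 2}, so \lem{lsa 2} yields a truthful, feasible mechanism that is $1$-consistent against $OPT_{\X}$ and $3.42$-robust against $\EFOl\cdot \mathbb{I}[\#(\bv\neq\bhv)\geq 2]$. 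For the ``few errors'' summand, \mech{lsa 1}, which just offers the predicted prices to the top $\ell$ predicted bidders, is already shown in \lem{lsa 1} to be $1$-consistent against $OPT_{\X}$ and $2$-robust against $\EFOl\cdot \mathbb{I}[\#(\bv\neq\bhv)\leq 1]$.

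Having both mechanisms in hand, I would invoke \lem{benchmark decomposition} with $\alpha = 3.42$, $\beta = 2$, $f_1(\bv) = \EFOl(\bv)\cdot \mathbb{I}[\#(\bv\neq\bhv)\geq 2]$, and $f_2(\bv) = \EFOl(\bv)\cdot \mathbb{I}[\#(\bv\neq\bhv)\leq 1]$. The lemma returns a truthful mechanism that runs \mech{lsa 2} with probability $3.42/5.42$ and \mech{lsa 1} with probability $2/5.42$, is $1$-consistent against $OPT_{\X}$, and is $(3.42+2) = 5.42$-robust against $f_1 + f_2 = \EFOl$. Feasibility of the combined mechanism follows because each deterministic branch is feasible (the key nontrivial case being \mech{lsa 2}, already handled by \lem{lsa 2}).

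The only real subtlety is to double-check that \lem{benchmark decomposition} still applies here, since our component mechanisms operate under a feasibility constraint rather than in the unconstrained digital-goods world; however, the lemma is stated and proved purely in terms of truthfulness, consistency, and competitive ratios, and a convex combination of two feasible truthful mechanisms remains feasible and truthful. Thus there is no genuine obstacle, and the statement of \thm{limited supply efo} follows immediately.
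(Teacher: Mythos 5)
Your proposal is correct and follows exactly the same route as the paper: decompose via \eq{lsa decomposition}, instantiate \mech{lsa 2} with the $3.42$-competitive mechanism of \cite{chen2015competitive} (using \lem{employee} and \lem{lsa 2}), pair it with \mech{lsa 1} via \lem{lsa 1}, and combine through \lem{benchmark decomposition} to get $3.42+2 = 5.42$. The paper dispatches this as ``immediate''; your explicit check that the decomposition lemma is agnostic to feasibility constraints is a sensible sanity check but adds nothing beyond what the paper intends.
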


\section{Downward-Closed Permutation Environments}\label{sect:downward-closed}
In this section, we propose the mechanism for the general downward-closed permutation environment auction. In this general environment, the literature only considers $\EFOt{\X}$ benchmark and we take it as our benchmark for the robustness ratio. 

Without loss of generality, we assume that for any bidder, the allocation vector $\boldsymbol{e}_i$ that only serves bidder $i$ with probability $1$ is feasible. To begin with, we first give a decomposition of the benchmark. Note that $EFO_{\X}$ is sub-additive and monotone, so
\begin{align*}
    \EFOt{\X}(\bv)&=EFO_{\X}(\boldsymbol{v}^{(2)})\leq EFO_{\X}((v_2,v_2,\underbrace{0,\ldots,0}_{n-2}))+ EFO_{\X}^{(3)}(\bv)\leq 2v_2 +EFO_{\X}^{(3)}(\bv).
\end{align*}
Then, we have,
\begin{align}
    EFO^{(2)}_{\mathcal{X}}(\boldsymbol{v}) &\leq  \underbrace{EFO^{(2)}_{\mathcal{X}}(\boldsymbol{v})\cdot \mathbb{I}[\#(\boldsymbol{v}\neq \hat{\boldsymbol{v}})\geq 3]}_{\mbox{Multiple Errors Benchmark}}+\underbrace{2v_2\cdot \mathbb{I}[\#(\boldsymbol{v}\neq \hat{\boldsymbol{v}})=2]}_{\mbox{Sensitive Benchmark}}\notag
    \\&\quad\quad +\underbrace{EFO^{(2)}_{\mathcal{X}}(\boldsymbol{v})\cdot \mathbb{I}[\#(\boldsymbol{v}\neq \hat{\boldsymbol{v}})\leq 1] +EFO^{(3)}_{\mathcal{X}}(\boldsymbol{v})\cdot \mathbb{I}[\#(\boldsymbol{v}\neq \hat{\boldsymbol{v}})=2}_{\mbox{Insensitive Benchmarks}}].
\end{align}
As you can see, we decompose the benchmark into $3$ parts, and we will design learning-augmented mechanisms for these $3$ parts respectively. 

Before diving into these mechanisms, we first introduce an important mechanism that will replace the mechanism $OPT_{\X}(\bhv)$ for ensuring the consistency ratio in our mechanism for downward-closed environments.
The mechanism is a modified variant of \emph{weakest competitor VCG mechanism}~\cite{krishna1998efficient}, we call it \emph{Discard-and-Limit Weakest Competitor VCG Mechanism}. Compared with $OPT_{\X}(\bhv)$, the allocation rule of this mechanism is computed according to both the bids and predictions rather than only predictions. This fact is beneficial to the design of our bid-independent combination mechanism\footnote{Using $OPT_{\X}(\bhv)$ to guarantee consistency is also possible, but the resulting mechanism and proof are more sophisticated.}.

\paragraph{Discard-and-Limit Weakest Competitor VCG Mechanism}\footnote{The naming of this mechanism may seem confusing at first, but the reader should understand the naming after we define its error-tolerant version in Section \ref{sect:error-tolerant}.} Given the predicted value vector $\bhv$, we first define a Discard-and-Limit operator $\mathtt{DaL}_{\bhv}$. Given the input bid vector $\boldsymbol{b}$, $\mathtt{DaL}_{\bhv}$ outputs a bid vector $\widetilde{\boldsymbol{b}}$ such that $\widetilde{b}_i=0$ if $b_i< \hat{v}_i$; $\widetilde{b}_i=\hat{v}_i$ if $b_i\geq \hat{v}_i$. Without ambiguity, we also use $\mathtt{DaL}_{\bhv}(\boldsymbol{b})$ to represent $\boldsymbol{\tilde{b}}$. Let $\boldsymbol{x}_{\mathtt{DWC}\mbox{-}\mathtt{VCG}},\boldsymbol{p}_{\mathtt{DWC}\mbox{-}\mathtt{VCG}}$ denote the allocation and the payment of the mechanism respectively. The mechanism first changes the bid vector using a Discard-and-Limit operator $\mathtt{DaL}_{\bhv}$. Then the mechanism uses the allocation that maximizes the social welfare with bid vector $\mathtt{DaL}_{\bhv}(\bb)$, that is, $\boldsymbol{x}_{\mathtt{DWC}\mbox{-}\mathtt{VCG}}=\argmax_{\boldsymbol{x}\in \X}\sum_{i\in [n]} x_i\cdot \mathtt{DaL}_{\hat{v}_i}(b_i)$. Note that we always break ties such that $x_{\mathtt{DWC}\mbox{-}\mathtt{VCG},i}=0$ if $\mathtt{DaL}_{\hat{v}_i}(b_i)=0$. The payment $p_{\mathtt{DWC}\mbox{-}\mathtt{VCG},i}$ of the bidder $i$ is 
\begin{equation}\label{eq:payment no ET}
    p_{\mathtt{DWC}\mbox{-}\mathtt{VCG},i} = \max_{\boldsymbol{x}\in\X} \left(x_i\cdot \hat{v}_i +\sum_{j\neq i} x_j\cdot \mathtt{DaL}_{\hat{v}_j}(b_j)\right)-\sum_{j\neq i}x_{\mathtt{DWC}\mbox{-}\mathtt{VCG},j}\cdot \mathtt{DaL}_{\hat{v}_j}(b_j).
\end{equation}

To ensure the individual rationality, if $i\in \mathcal{I}:= \{i\mid x_{\mathtt{DWC}\mbox{-}\mathtt{VCG},i}\cdot b_i < p_{\mathtt{DWC}\mbox{-}\mathtt{VCG},i} \}$, we set both $x_{\mathtt{DWC}\mbox{-}\mathtt{VCG},i}$ and $p_{\mathtt{DWC}\mbox{-}\mathtt{VCG},i}$ to $0$. At last, we use $\mathtt{DWC\mbox{-}VCG}_{\bhv}$ to denote the Discard-and-Limit Weakest Competitor VCG Mechanism with parameter $\bhv$.

We first show some properties of the Discard-and-Limit weakest competitor VCG mechanism.

\begin{lemma}\label{lem:payment lower no ET}
    Given bid vector $\boldsymbol{b}$ and $\bhv$. The payment $p_{\mathtt{DWC}\mbox{-}\mathtt{VCG},i}$ calculated from \eq{payment no ET} satisfies $p_{\mathtt{DWC}\mbox{-}\mathtt{VCG},i} \geq x_{\mathtt{DWC\mbox{-}VCG},i}\cdot \hat{v}_i$. If $b_i\geq \hat{v}_i$, then $x_{\mathtt{DWC\mbox{-}VCG},i}\cdot b_i \geq p_{\mathtt{DWC}\mbox{-}\mathtt{VCG},i}$, i.e. $i\notin \mathcal{I}$.
\end{lemma}
\begin{proof}
\begin{align*}
    p_{\mathtt{DWC}\mbox{-}\mathtt{VCG},i} &= \max_{x\in\X} \left(x_i\cdot \hat{v}_i +\sum_{j\neq i} x_j\cdot \mathtt{DaL}_{\hat{v}_j}(b_j)\right)-\sum_{j\neq i}x_{\mathtt{DWC}\mbox{-}\mathtt{VCG},j}\cdot \mathtt{DaL}_{\hat{v}_j}(b_j)
    \\&\geq x_{\mathtt{DWC}\mbox{-}\mathtt{VCG},i}\cdot \hat{v}_i +\sum_{j\neq i} x_{\mathtt{DWC}\mbox{-}\mathtt{VCG},j}\cdot \mathtt{DaL}_{\hat{v}_j}(b_j) - \sum_{j\neq i}x_{\mathtt{DWC}\mbox{-}\mathtt{VCG},j}\cdot \mathtt{DaL}_{\hat{v}_j}(b_j)
    \\&\geq x_{\mathtt{DWC\mbox{-}VCG},i}\cdot \hat{v}_i.
\end{align*}
    For the second inequality, if $b_i\geq \hat{v}_i$,
    \begin{align*}
        p_{\mathtt{DWC}\mbox{-}\mathtt{VCG},i} &= \max_{x\in\X} \left(x_i\cdot \hat{v}_i +\sum_{j\neq i} x_j\cdot \mathtt{DaL}_{\hat{v}_j}(b_j)\right)-\sum_{j\neq i}x_{\mathtt{DWC}\mbox{-}\mathtt{VCG},j}\cdot \mathtt{DaL}_{\hat{v}_j}(b_j)
    \\&= \max_{x\in\X} \left(x_i\cdot \mathtt{DaL}_{\hat{v}_j}(b_i) +\sum_{j\neq i} x_j\cdot \mathtt{DaL}_{\hat{v}_j}(b_j)\right)-\sum_{j\neq i}x_{\mathtt{DWC}\mbox{-}\mathtt{VCG},j}\cdot \mathtt{DaL}_{\hat{v}_j}(b_j)
    \\&= \sum_{j\in [n]}x_{\mathtt{DWC\mbox{-}VCG},j}\cdot \mathtt{DaL}_{\hat{v}_j}(b_j  ) -\sum_{j\neq i}x_{\mathtt{DWC\mbox{-}VCG},j}\cdot \mathtt{DaL}_{\hat{v}_j}(b_j) = x_{\mathtt{DWC\mbox{-}VCG},i}\cdot \mathtt{DaL}_{\hat{v}_i}(b_i)
    \\&\leq x_{\mathtt{DWC\mbox{-}VCG},i}\cdot b_i.
    \end{align*}
\end{proof}

\begin{lemma}\label{lem:DWC truthful no ET}
    The Discard-and-Limit Weakest Competitor VCG Mechanism is truthful.
\end{lemma}
\begin{proof}
    Fix any $b_j'$, let $\boldsymbol{x}'$ be the allocation when the bidder $j$ report $b_j'$ and other bidders report $\boldsymbol{b}_{-j}$. Therefore, $\boldsymbol{x}' = \argmax_{\boldsymbol{x}^*\in \X} \left(x_j^*\cdot \mathtt{DaL}_{\hat{v}_j}(b_j')+\sum_{i\neq j} x_i^* \cdot \mathtt{DaL}_{\hat{v}_i}(b_i)\right)$. The utility of bidder $j$ is $ x_j' \cdot v_j + \sum_{i\neq j} x_i' \cdot \mathtt{DaL}_{\hat{v}_i}(b_i) - \max_{\boldsymbol{x}^*\in \X}\left(x^*_j\cdot \hat{v}_j + \sum_{i\neq j}x^*_i\cdot \mathtt{DaL}_{\hat{v}_i}(b_i)\right) $. Since the $\max$ term is independent with $j$'s reported value, the bidder $j$ should maximize the term $x_j' \cdot v_j + \sum_{i\neq j} x_i' \cdot \mathtt{DaL}_{\hat{v}_i}(b_i)$ to maximize his utility. 
    
    If $v_j=\hat{v}_j$, then $v_j=\mathtt{DaL}_{\hat{v}_j}(v_j)$, reporting truthfully leads to the allocation which maximizes $x_j' \cdot \mathtt{DaL}_{\hat{v}_j}(v_j) + \sum_{i\neq j} x_i' \cdot \mathtt{DaL}_{\hat{v}_i}(b_i) = x_j' \cdot v_j + \sum_{i\neq j} x_i' \cdot \mathtt{DaL}_{\hat{v}_i}(b_i)$. Therefore, the bidder $j$ will report truthfully.

    If $v_j> \hat{v}_j$, we rewrite the term as
    \[x_j' \cdot (v_j-\hat{v}_j)+ \underbrace{x_j'\cdot \mathtt{DaL}_{\hat{v}_j}(v_j)+\sum_{i\neq j} x_i' \cdot \mathtt{DaL}_{\hat{v}_i}(b_i)}_{(A)}.\]
    The $(A)$ term is maximized by reporting truthfully. Now we focus on the first term and prove that it is also maximized by reporting truthfully. Note that if $j$ report $b_j'\geq \hat{v}_j$, the resulting allocation $x_j'$ is the same since the bid is truncated to $ \hat{v}_j$. Thus misreporting $b_j'\geq \hat{v}_j$ does not change the utility. Let $\boldsymbol{x}$ be the allocation when $j$ reports truthfully. If $j$ report $b_j'< \hat{v}_j$, we prove that $x_j'\leq x_j$. Suppose that $x_j'>x_j$, we have 
    \begin{equation}\label{eq:901}
        x_j' \cdot  \mathtt{DaL}_{\hat{v}_j}(b_j')+\sum_{i\neq j}x_i' \cdot  \mathtt{DaL}_{\hat{v}_i}(b_i)=\max_{\boldsymbol{x}^*\in \X}\left(x_j^* \cdot  \mathtt{DaL}_{\hat{v}_j}(b_j')+\sum_{i\neq j}x_i^* \cdot  \mathtt{DaL}_{\hat{v}_i}(b_i)\right)\geq x_j \cdot  \mathtt{DaL}_{\hat{v}_j}(b_j')+\sum_{i\neq j}x_i \cdot  \mathtt{DaL}_{\hat{v}_i}(b_i)
    \end{equation}
    and 
    \[x_j \cdot  \mathtt{DaL}_{\hat{v}_j}(v_j)+\sum_{i\neq j}x_i \cdot  \mathtt{DaL}_{\hat{v}_i}(b_i)=\max_{\boldsymbol{x}^*\in \X}\left( x_j^* \cdot  \mathtt{DaL}_{\hat{v}_j}(v_j)+\sum_{i\neq j}x^*_i \cdot  \mathtt{DaL}_{\hat{v}_i}(b_i)\right) \geq x'_j \cdot  \mathtt{DaL}_{\hat{v}_j}(v_j)+\sum_{i\neq j}x'_i \cdot  \mathtt{DaL}_{\hat{v}_i}(b_i) .\]
    Since $v_j> b_j'$ and $x_j'>x_j$, we have 
    \begin{align*}
        \sum_{i\neq j}(x_i-x_i')\cdot \mathtt{DaL}_{\hat{v}_i}(b_i) \geq (x_j'-x_j)\cdot \mathtt{DaL}_{\hat{v}_j}(v_j)> (x_j'-x_j)\cdot \mathtt{DaL}_{\hat{v}_j}(b_j')
    \end{align*}
    Rearrange the inequality, we have
    \[x_j' \cdot  \mathtt{DaL}_{\hat{v}_j}(b_j')+\sum_{i\neq j}x_i' \cdot  \mathtt{DaL}_{\hat{v}_i}(b_i)< x_j \cdot  \mathtt{DaL}_{\hat{v}_j}(b_j')+\sum_{i\neq j}x_i \cdot  \mathtt{DaL}_{\hat{v}_i}(b_i),\]
    which is contradictory to \eq{901}. Therefore, if $j$ misreports $b_j'<\hat{v}_j$, $x_j' \cdot (v_j- \hat{v}_j)$ will not increase and the first term is maximized by reporting truthfully.

    If $v_j< \hat{v}_j$, by \lem{payment lower no ET}, the unit payment is at least $ \hat{v}_j>v_j$. Truthfully reporting leads to $0$ utility. However, if $j$ reports a higher value $b_j'\geq \hat{v}_j$, the utility will be negative. If $j$ reports value $b_j' < \hat{v}_j$, it will be rejected directly and receive $0$ utility.
\end{proof}

\begin{lemma}\label{lem:DWC no ET}
     The total expected revenue of $\mathtt{DWC\mbox{-}VCG}_{\bhv}$ is at least $OPT_{\X}(\mathtt{DaL}_{\bhv}(\bv))$.
\end{lemma}
\begin{proof}
    For bidder $i$ with $v_i\geq \hat{v}_i$, by \lem{payment lower no ET}, $p_{\mathtt{DWC}\mbox{-}\mathtt{VCG},i}\geq x_{\mathtt{DWC\mbox{-}VCG},i}\cdot \hat{v}_i= x_{\mathtt{DWC\mbox{-}VCG},i}\cdot \mathtt{DaL}_{\hat{v}_i}(v_i)$. Then the total expected revenue is at least 
    \begin{align*}
        \sum_{v_i\geq \hat{v}_i}x_{\mathtt{DWC\mbox{-}VCG},i}\cdot \mathtt{DaL}_{\hat{v}_i}(v_i) &= \sum_{i\in [n]}x_{\mathtt{DWC\mbox{-}VCG},i}\cdot \mathtt{DaL}_{\hat{v}_i}(v_i)= OPT_{\X}(\mathtt{DaL}_{\bhv}(\bv))
    \end{align*}
    where the first equality is because $\mathtt{DaL}(v_i)=0$ when $v_i<\hat{v}_i$.
\end{proof}

\subsection{Handling Multiple Errors Benchmark}

Recall the \mech{dga 2}, if we use a similar version here, the outcome may be infeasible when $\#(\bv\neq\bhv) = 1$. Let $j$ be the bidder with the wrong prediction. In this case, $j$ is applied with the $\mathtt{DWC}\mbox{-}\mathtt{VCG}_{\bhv}$ mechanism, and the others are applied with the black-box mechanism. The simultaneous operation of these two different mechanisms may result in conflicts and compromise the feasibility of the allocation. However, it is inevitable to have multiple mechanisms running simultaneously. Our idea is to use a special mechanism that can run together with any other mechanism to bridge $\mathtt{DWC}\mbox{-}\mathtt{VCG}_{\bhv}$ and $\Mcal$. In the downward-closed environment, this special mechanism is to reject all the bidders. We use $\Mcal_{\emptyset}$ to denote this rejection mechanism. Then we propose the following mechanism, which is a bid-independent combination of $\mathtt{DWC}\mbox{-}\mathtt{VCG}_{\bhv},\Mcal$ and $\Mcal_{\emptyset}$.

\begin{mechanism}\label{mech:rank2 1}
    For each bidder $i$, if $\#(\boldsymbol{v}_{-i} \neq \hat{\boldsymbol{v}}_{-i})=0$, run $\mathtt{DWC\mbox{-}VCG}_{\bhv}$ on the bidder $i$. If $\#(\boldsymbol{v}_{-i} \neq \hat{\boldsymbol{v}}_{-i})=1$, reject bidder $i$. If $\#(\boldsymbol{v}_{-i} \neq \hat{\boldsymbol{v}}_{-i})\geq 2$, run $\mathcal{M}$ on bidder $i$. 
\end{mechanism}

\begin{lemma}
    \mech{rank2 1} is truthful and feasible. Under the symmetric downward-closed environment, assuming that $\mathcal{M}$ is $\alpha$-competitive against $\EFOt{\X}$, \mech{rank2 1} is $1$-consistent against $OPT_{\X}(\bv)$ and $\alpha$-robust against the benchmark $EFO^{(2)}_{\mathcal{X}}(\boldsymbol{v})\cdot \mathbb{I}[\#(\boldsymbol{v}\neq \hat{\boldsymbol{v}})\geq 3]$.
\end{lemma}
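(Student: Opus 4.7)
The plan is to verify the four claims \emph{feasibility}, \emph{truthfulness}, \emph{$1$-consistency}, and \emph{$\alpha$-robustness on the indicated event} by a case analysis on $k := \#(\bv\neq\bhv)$, the total number of wrong predictions. Truthfulness comes for free from the bid-independent combination trick of \sect{techniques}: for each bidder $i$, the mechanism applied to $i$ depends only on $\bv_{-i}$ and $\bhv$, and each of the three ingredients ($OPT_{\X}(\bhv)$, $\Mcal_{\emptyset}$, and $\Mcal$) is itself truthful. For $1$-consistency, when $\bv=\bhv$ we have $\bv_{-i}=\bhv_{-i}$ for every $i$, so every bidder is served by $OPT_{\X}(\bhv)$ and the resulting revenue equals $OPT_{\X}(\bv)$.

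For feasibility I would enumerate the four cases $k=0,1,2,\geq 3$. When $k=0$ or $k\geq 3$ the mechanism degenerates to a single ingredient ($OPT_{\X}(\bhv)=OPT_{\X}(\bv)$, or $\Mcal$ run on every bidder), which is feasible by assumption. The delicate cases are $k=1$ and $k=2$, where different mechanisms truly run in parallel. If $k=1$ with the wrong-prediction bidder $j$, then for every $i\neq j$ the vector $\bv_{-i}$ contains the single wrong coordinate $v_j$, so $i$ gets $\Mcal_{\emptyset}$ and is rejected; only $j$ may receive an allocation, of the form $x^{\ast}_{j}\cdot\boldsymbol{e}_{j}$ with $\boldsymbol{x}^{\ast}\in\X$, which lies in $\X$ by downward-closedness (and the assumption that $\boldsymbol{e}_{j}\in\X$). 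If $k=2$ with wrong-prediction bidders $j_{1},j_{2}$, then for $i\notin\{j_{1},j_{2}\}$ the vector $\bv_{-i}$ contains both wrong coordinates, so $i$ is served by $\Mcal$; for $i\in\{j_{1},j_{2}\}$ the vector $\bv_{-i}$ contains exactly one wrong coordinate (the other of the pair), so $i$ is rejected. The realized allocation is thus coordinate-wise at most the allocation produced by $\Mcal$ on $\bv$ (with the $j_{1}$ and $j_{2}$ coordinates zeroed out), so it lies in $\X$ by downward-closedness.

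For the robustness claim, note that the benchmark $EFO^{(2)}_{\X}(\bv)\cdot\mathbb{I}[k\geq 3]$ is zero whenever $k\leq 2$, so the inequality is trivial there. When $k\geq 3$, for every bidder $i$ the removed coordinate $v_{i}$ leaves at least $k-1\geq 2$ wrong predictions in $\bv_{-i}$, so \mech{rank2 1} applies $\Mcal$ uniformly to every bidder and the outcome coincides with that of $\Mcal$ run on $\bv$. Hence the expected revenue is at least $\frac{1}{\alpha}EFO^{(2)}_{\X}(\bv)$ by the assumed competitiveness of $\Mcal$, which yields the stated $\alpha$-robustness.

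I expect the $k=1,2$ analyses of feasibility to be the only real obstacle: they are where the three-way combination could genuinely clash, and both rely crucially on downward-closedness together with the convention that $\boldsymbol{e}_{i}\in\X$ to absorb the piece produced by $OPT_{\X}(\bhv)$ (in the $k=1$ case) and to truncate the $\Mcal$-allocation to a strict subset of bidders (in the $k=2$ case). The symmetry assumption on $\X$ is not needed for feasibility, consistency, or this robustness bound; it will be invoked only later when tying $EFO^{(2)}_{\X}$ to envy-free revenue.
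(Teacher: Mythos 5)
Your proof is correct and follows essentially the same case analysis on $\#(\bv\neq\bhv)$ as the paper's own proof, just spelled out in more detail. (One minor nit: in the $k=1$ case the parenthetical invocation of $\boldsymbol{e}_j\in\X$ is superfluous — $x^\ast_j\cdot\boldsymbol{e}_j\leq\boldsymbol{x}^\ast\in\X$ already gives feasibility by downward-closedness.)
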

\begin{proof}
    \textbf{Truthfulness:} Since $\#(\bv_{-i}\neq \bhv_{-i})$ is independent with $b_i$ and \mech{rank2 1} is a bid-independent combination of $\mathtt{DWC\mbox{-}VCG}_{\bhv}$, $\Mcal_{\emptyset}$ and $\Mcal$, which are truthful. Then \mech{rank2 1} is truthful.
    
    \textbf{Feasibility:} When $\#(\bv\neq \bhv)=0$, \mech{rank2 1} degenerates to $\mathtt{DWC\mbox{-}VCG}_{\bhv}$ mechanism, thus it is feasible. When $\#(\bv\neq \bhv) \geq 3$, then $\forall i\in [n]$, we have $\#(\bv_{-i}\neq \bhv_{-i})\geq 2$, the mechanism degenerates to $\Mcal$ and becomes feasible. When $\#(\bv\neq \bhv) = 1$, only $\mathtt{DWC\mbox{-}VCG}_{\bhv}$ and $\Mcal_{\emptyset}$ are active, then the allocation is feasible since $\X$ is downward-closed. When $\#(\boldsymbol{v}\neq\hat{\boldsymbol{v}}) = 2$, only $\Mcal$ and $\Mcal_{\emptyset}$ are active, the allocation is also feasible.

    \textbf{Consistency and robustness:} When $\#(\bv\neq \bhv)=0$, \mech{rank2 1} degenerates to $\mathtt{DWC\mbox{-}VCG}_{\bhv}$. By \lem{DWC no ET}, the expected revenue is at least $OPT_{\X}(\mathtt{DaL}_{\bhv}(\bv)) = OPT_{\X}(\bv)$. When $\#(\boldsymbol{v}\neq \hat{\boldsymbol{v}})\geq 3$, \mech{rank2 1} degenerates to $\mathcal{M}$, thus it is $\alpha$-competitive against $EFO_{\mathcal{X}}^{(2)}(\boldsymbol{v})$.
\end{proof}
\subsection{Handling Insensitive Benchmarks}
With \mech{rank2 1}, the remaining case is $\#(\boldsymbol{v}\neq \hat{\boldsymbol{v}})= 1$ or $2$. In \sect{digital good auction}, we are only left with the case $\#(\boldsymbol{v}\neq \hat{\boldsymbol{v}})\leq 1$. Let us recall what we did to handle this case. \lem{dga 1} shows that even though $OPT_{\X}(\bhv)$ operates solely based on the predicted values $\bhv$, completely disregarding all bids, it still has a nice $2$-robustness ratio against $\F\cdot \mathbb{I}[\#(\bv\neq \bhv)=1]$ benchmark. The main reason is, since there is at most one bidder with the wrong prediction, we lose at most one bidder's revenue and $\F(\bhv)$ is somewhat insensitive to the revenue loss of one bidder. 

The insensitivity of $\F$ can be shown also exist in the benchmark $\EFOt{\X}$. In fact, we prove a stronger result showing that $EFO_{\X}^{(m)}(\bv):=EFO_{\X}(\bv^{(m)})$ is insensitive to the revenue loss of fewer than $m$ bidders. Here $\bv^{(m)} = (\underbrace{v_m,\ldots,v_m}_{m},v_{m+1},\ldots,v_n)$. To be specific, let $S\subseteq [n]$ be a set of bidders, $\bv_{-S}$ denotes the value vector by setting the value of bidders in $S$ to $0$. Then we have
\begin{lemma}\label{lem:-1to2}
    For any $\boldsymbol{v}$, symmetric $\mathcal{X}$, set $S\subseteq [n]$ and positive integers $m$ with $m>|S|$, we have $EFO_{\mathcal{X}}(\boldsymbol{v}_{-S})\geq \frac{m-|S|}{m} EFO^{(m)}_{\mathcal{X}}(\boldsymbol{v})$.
\end{lemma}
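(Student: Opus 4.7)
The plan is to exhibit an explicit feasible allocation $\boldsymbol{y}\in\X$ whose envy-free revenue on the descending sort of $\bv_{-S}$ is at least $\frac{m-|S|}{m}EFO^{(m)}_{\X}(\bv)$, and then invoke \lem{characterization of envy-free payment}. Throughout I write $k=|S|$.

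\textbf{Step 1: structure of the optimum for $\bv^{(m)}$.} Let $\boldsymbol{x}^*$ be the optimal monotone allocation attaining $EFO^{(m)}_{\X}(\bv)$. Because $j\cdot v^{(m)}_j=j\cdot v_m$ is linear on $[0,m]$, the concave envelope $R^{(m)}$ from the virtual-value construction satisfies $R^{(m)}(j)=j\cdot v_m$ for $0\leq j\leq m$, so the top-$m$ virtual values all equal $v_m$. Using symmetry and convexity of $\X$, I would symmetrize the top-$m$ coordinates of $\boldsymbol{x}^*$: replacing $x^*_1,\dots,x^*_m$ by their average $p:=\frac{1}{m}\sum_{j=1}^m x^*_j$ preserves feasibility, preserves the revenue (since all top-$m$ virtual values coincide), and preserves monotonicity because $p\geq x^*_m\geq x^*_{m+1}$. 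So WLOG $x^*_1=\cdots=x^*_m=p$.

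\textbf{Step 2: shift-by-$k$ allocation.} Define
\[
y_i=\begin{cases} p, & i\leq m-k,\\ x^*_{i+k}, & m-k<i\leq n-k,\\ 0, & i>n-k.\end{cases}
\]
Monotonicity of $\boldsymbol{y}$ is immediate from $p\geq x^*_{m+1}$ and the monotonicity of $\boldsymbol{x}^*$. For feasibility I would first use symmetry of $\X$ to permute the first $k$ coordinates of $\boldsymbol{x}^*$ into positions $n-k+1,\dots,n$; the resulting allocation agrees with $\boldsymbol{y}$ on the first $n-k$ entries and equals $p$ on the last $k$. Since $\boldsymbol{y}$ is coordinatewise dominated by this permuted allocation, downward-closedness yields $\boldsymbol{y}\in\X$.

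\textbf{Step 3: sorted $\bv_{-S}$ dominates a shifted copy of $\bv$.} Let $\boldsymbol{w}$ be the descending sort of $\bv_{-S}$. A counting argument (at most $k$ of the top $j+k$ entries of $\bv$ can lie in $S$) shows $w_j\geq v_{j+k}$ for $j\leq n-k$ (with $v_j=0$ for $j>n$); in particular $w_{m-k}\geq v_m$. This handles all $S$ uniformly, so no separate reduction to $S=[k]$ is needed.

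\textbf{Step 4: the final computation.} Applying $EF^{\boldsymbol{y}}(\boldsymbol{w})=\sum_j j\,w_j(y_j-y_{j+1})$, only the term $j=m-k$ and the terms $m-k<j\leq n-k$ survive. After substituting $j'=j+k$ and using Step~3,
\[
EF^{\boldsymbol{y}}(\boldsymbol{w})\;\geq\;(m-k)\,v_m(p-x^*_{m+1})+\sum_{j'=m+1}^{n}(j'-k)\,v_{j'}(x^*_{j'}-x^*_{j'+1}).
\]
The analogous sum-by-parts on $\boldsymbol{x}^*$ against $\bv^{(m)}$ gives
\[
EFO^{(m)}_{\X}(\bv)=m\,v_m(p-x^*_{m+1})+\sum_{j'=m+1}^{n}j'\,v_{j'}(x^*_{j'}-x^*_{j'+1}).
\]
The target inequality then reduces term-by-term to $j'-k\geq\tfrac{m-k}{m}j'$ for $j'\geq m+1$, which rearranges to $j'\geq m$ and is immediate.

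\textbf{Main obstacle.} The subtlest point is feasibility of $\boldsymbol{y}$: it crucially exploits the interplay of symmetry, convexity, and downward-closedness of $\X$ together with the equal-virtual-value property $\phi^{(m)}_i=v_m$ for $i\leq m$ (without which Step~1's symmetrization would change the revenue). Once $\boldsymbol{y}$ is pinned down, the rest is routine sum-by-parts bookkeeping.
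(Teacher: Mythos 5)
Your proposal is correct, and it proves the lemma by a genuinely different route than the paper. The paper first reduces to the worst-case $S$ (dropping the $|S|$ largest coordinates, using monotonicity of $EFO_{\X}$), then works entirely on the virtual-value side: it shows the pointwise inequality $R'(j)\geq \frac{m-s}{m}R(j)$ between the concave revenue envelopes of $\boldsymbol{v}'$ and $\boldsymbol{v}^{(m)}$ (the key step being a chord argument forcing the optimal parameters to satisfy $i^*\geq m$), and finally transfers this to revenues by Abel summation, reusing the \emph{same} optimal monotone allocation for $\boldsymbol{v}^{(m)}$. You instead stay on the allocation side: you symmetrize the top-$m$ block of the optimal allocation (legitimate because the marginal coefficients $jv_m-(j-1)v_m=v_m$ are constant there, so only $\sum_{j\leq m}x^*_j$ matters, and feasibility follows from symmetry plus convexity of $\X$), shift it by $k=|S|$ positions (feasible by symmetry plus downward-closedness, via domination by a cyclic permutation of $\boldsymbol{x}^*$), and close with the elementary counting bound $w_j\geq v_{j+k}$ and a term-by-term comparison reducing to $j'-k\geq \frac{m-k}{m}j'$ for $j'\geq m$; all steps check out, including the envy-free revenue formula $EF^{\boldsymbol{y}}(\boldsymbol{w})=\sum_j j\,w_j(y_j-y_{j+1})$ and the fact that a feasible monotone $\boldsymbol{y}$ lower-bounds $EFO_{\X}(\boldsymbol{w})=EFO_{\X}(\bv_{-S})$. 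What each approach buys: yours is more self-contained (no envelope/chord analysis) and handles arbitrary $S$ directly without the worst-case reduction; the paper's needs only symmetry of $\X$ (matching the literal hypothesis of the lemma) and yields the stronger pointwise envelope comparison, whereas you additionally invoke convexity and downward-closedness of $\X$ --- both are standing assumptions of the paper's environment, so this is harmless in context, but it is worth noting explicitly since the lemma statement mentions only symmetry.
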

\begin{proof}
    Let $s:=|S|$, define $\boldsymbol{v}':=(v_{s+1},\ldots,v_n,\underbrace{0,\ldots,0}_{s})$. Since $EFO_{\X}(\boldsymbol{v}')\leq EFO_{\X}(\bv_{-S})$ for any set $S$ with $s$ bidders, we only need to show 
    \[EFO_{\X}(\bv')\geq \frac{m-s}{m}EFO^{}(\bv^{(m)}).\]
    Let $R:[0,n]\rightarrow \mathbb{R}$ be the non-decreasing concave envelop of $\{(i,iv_i^{(m)})\}_{i=0}^n$, which we have introduced in \lem{virtual revenue}. Similarly, let $R':[0,n]\rightarrow \mathbb{R}$ be the non-decreasing concave envelop of $\{(i,iv'_i)\}_{i=0}^n$.

    We first show that for any $j\in [n]$, $R'(j)\geq  \frac{m-s}{m}R(j)$, then we use this fact to prove the lemma. For $j\leq m-s$, $R(j)=j\cdot v_m$, $R'(j)\geq j\cdot v_{s+j}\geq j\cdot v_m=R(j)$. For $m-s<j\leq m$, $R(j)=j\cdot v_m$ and 
    \[R'(j)\geq R'(m-s)\geq (m-s)v_m = \frac{m-s}{j} R(j)\geq \frac{m-s}{m}R(j).\]

    Next, we prove $R'(j)\geq  \frac{m-s}{m}R(j)$ for $j>m$. The formal definition of $R$ and $R'$ can be written as follows and we only focus their evaluation on integer numbers, 
    \[R(j) = \max_{1\leq l\leq j}\max_{\genfrac{}{}{0pt}{}{i,k:}{1\leq i\leq l\leq k\leq n}} \left[iv_i^{(m)} \frac{k-l}{k-i} + kv_k^{(m)} \frac{l-i}{k-i}\right],\quad  R'(j) = \max_{1\leq l\leq j}\max_{\genfrac{}{}{0pt}{}{i,k:}{1\leq i\leq l\leq k\leq n}} \left[iv'_i \frac{k-l}{k-i} + kv'_k \frac{l-i}{k-i}\right]\]
    Let $i^*,k^*,l^*$ be the parameter such that
    \[R(j) = i^*v_{i^*}^{(m)} \frac{k^*-l^*}{k^*-i^*} + k^*v_{k^*}^{(m)} \frac{l^*-i^*}{k^*-i^*}\]
    It is obvious that $l^*\geq m$ thus $k^*\geq l^*\geq m$. We further require $i^*\geq m$, this is because if $i^*<m$ and $k^*\geq m$, we have 
    \[R(j)= i^* v_m \frac{k^*-l^*}{k^*-i^*} + k^*v_{k^*}\frac{l^*-i^*}{k^*-i^*}.\] 
    Replace $i^*$ with $m$, we have
    \[m v_m \frac{k^*-l^*}{k^*-m} + k^*v_{k^*}\frac{l^*-m}{k^*-m}> i^* v_m \frac{k^*-l^*}{k^*-i^*} + k^*v_{k^*}\frac{l^*-i^*}{k^*-i^*}.\]
    Which leads to a contradiction. The LHS of above inequality is a weighted average of $mv_m$ and $k^*v_{k^*}$, and the RHS is a weighted average of $i^*v_m$ and $k^*v_{k^*}$. The inequality is because $mv_m>i^*v_m$ and $\frac{k^*-l^*}{k^*-m}>\frac{k^*-l^*}{k^*-i^*}$, $\frac{l^*-m}{k^*-m}<\frac{l^*-i^*}{k^*-i^*}$. 

    With the above argument, we can find $m\leq i^*\leq l^*\leq k^*\leq n, l^*\leq j$ such that 
    \[R(j) = i^*v_{i^*}^{(m)} \frac{k^*-l^*}{k^*-i^*} + k^*v_{k^*}^{(m)} \frac{l^*-i^*}{k^*-i^*}=i^*v_{i^*} \frac{k^*-l^*}{k^*-i^*} + k^*v_{k^*} \frac{l^*-i^*}{k^*-i^*}\]
    Let $i':=i^*-s>0$, $l':=l^*-s>0$, $k':=k^*-s>0$, we have
    \begin{align*}
        R'(j)&\geq i'v'_{i'} \frac{k'-l'}{k'-i'} + k'v'_{k'} \frac{l'-i'}{k'-i'}
        \\&= (i^*-s) v_{i^*}\frac{k^*-l^*}{k^*-i^*} + (k^*-s)v_{k^*}\frac{l^*-i^*}{k^*-i^*}
        \\&\geq \min\left\{\frac{i^*-s}{i^*}, \frac{k^*-s}{k^*}\right\}\left(i^*v_{i^*} \frac{k^*-l^*}{k^*-i^*} + k^*v_{k^*} \frac{l^*-i^*}{k^*-i^*}\right)
        \\&\geq \frac{m-s}{m} R(j)
    \end{align*}

   Since $\mathcal{X}$ is symmetric, there is a monotone allocation $\boldsymbol{x}\in \mathcal{X}$ such that $EFO_{\mathcal{X}}(\boldsymbol{v}^{(m)}) = EF^{\boldsymbol{x}}(\boldsymbol{v}^{(m)})$. We next show that $EF^{\boldsymbol{x}}(\boldsymbol{v}')\geq \frac{m-s}{m}EF^{\boldsymbol{x}}(\boldsymbol{v}^{(m)})$, which is enough to prove $EFO_{\mathcal{X}}(\boldsymbol{v}')\geq \frac{m-s}{m}EFO_{\mathcal{X}}(\boldsymbol{v}^{(m)})$.
    
    By Abel's lemma(a.k.a. summation by parts),
    \begin{align*}
        EF^{\boldsymbol{x}}(\boldsymbol{v}') &= \sum_{i=1}^n (R'(i)-R'(i-1)) \cdot x_i
        \\&= x_n R'(n)+ \sum_{i=2}^n R'(i-1) (x_{i-1}-x_{i})
        \\&\geq x_n \frac{m-s}{m} R(n) + \sum_{i=2}^n \frac{m-s}{m}R(i-1) (x_{i-1}-x_{i})
        \\&= \frac{m-s}{m}\sum_{i=1}^n (R(i)-R(i-1)) \cdot x_i
        \\&=\frac{m-s}{m}EF^{\boldsymbol{x}}(\boldsymbol{v}^{(m)})
    \end{align*}
    The second and third equality is because of Abel's lemma. The first inequality is because $R'(j)\geq \frac{m-s}{m}R(j), \forall j \in [n]$ and $x_{i-1}-x_i\geq 0$.
\end{proof}
This lemma indicates that even if we ignore arbitrary an arbitrary set of $s$ bidders for $s<m$, the resulting $EFO_{\X}$ benchmark is still left with a constant proportional part compared with $EFO^{(m)}_{\X}(\bv)$. With these lemma, we prove that $\mathtt{DWC\mbox{-}VCG}_{\bhv}$ can tackle the benchmark $EFO_{\X}^{(2)}\cdot \mathbb{I}[\#(\bv\neq \bhv)= 1]$ and $EFO_{\X}^{(3)}\cdot \mathbb{I}[\#(\bv\neq \bhv)= 2]$, since $EFO_{\X}^{(2)}$ is insensitive to one wrong prediction and $EFO_{\X}^{(3)}$ is insensitive to two wrong predictions by \lem{-1to2}.

\begin{mechanism}\label{mech:insensitive benchmark}
    Run $\mathtt{DWC\mbox{-}VCG}_{\bhv}$ on all bidders.
\end{mechanism}

\begin{lemma}
    \mech{insensitive benchmark} is truthful and feasible. Moreover, it is $1$-consistent against $OPT_{\X}(\bv)$ and $3$-competitive against the benchmark $EFO_{\X}^{(2)}(\bv)\cdot\mathbb{I}[\#(\bv\neq \bhv) = 1]+EFO_{\X}^{(3)}\mathbb{I}[\#(\bv\neq \bhv) = 2]$.
\end{lemma}
\begin{proof}
    Let $\boldsymbol{x}$ be the allocation of \mech{insensitive benchmark}.
    
    \textbf{Truthfulness} is due to \lem{DWC truthful no ET}. \textbf{Feasibility} is obvious. 

    \textbf{Consistency and robustness:} By \lem{DWC no ET}, the total expected revenue is at least $OPT_{\X}(\mathtt{DaL}_{\bhv}(\bv))$. When $\#(\bv\neq \bhv)=0$, $\mathtt{DaL}_{\bhv}(\bv) = \bv$. Therefore the expected revenue is $OPT_{\X}(\bv)$.

    When $\#(\bv\neq \bhv) \geq 1$, let $S := \{k \mid v_k\neq \hat{v}_k\}$, then $\mathtt{DaL}_{\bhv}(\bv)\geq \bv_{-S}$. For any $m\geq 1$,
    \[OPT_{\X}(\mathtt{DaL}_{\bhv}(\bv))\geq OPT_{\X}(\bv_{-S})\geq EFO_{\X}(\bv_{-S})\geq \frac{m+1-|S|}{m+1}EFO^{(m+1)}_{\X}(\bv).\]
    The last inequality is because \lem{-1to2}. Let $m=2$ and $m=3$, we have the expected revenue is at least
    \begin{align*}
        &\max\left\{\frac{1}{2} EFO_{\X}^{(2)}(\bv)\cdot\mathbb{I}[\#(\bv\neq \bhv) = 1],\frac{1}{3} EFO_{\X}^{(3)}(\bv)\cdot\mathbb{I}[\#(\bv\neq \bhv) = 2] \right\} 
        \\&\geq EFO_{\X}^{(2)}(\bv)\cdot\mathbb{I}[\#(\bv\neq \bhv) = 1]+ EFO_{\X}^{(3)}(\bv)\cdot\mathbb{I}[\#(\bv\neq \bhv) = 2].
    \end{align*}
    
\end{proof}


\subsection{Handling Sensitive Benchmark}
Now we focus on the remaining $2v_2\cdot\mathbb{I}[\#(\bv\neq \bhv)= 2]$ benchmark. To tackle this benchmark, we slightly modify the single-item Vickrey auction and use the bid-independent combination of this modified version and $OPT_{\X}(\bhv)$. Since we are now facing a concrete mechanism rather than a black box, we can further leverage the properties of the single-item Vickrey auction to address the feasibility issue. 

Before proceeding with the next proofs, let us first elaborate on the tie-breaking rule of $\mathtt{DWC\mbox{-}VCG}_{\bhv}$, which is crucial for the subsequent proof. Let $\boldsymbol{x}^*$ be the allocation of $\mathtt{DWC\mbox{-}VCG}_{\bhv}$, we require that $x^*_{i}\geq x^*_{j}$ for all $i<j$. The following lemma ensures that we can meet this requirement.

\begin{lemma}[Tie-Breaking for Linear Optimization]\label{lem:tie-breaking}
    For symmetric $\X$, there exist a solution $\boldsymbol{x}^*$ such that $\boldsymbol{x}^*=\argmax_{\boldsymbol{x}\in\mathcal{X}}\sum_{i=1}^n x_{i}\cdot v_{i}$ and for any $i<j$, we have $x^*_{i}\geq x^*_{j}$.
\end{lemma}
\begin{proof}
    Since $\X$ is symmetric, then if $v_{i} > v_{j}$, we have $x^*_{i}\geq x^*_{j}$, or we can swap $x^*_{i}$ and $x^*_{j}$ and get another feasible solution that has a strictly better objective. 
    
    If there exist $i<j$ such that for any $k$ satisfying $i\leq k\leq j$, $v_i=v_k=v_{j}$, then we can always find new optimal allocation $\hat{\boldsymbol{x}}^*$ such that $\hat{x}^*_{k}:= \frac{\sum_{i\leq h\leq j} \hat{x}_{h}} {j-i+1}$, $\forall i\leq k\leq j$, and for $k>j$ and $k<i$ we let $\hat{x}^*_{k}:= x^*_{k}$. The new allocation $\hat{\boldsymbol{x}}^*$ has the same revenue as $\hat{\boldsymbol{x}}$. After modifying all such $i,j$ tuple, we get the allocation we want. 
\end{proof}



Since $2v_2$ benchmark is sensitive to two wrong predictions, we have no hope of obtaining a bounded robustness ratio if we still use a bid-independent combination of mechanisms that do not make use of the bids. Our mechanism is based on a modified version of the single-item Vickrey auction (that is, serve the bidder $1$ with probability $1$ and offer the second highest bid as the payment of the bidder $1$). After the modification, it can run concurrently with $OPT_{\X}(\bhv)$, which addresses the feasibility issue of the bid-independent combination approach. Also, it guarantees at least half the revenue of the original single-item Vickrey auction.

\paragraph{Restricted Single-Item Vickrey Auction} Let $\boldsymbol{x}^*$ be the allocation of $\mathtt{DWC\mbox{-}VCG}_{\bhv}$ and $\overline{\boldsymbol{x}}$ be the allocation of single-item Vickrey auction (i.e. $\overline{x}_1=1, \overline{x}_i=0,\forall i\neq 1$). Given any bidder $j$, we slightly modify the single-item Vickrey auction by restricting the allocation of the only winner.  Let $\rho:[n]\rightarrow [n]$ be the decreasing order of entries of $\mathtt{DaL}_{\bhv}(\bv)$, if there are two bidders $i,j$ with $\mathtt{DaL}_{\hat{v}_i}(v_i)= \mathtt{DaL}_{\hat{v}_j}(v_j)$, we break ties such that $\rho$ is consistent with the index order. That is, if $v_i=v_j$, $\mathtt{DaL}_{\hat{v}_i}(v_i)= \mathtt{DaL}_{\hat{v}_j}(v_j)$ and $i<j$, we set $\rho(i)<\rho(j)$, and vice versa. Then the allocation vector $\boldsymbol{x}^{j,\downarrow}$ of the restricted single-item Vickrey auction is defined by $x^{j,\downarrow}_1 = \max\{x_{\rho(1)}^*,1-x_j^*\}$ and $x^{j,\downarrow}_i=0, \forall i\neq 1$. When there is no ambiguity in the context, we omit $j$ and use the notation $\boldsymbol{x}^{\downarrow}$ in place of $\boldsymbol{x}^{j,\downarrow}$. This allocation rule is monotone and we can charge the threshold bidding (i.e. $v_2$) as the unit price, which ensures the truthfulness of the mechanism. We use the notation $\mathtt{ResVic}_j$ to refer to the restricted single-item Vickrey auction. 

Our mechanism for $2v_2\cdot \mathbb{I}[\#(\bhv\neq \bv)=2]$ is a bid-independent combination of $\mathtt{DWC\mbox{-}VCG}_{\bhv}$, $\mathtt{ResVic}_j$, and original single-item Vickrey auction. Therefore it is truthful.

\begin{mechanism}\label{mech:rank2 4}
    For each bidder $i$, run $\mathtt{DWC\mbox{-}VCG}_{\bhv}$ on $i$ if $\boldsymbol{v}_{-i}= \hat{\boldsymbol{v}}_{-i}$. If $\#(\boldsymbol{v}_{-i}\neq \hat{\boldsymbol{v}}_{-i})=1$, let $j\neq i$ be the bidder with the wrong prediction, run $\mathtt{ResVic}_j$ on $i$. If  $\#(\boldsymbol{v}_{-i}\neq \hat{\boldsymbol{v}}_{-i})\geq 2$, run single-item Vickrey auction on $i$.
\end{mechanism} 
\begin{lemma}\label{lem:rank2 4}
    \mech{rank2 4} outputs a feasible allocation. Moreover, \mech{rank2 4} is $1$-consistent against $OPT_{\X}$, $4$-robust against the benchmark $2v_2\cdot \mathbb{I}[\#(\bv\neq \bhv)=2]$.
\end{lemma}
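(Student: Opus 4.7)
I plan to verify feasibility, $1$-consistency against $OPT_{\X}$, and $4$-robustness against $2v_2\cdot\mathbb{I}[\#(\bv\neq\bhv)=2]$. Truthfulness of \mech{rank2 4} is automatic because it is a bid-independent combination of $OPT_{\X}(\bhv)$, $\mathtt{ResVic}_j$, and the single-item Vickrey auction, each of which is individually truthful and whose selection for bidder $i$ depends only on $\bhv$ and $\bv_{-i}$. Consistency is likewise easy: when $\bv=\bhv$ every bidder $i$ satisfies $\bv_{-i}=\bhv_{-i}$ and is thus handled by $OPT_{\X}(\bhv)$, each bidder accepts the posted price $\hat{v}_i$, and the total revenue equals $\sum_i \hat{x}_i\hat{v}_i = OPT_{\X}(\bv)$.

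The main obstacle is feasibility, because on the event $\#(\bv\neq\bhv)=1$ the mechanism $OPT_{\X}(\bhv)$ (applied only to the lone wrong predictor $j$) and $\mathtt{ResVic}_j$ (applied to all other bidders) coexist, and a priori nothing guarantees that the joint allocation lies in $\X$. I would proceed by case analysis on $\#(\bv\neq\bhv)$. For $\#=0$ the mechanism degenerates to $OPT_{\X}(\bhv)\in\X$. For $\#\geq 2$, every bidder $i$ runs either $\mathtt{ResVic}_{\cdot}$ or single-item Vickrey, both of which allocate only to bidder~$1$, so the total allocation is dominated by $\boldsymbol{e}_1\in\X$. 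The delicate case is $\#=1$ with lone wrong bidder $j$. If $j=1$, or if $j\neq 1$ and $v_j<\hat{v}_j$, then bidder $j$ is not served (truthfully rejecting the $OPT_{\X}(\bhv)$ offer of $\hat{v}_j$ in the latter case), leaving only bidder~$1$ served and feasibility trivial. The crucial scenario is $j\neq 1$ and $v_j>\hat{v}_j$; the pivotal observation here, proved by contradiction, is that $j\neq \sigma(1)$, for otherwise $\hat{v}_j\geq \hat{v}_1=v_1\geq v_j>\hat{v}_j$, which is impossible. With $j\neq \sigma(1)$, \lem{tie-breaking} gives $\hat{x}_j\leq \hat{x}_{\sigma(2)}$, and the realized allocation $x^{j,\downarrow}_1\boldsymbol{e}_1+\hat{x}_j\boldsymbol{e}_j$ is coordinatewise dominated by the permutation of $\hat{\boldsymbol{x}}$ that places $\hat{x}_{\sigma(1)}$ and $\hat{x}_{\sigma(2)}$ at positions $1$ and $j$, which lies in $\X$ by symmetry; downward-closedness then yields feasibility. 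In the sub-case $x^{j,\downarrow}_1=1-\hat{x}_j$, I would alternatively note that the allocation is a convex combination of $\boldsymbol{e}_1,\boldsymbol{e}_j,\boldsymbol{0}$, all feasible.

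For $4$-robustness, the benchmark is nonzero only when $\#(\bv\neq\bhv)=2$; write $\{j_1,j_2\}$ for the wrong bidders. Bidder~$1$ is always the only winner and is charged the unit price $v_2$: if $1\notin\{j_1,j_2\}$ it wins with probability $1$ under single-item Vickrey, while otherwise it wins with probability $x^{j',\downarrow}_1$ under $\mathtt{ResVic}_{j'}$, where $j'$ is the other wrong bidder. A short case check on whether $\hat{x}_{j'}\leq \tfrac12$ or $\hat{x}_{j'}>\tfrac12$ shows $\max\{\hat{x}_{\sigma(1)},1-\hat{x}_{j'}\}\geq \tfrac12$ unconditionally (via $1-\hat{x}_{j'}\geq \tfrac12$ in the first subcase, and via $\hat{x}_{\sigma(1)}\geq \hat{x}_{j'}>\tfrac12$ in the second). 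Hence the revenue is at least $v_2/2=\tfrac14\cdot 2v_2$, matching the $4$-robust ratio.
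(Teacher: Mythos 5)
Your proof is correct and follows essentially the same route as the paper's: the same case analysis on the number of wrong predictions, the key observation that $j\neq\sigma(1)$, the tie-breaking lemma combined with symmetry/downward-closedness (and convexity in the $1-\hat{x}_j$ sub-case) for feasibility, and the bound $\max\{\hat{x}_{\sigma(1)},1-\hat{x}_j\}\geq \tfrac12$ giving the $4$-robustness. The only cosmetic difference is that in the sub-case $x_1=\hat{x}_{\sigma(1)}$ you dominate the allocation by a permutation of $\hat{\boldsymbol{x}}$ placing $\hat{x}_{\sigma(2)}$ at position $j$, whereas the paper uses $\hat{x}_{\sigma(1)}\boldsymbol{e}_{\sigma(1)}+\hat{x}_j\boldsymbol{e}_j\leq\hat{\boldsymbol{x}}$ and then symmetry.
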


\begin{proof}
    \textbf{Feasibility:} When $\#(\boldsymbol{v}\neq \hat{\boldsymbol{v}})= 1$, let $j$ be the bidder with the wrong prediction. In this case, $j$ is applied with $\mathtt{DWC\mbox{-}VCG}_{\bhv}$, and others are applied with $\mathtt{ResVic}_j$. Let the allocation be $\boldsymbol{x}$, We have $x_j = x^*_j$, $x_i = x_i^{\downarrow}, \forall i\neq j$. Note that $x_i^{\downarrow} >0$ only when $i=1$. If $j=1$, then $\boldsymbol{x}\leq \boldsymbol{e}_1$ thus $\boldsymbol{x}$ is feasible. 
    If $v_j<\hat{v}_j$, then $x_j = 0$ and the feasibility holds.
    Therefore, we assume $v_j> \hat{v}_j$ and $j\neq 1$ below.
    Under this assumption, $j\neq \rho(1)$, otherwise $v_j>\hat{v}_j=\hat{v}_{\rho(1)}=\mathtt{DaL}_{\hat{v}_{\rho(1)}}(v_{\rho(1)})\geq\mathtt{DaL}_{\hat{v}_{k}}(v_{k})=v_k, \forall k\neq j$, thus $j=1$, which is contradict to our assumption.
    By the definition of $\mathtt{ResVic}_j$, $x_1=\max\{x^*_{\rho(1)},1-x^*_{j}\}$. 
    If $x_1=x^*_{\rho(1)}$, we have $\boldsymbol{x}= x_{\rho(1)}^*\boldsymbol{e}_{1}+x_j^* \boldsymbol{e}_{j}$. Since $\rho(1)\neq j$, we have $x_{\rho(1)}^*\boldsymbol{e}_{\rho(1)}+x_j^* \boldsymbol{e}_{j}\leq \boldsymbol{x}^*\in\X$, then $\boldsymbol{x}\in \X$ by the symmetry of $\X$ since $\boldsymbol{e}_j\X$ and $\boldsymbol{e}_1\in\X$.
    If $x_1 = 1- x^*_j$, then $\boldsymbol{x}= x_j^* \boldsymbol{e}_j+ (1-x_j^*)\boldsymbol{e}_1\in \X$ by the convexity of $\X$. Therefore $\boldsymbol{x}\in \X$.
    
    When $\#(\boldsymbol{v}\neq \hat{\boldsymbol{v}})\geq 2$, all bidders are applied with the single-item Vickrey auction or its restricted version,  only $x_1>0$, so the allocation is feasible.
    
    \textbf{Consistency:} If $\boldsymbol{v} = \hat{\boldsymbol{v}}$, the mechanism degenerates to $\mathtt{DWC\mbox{-}VCG}_{\bhv}$. Thus the consistency ratio is $1$.
    
    \textbf{Robustness:}It is enough to consider the revenue when $\#(\boldsymbol{v}\neq \hat{\boldsymbol{v}})=2$. In this situation, if $\#(\boldsymbol{v}_{-1}\neq \hat{\boldsymbol{v}}_{-1})\geq 2$, then bidder $1$ is applied with single-item Vickrey auction and pay $v_2$, thus the revenue is at least $v_2$. If $\#(\boldsymbol{v}_{-1}\neq \hat{\boldsymbol{v}}_{-1})= 1$, let $j\neq 1$ be the bidder with the wrong prediction, the bidder $1$ is applied with $\mathtt{ResVic}_j$, and pays $\max\{x^*_{\rho(1)},1-x^*_j\}\cdot v_2$. Since  $\mathtt{DaL}_{\hat{v}_j}(v_j)\leq \mathtt{DaL}_{\hat{v}_{\rho(1)}}(v_{\rho(1)})$
    and \lem{tie-breaking}, we have $x^*_{\rho(1)}\geq x^*_j$, then $\max\{x^*_{\rho(1)},1-x^*_j\}\geq \max\{x_j^*, 1-x_j^*\}\geq \frac{1}{2}$, thus the revenue is at least $\frac{v_2}{2}$. Overall, the robustness ratio is at most $4$.
\end{proof}

\subsection{Overall Mechanism}
Our overall mechanism for $\EFOt{\X}$ benchmark is running \mech{rank2 1}, \mech{insensitive benchmark}, \mech{rank2 4} with probability $\frac{\alpha}{\alpha+7}$, $\frac{3}{\alpha+7}$, $\frac{4}{ 
 \alpha+7}$ respectively where $\alpha$ is the competitive ratio of the black-box mechanism. The following theorem holds immediately from \lem{benchmark decomposition}.

\begin{theorem}\label{thm:general}
    If there is an $\alpha$-competitive truthful mechanism against $\EFOt{\X}$ for the downward-closed permutation environment, then there is a truthful mechanism with $1$-consistency ratio against $OPT_{\X}$, $(\alpha+7)$-robustness ratio against $\EFOt{\X}$ for the auctions with downward-closed permutation environment.
\end{theorem}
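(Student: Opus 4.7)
The plan is to assemble the claimed mechanism as a convex combination of the three building blocks already constructed in this section, using the benchmark decomposition stated right before Section~\ref{sect:insensitive}, namely
\begin{align*}
    EFO^{(2)}_{\mathcal{X}}(\boldsymbol{v}) &\leq  EFO^{(2)}_{\mathcal{X}}(\boldsymbol{v})\cdot \mathbb{I}[\#(\boldsymbol{v}\neq \hat{\boldsymbol{v}})\geq 3]+2v_2\cdot \mathbb{I}[\#(\boldsymbol{v}\neq \hat{\boldsymbol{v}})=2] \\
    &\quad + EFO^{(2)}_{\mathcal{X}}(\boldsymbol{v})\cdot \mathbb{I}[\#(\boldsymbol{v}\neq \hat{\boldsymbol{v}})\leq 1]+EFO^{(3)}_{\mathcal{X}}(\boldsymbol{v})\cdot \mathbb{I}[\#(\boldsymbol{v}\neq \hat{\boldsymbol{v}})=2].
\end{align*}
Each of \mech{rank2 1}, \mech{insensitive benchmark}, and \mech{rank2 4} is already shown to be truthful, feasible, and $1$-consistent against $OPT_{\X}$, while handling one of the three summands: by \lem{rank2 1}, \mech{rank2 1} is $\alpha$-robust against the first summand; by \cor{} (the corollary of \lem{insensitive benchmark}), \mech{insensitive benchmark} is $3$-robust against the "insensitive" pair in the second line; and by \lem{rank2 4}, \mech{rank2 4} is $4$-robust against $2v_2 \cdot \mathbb{I}[\#(\boldsymbol{v}\neq \hat{\boldsymbol{v}})=2]$.

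Next, I would invoke the straightforward three-way extension of \lem{benchmark decomposition}: if $\Mcal_1, \Mcal_2, \Mcal_3$ are all $1$-consistent against $OPT_{\X}$ and respectively $\alpha_i$-robust against $f_i$, then the mechanism that runs $\Mcal_i$ with probability $\alpha_i / (\alpha_1+\alpha_2+\alpha_3)$ is truthful, $1$-consistent against $OPT_{\X}$, and $(\alpha_1+\alpha_2+\alpha_3)$-robust against $f_1+f_2+f_3$. The consistency follows because all three sub-mechanisms achieve $OPT_{\X}(\bv)$ exactly when $\bhv=\bv$, and the robustness is a one-line averaging argument identical to the one in the proof of \lem{benchmark decomposition}. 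Plugging in $\alpha_1=\alpha$, $\alpha_2=3$, $\alpha_3=4$ and weights $\alpha/(\alpha+7), 3/(\alpha+7), 4/(\alpha+7)$ gives a mechanism that is $(\alpha+7)$-robust against the sum of the three benchmarks, which upper-bounds $\EFOt{\X}(\bv)$ by the displayed decomposition.

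There is essentially no obstacle at this stage, since all the heavy lifting, namely feasibility under bid-independent combination (especially the $\Mcal_\emptyset$ bridge in \mech{rank2 1} and the restricted Vickrey allocation in \mech{rank2 4}) and the insensitivity estimate of \lem{-1to2} underlying \mech{insensitive benchmark}, has already been done. The only thing to check carefully is that feasibility is preserved by the outer randomization: this is immediate because a convex combination of feasible allocations is feasible whenever $\X$ is convex, which is assumed in Section~\ref{sect:auction environment}. The theorem then follows directly by substituting the resulting $(\alpha+7)$ bound into the decomposition.
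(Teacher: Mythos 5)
Your proof is correct and follows exactly the same route as the paper: the same benchmark decomposition into three summands, the same assignment of \mech{rank2 1}, \mech{insensitive benchmark}, and \mech{rank2 4} to those summands, and the same probabilistic combination via (the three-way extension of) \lem{benchmark decomposition}. Your observation that feasibility of the outer randomization rests on the convexity of $\X$ is a correct, if implicit in the paper, detail.
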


Since there is a $6.5$-competitive truthful mechanism against $\EFOt{\X}$ for the downward-closed permutation environment \cite{chen2015competitive}, we have the following corollary.

\begin{corollary}\label{cor:general}
    There is a truthful mechanism with $1$-consistency ratio against $OPT_{\X}$, $13.5$-robustness ratio against $\EFOt{\X}$ for the auctions with downward-closed permutation environment.
\end{corollary}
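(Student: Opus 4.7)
The plan is essentially to instantiate \thm{general} with a concrete black-box competitive mechanism, so the bulk of the work has already been done in the preceding subsections. First I would invoke the result of Chen et al.~\cite{chen2015competitive}, which exhibits a truthful mechanism for downward-closed permutation environments that is $6.5$-competitive against $\EFOt{\X}$. This mechanism will play the role of $\Mcal$ inside \mech{rank2 1} (the bid-independent combination of $OPT_{\X}(\bhv)$, $\Mcal_\emptyset$, and $\Mcal$).

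Next I would simply apply \thm{general} with $\alpha = 6.5$. The theorem packages together the three sub-mechanisms (\mech{rank2 1}, \mech{insensitive benchmark}, \mech{rank2 4}) via the randomized combination described at the start of Section~5.4, with weights $\tfrac{\alpha}{\alpha+7}$, $\tfrac{3}{\alpha+7}$, $\tfrac{4}{\alpha+7}$. These weights are chosen precisely so that the benchmark decomposition
\[
\EFOt{\X}(\bv) \le EFO^{(2)}_{\X}(\bv)\cdot \mathbb{I}[\#(\bv\neq\bhv)\geq 3] + 2v_2\cdot \mathbb{I}[\#(\bv\neq\bhv)=2] + \bigl(\text{Insensitive Benchmarks}\bigr)
\]
combines, through \lem{benchmark decomposition}, into an $(\alpha+7)$-robustness guarantee against $\EFOt{\X}$. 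Substituting $\alpha = 6.5$ gives the claimed $13.5$ robustness ratio, while perfect consistency is preserved because each component mechanism is $1$-consistent against $OPT_{\X}$.

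There is no real obstacle here beyond verifying that the cited $6.5$-competitive mechanism from \cite{chen2015competitive} indeed satisfies the truthfulness and competitiveness hypotheses required by \thm{general}; unlike \thm{limited supply efo}, no additional structural property (such as the winner-set containment used in \lem{employee}) is needed, because \mech{rank2 1} already uses $\Mcal_\emptyset$ to mediate between $OPT_{\X}(\bhv)$ and the black box, so feasibility is guaranteed for any truthful $\Mcal$. Thus the corollary follows by a single line of arithmetic once \thm{general} and \cite{chen2015competitive} are quoted.
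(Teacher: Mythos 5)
Your proposal is correct and is exactly the paper's proof: instantiate \thm{general} with $\alpha=6.5$ from \cite{chen2015competitive}, yielding $6.5+7=13.5$. Your additional remark that no structural assumption on $\Mcal$ (unlike the winner-set containment in \lem{employee}) is required, because $\Mcal_\emptyset$ mediates the bid-independent combination in \mech{rank2 1}, is also accurate and matches the paper's design.
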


\section{Online Auction}\label{sect:online auctions}
In this section, we investigate the online auction setting. Suppose that bidders arrive online in random order. Upon the arrival of each bidder $i$, the prediction $\hat{v}_i$ is revealed to the auctioneer. Then, the auctioneer posts an irrevocable price for the services based on the bids and predictions of previously arrived bidders. There is no feasibility constraint, which is the same as the digital good auctions.

Koutsoupias et al.~\cite{koutsoupias2013competitive} show a black-box reduction result that any truthful offline mechanism gives rise to an online auction in their paper. Given an offline mechanism $\Mcal$, the reduction is conducted as follows: at each arriving bidder $i$, we collect the bids $\{b_1,b_2,\ldots,b_{i-1}\}$ of previously arrived bidders. Then we run a $i$-bidders offline auction $\Mcal$ on all arrived $i$ bidders. $\Mcal$ is actually giving a price to bidder $i$ according to $\{b_1,b_2,\ldots,b_{i-1}\}$, we then provide this price to bidder $i$. This reduction is called \textit{Online Sampling Auction}. Koutsoupias and Pierrakos prove that the online sampling auction transforms an 
offline mechanism to an online one with at most twice the competitive ratio. 
\begin{lemma}[\cite{koutsoupias2013competitive}]\label{lem:online sampling auction}
    If $\Mcal$ is 
    $\alpha$-competitive against $\F$ or $\maxV$ in an offline digital good auction environment. Then the online sampling auction is $2\alpha$-competitive against $\F$ or $\maxV$ in online auctions.
\end{lemma}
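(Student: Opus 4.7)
The plan rests on the random-order symmetry of arrivals combined with the offline competitiveness of $\Mcal$, followed by a combinatorial comparison between the benchmark on random subsets and on the full input.

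First, I would rewrite the expected revenue of the online sampling auction as a sum over arrival positions. Let $S_t \subseteq [n]$ be the (random) set of the first $t$ arrivals. Because $\Mcal$ is truthful, the price offered to the $t$-th arriving bidder is the threshold price computed by $\Mcal$ from the bids in $S_t$, so the expected contribution of position $t$ equals the expected per-bidder payment of $\Mcal$ on $S_t$. Conditional on the size-$t$ set $S_t$, the identity of the $t$-th arrival is uniform on $S_t$, so this contribution equals $\frac{1}{t}\E_{S_t}[\Mcal(S_t)]$. Summing over $t$ and applying the offline guarantee $\E[\Mcal(S)] \geq f(S)/\alpha$ yields
\begin{align*}
\mathrm{Rev}_{\mathrm{OSA}}(\bv) \;=\; \sum_{t=1}^{n}\frac{1}{t}\,\E_{S_t:\,|S_t|=t}\bigl[\Mcal(S_t)\bigr] \;\geq\; \frac{1}{\alpha} \sum_{t=1}^{n} \frac{1}{t}\,\E_{S_t}\bigl[f(S_t)\bigr].
\end{align*}

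Next, I would establish a benchmark comparison lemma showing that the right-hand side is at least $\tfrac{1}{2}f(\bv)$, which delivers the desired $2\alpha$ ratio. For $f=\F$, choose $k^*\geq 2$ realizing $\F(\bv)=k^* v_{k^*}$ and, for each $t$, let $Y_t$ be the (hypergeometric) number of elements of $S_t$ drawn from the top $k^*$ bidders. Whenever $Y_t\geq 2$, $\F(S_t)\geq Y_t\cdot v_{k^*}$, so $\E[\F(S_t)] \geq v_{k^*}\bigl(\E[Y_t]-\Pr[Y_t=1]\bigr)$. Since $\E[Y_t]=tk^*/n$, the linear term contributes $\sum_{t=1}^{n}\frac{1}{t}\cdot\frac{tk^*}{n}\cdot v_{k^*} = k^* v_{k^*}$, and the crux becomes showing $\sum_{t=1}^{n}\frac{1}{t}\Pr[Y_t=1]\leq \tfrac{1}{2}k^*$. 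The analogous argument for $\maxV$ uses its optimizer $k^*$ and the threshold $v_{k^*+1}$, replacing ``$Y_t\geq 2$'' by ``$Y_t\geq 1$ and there is a further bidder below in $S_t$''.

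The main obstacle is the combinatorial inequality in the second step: verifying the exact constant $\tfrac{1}{2}$. My approach would be a double-counting argument that interprets $\sum_{t}\frac{1}{t}\Pr[Y_t=1]$ as the expectation, over a uniformly random permutation of $[n]$, of a random variable counting the positions at which exactly one of the top $k^*$ bidders has arrived, weighted by $1/t$; this reduces to estimating the expected harmonic contribution of the arrival positions of the top $k^*$ bidders, which can be bounded cleanly by $\tfrac{1}{2}k^*$ via symmetry across the $k^*$ top bidders. A careful execution of this estimate completes the proof for $\F$, and the $\maxV$ case follows by an essentially identical calculation.
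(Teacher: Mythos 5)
The paper does not prove this lemma --- it is cited from Koutsoupias and Pierrakos --- so there is no internal proof to compare against, and I can only assess your reconstruction on its own terms. Its structure is sound: the revenue identity $\mathrm{Rev}=\sum_{t=1}^n\frac{1}{t}\E[\Mcal(S_t)]$ (which follows because, conditional on $S_t$, the $t$-th arrival is uniform over $S_t$) and the reduction to the random-prefix benchmark inequality $\sum_t\frac{1}{t}\E[f(S_t)]\geq\frac{1}{2}f(\bv)$ are both correct.

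The one genuine gap is the combinatorial step, where you assert $\sum_t\frac{1}{t}\Pr[Y_t=1]\leq k^*/2$ and offer only a vague sketch of a double-counting argument ``bounded cleanly by symmetry.'' The bound does hold, but you should nail it down; the clean way is the identity $\sum_t\frac{1}{t}\Pr[Y_t=z]=\frac{1}{z}$. To see it: since $\pi(t)$ is uniform on $S_t$ given $S_t$, one has $\frac{z}{t}\Pr[Y_t=z]=\Pr[Y_t=z\text{ and }\pi(t)\text{ is among the top }k^*]=\Pr[\text{the }z\text{-th arrival among the top }k^*\text{ occurs at position }t]$, which sums to $1$ over $t$. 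In particular $\sum_t\frac{1}{t}\Pr[Y_t=1]=1\leq k^*/2$ precisely because $k^*\geq 2$, so $\sum_t\frac{1}{t}\E[\F(S_t)]\geq v_{k^*}(k^*-1)\geq\frac{1}{2}k^*v_{k^*}=\frac12\F(\bv)$. For $\maxV$ your ``further bidder below in $S_t$'' is too weak as stated (its value can be far below $v_{k^*+1}$); it should be bidder $k^*+1$ specifically. Setting $W_t:=Y_t\cdot\mathbb{I}[k^*+1\in S_t]$, one has $\maxV(S_t)\geq W_t\,v_{k^*+1}$ (there are $Y_t+1$ bids of value at least $v_{k^*+1}$ in $S_t$ when $W_t>0$), and $\E[W_t]=k^*\cdot\frac{t(t-1)}{n(n-1)}$ by linearity, so $\sum_t\frac{1}{t}\E[W_t]=k^*\sum_{t}\frac{t-1}{n(n-1)}=\frac{k^*}{2}$ exactly, which yields $\sum_t\frac{1}{t}\E[\maxV(S_t)]\geq\frac12 k^*v_{k^*+1}=\frac12\maxV(\bv)$ directly. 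With these two pieces filled in, your proof goes through.
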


In our problem, we can construct a similar reduction which is named \textit{Online Sampling Auction with Predictions}.

\paragraph{Online Sampling Auction with Predictions(OSAP)} Recall that $\pi$ is the random arriving order. Upon the arrival of a bidder $\pi(t)$ and its prediction $\hat{v}_{\pi(t)}$. We run an offline mechanism $\Mcal$ on all the arrived bidders and match each prediction $\hat{v}_{\pi(i)}$ with the bidder $\pi(i)$ for $i\leq t$. Specially, we always offer the predicted price $\hat{v}_{\pi(1)}$ to the first bidder.

We first show the consistency and truthfulness of OSAP.
\begin{lemma}\label{lem:online auction}
    If the employed $\Mcal$ is truthful and $1$-consistent against $OPT$. Then the OSAP is truthful and $1$-consistent against $OPT$.
\end{lemma}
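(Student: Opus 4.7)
My plan is to verify truthfulness and $1$-consistency separately, each time reducing to the analogous property of $\Mcal$ as applied in a single round of OSAP.

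For truthfulness, the key observation is that bidder $\pi(t)$'s outcome in OSAP is determined solely by the $t$-th invocation of $\Mcal$, which is run on $\{\pi(1),\ldots,\pi(t)\}$. The reported bid $b_{\pi(t)}$ is re-used in subsequent invocations (for $t'>t$), but those invocations determine only the prices offered to bidders arriving strictly after $\pi(t)$ and therefore have no bearing on $\pi(t)$'s utility. Consequently the incentive problem faced by $\pi(t)$ in OSAP reduces to the incentive problem faced in the offline mechanism $\Mcal$ applied to $\{\pi(1),\ldots,\pi(t)\}$, where truthfulness of $\Mcal$ makes reporting $v_{\pi(t)}$ dominant. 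The corner case $t=1$ is handled by the explicit rule that we offer the posted price $\hat{v}_{\pi(1)}$, which is trivially truthful.

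For consistency, assume $\bhv=\bv$. I would first establish the following auxiliary claim: in the digital good / online setting, if a truthful mechanism $\Mcal$ is $1$-consistent against $OPT$, then on any instance with perfect predictions each bidder pays her true value in expectation. Individual rationality gives $\mathbb{E}[p_i]\leq v_i$ for every $i$; summing yields $\sum_i \mathbb{E}[p_i]\leq \sum_i v_i = OPT(\bv)$; and $1$-consistency forces equality termwise, hence $\mathbb{E}[p_i]=v_i$ for every $i$. Applying this claim to the $t$-th invocation of $\Mcal$ on $\{\pi(1),\ldots,\pi(t)\}$ (still with perfect predictions), bidder $\pi(t)$'s expected payment in OSAP is exactly $v_{\pi(t)}$. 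Summing over $t$, the total expected revenue of OSAP equals $\sum_{t=1}^n v_{\pi(t)} = \sum_{i=1}^n v_i = OPT(\bv)$, establishing $1$-consistency.

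The main conceptual step, and therefore the likely obstacle, is the upgrade from the aggregate statement "$\Mcal$ achieves $OPT$" to the per-bidder statement "$\Mcal$ charges each bidder her value". This upgrade is not automatic in general auction settings, but it follows cleanly here from the additive structure $OPT(\bv)=\sum_i v_i$ in digital good / online auctions combined with IR. Everything else --- the single-round reduction for truthfulness and the summation for the revenue --- is routine bookkeeping.
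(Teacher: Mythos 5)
Your proof is correct and follows essentially the same route as the paper: truthfulness via the observation that each arriving bidder faces a posted price determined only by earlier bids and the predictions, and $1$-consistency by arguing that in each invocation of $\Mcal$ on the arrived bidders with perfect predictions, additivity of $OPT$ plus the payment bound $p_i\leq v_i$ forces every bidder—in particular the newly arrived one—to pay exactly her value, so summing over arrivals gives $OPT(\bv)$. The only difference is cosmetic: you make explicit the individual-rationality step that the paper leaves implicit when it asserts that bidder $\pi(t)$ "must pay its private value."
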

\begin{proof}
    Truthful mechanism $\Mcal$ is equivalent to proposing a random price according to the data $\{(b_{\pi(i)},\hat{v}_{\pi(i)})\}_{i<t}$ and $\hat{v}_{\pi_t}$, thus the OSAP runs like proposing an immediate price to each arrival bidder. This shows the truthfulness of the OSAP.

    Let $\bv^{[t]}_{\pi} = (v_{\pi(1)},v_{\pi(2)},\ldots,v_{\pi(t)})$ and $\bhv^{[t]}_{\pi}=(\hat{v}_{\pi(1)},\hat{v}_{\pi(2)},\ldots,\hat{v}_{\pi(t)})$, and $\Mcal^{[t]}_{\pi}$ the auction running on the first $t$ bidders and predictions. If $\bv=\bhv$, then for any realized random order $\pi$ and any arrival time $t$, we have $\bv^{[t]}_{\pi} = \bhv^{[t]}_{\pi}$. Consider the $t$-th arrival, the total revenue of the $\Mcal^{[t]}_{\pi}$ is $OPT(\bv^{[t]}_{\pi})$ by the perfect consistency of $\Mcal$. Thus the bidder $\pi(t)$ must pay his private value $v_{\pi(t)}$. Therefore, even for any realized $\pi$, the total revenue of the OSAP is $\sum_{t=1}^n v_{\pi(t)}=OPT(\bv)$. Which shows the perfect consistency of the OSAP.
\end{proof}

Since $\Mcal$ is $\alpha$-robust, i.e. $\alpha$-competitive, we can use \lem{online sampling auction} to obtain the $2\alpha$-robustness ratio of OSAP. Then \lem{online sampling auction} leads to the following corollary.


\begin{corollary}\label{cor:online auction}
    If the employed $\Mcal$ is $\alpha$-robust against $\F$ or $\maxV$, then OSAP is $2\alpha$-robust against $\F$ or $\maxV$.
\end{corollary}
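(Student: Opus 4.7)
The plan is to recognize that OSAP is, up to the extra prediction input fed into $\Mcal$, literally the same construction as the online sampling auction of Koutsoupias and Pierrakos, so that \lem{online sampling auction} can be invoked once the $\alpha$-robustness hypothesis is repackaged as a pointwise $\alpha$-competitive guarantee.

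First I would fix an arbitrary prediction vector $\bhv$ and define, for each arrival time $t$, the offline mechanism $\Mcal^{\bhv^{[t]}_{\pi}}$ as the mapping that takes the bid vector $\bv^{[t]}_{\pi}$ and returns the allocation and payment rule of $\Mcal$ with the corresponding sub-prediction vector baked in. The hypothesis that $\Mcal$ is $\alpha$-robust against $f\in\{\F,\maxV\}$ says exactly that, for every $\bhv'$ and every $\bv'$, $\Mcal^{\bhv'}(\bv')\geq f(\bv')/\alpha$. In particular, viewed at any fixed $\bhv$, $\Mcal^{\bhv}$ is a truthful offline digital good auction that is $\alpha$-competitive against $f$ in the classical sense of \cite{koutsoupias2013competitive}.

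Next I would observe that OSAP, with the same fixed $\bhv$, is precisely the online sampling auction built from $\Mcal^{\bhv}$: at arrival time $t$, it invokes $\Mcal^{\bhv^{[t]}_{\pi}}$ on the $t$ arrived bidders and quotes bidder $\pi(t)$ the price that this offline mechanism assigns. Applying \lem{online sampling auction} to $\Mcal^{\bhv}$ therefore yields a $2\alpha$-competitive online guarantee against $f$. Since this bound holds pointwise in $\bhv$, it is exactly the statement that OSAP is $2\alpha$-robust against $f$.

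The only subtle point, and the main thing one has to check, is that \lem{online sampling auction} is stated for offline mechanisms whose sole input is a bid vector, whereas $\Mcal$ also consumes predictions that arrive online together with the bidders. Fixing $\bhv$ once and for all neutralizes this mismatch: the random arrival order $\pi$ still permutes bidders (and their matched predictions) uniformly at random, which is precisely the regime covered by \lem{online sampling auction}. With this identification the corollary is immediate and no further calculation is needed.
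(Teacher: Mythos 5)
Your proposal is correct and follows essentially the same route as the paper: the paper likewise fixes the prediction vector, regards $\Mcal^{\bhv}$ as an $\alpha$-competitive offline auction, and invokes \lem{online sampling auction} as a black box to get the factor $2\alpha$. You even flag (and reasonably dispose of) the identity-dependence of the prefix mechanisms through the arriving predictions, a subtlety the paper passes over silently.
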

Recall \cor{digital good auction}, combined with \lem{online auction}, \cor{online auction}, we have
\begin{theorem}\label{thm:online auction}
    There is a mechanism for online auctions with $1$-consistent against $OPT$ and $8.84$-robust against $\F$. Moreover, there is a mechanism for online auctions with $1$-consistent against $OPT$ and $2(e+1)$-robust against $\maxV$.
\end{theorem}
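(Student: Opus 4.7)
The plan is to apply the Online Sampling Auction with Predictions (OSAP) reduction to the offline mechanisms already built in \cor{digital good auction}, and then combine the consistency preservation of \lem{online auction} with the robustness preservation of \cor{online auction}. Since every component is already in place, the proof is essentially an invocation.

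First, I would recall from \cor{digital good auction} that there exists an offline truthful mechanism $\Mcal_{\F}$ which is $1$-consistent against $OPT$ and $4.42$-robust against $\F$, and similarly an offline truthful mechanism $\Mcal_{\maxV}$ which is $1$-consistent against $OPT$ and $(e+1)$-robust against $\maxV$. Both mechanisms are valid black-box inputs to the OSAP reduction, as they are defined in the digital good auction environment which matches the feasibility setting of the online auction problem.

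Next, I would instantiate OSAP with each of these offline mechanisms. By \lem{online auction}, since each $\Mcal \in \{\Mcal_{\F}, \Mcal_{\maxV}\}$ is truthful and $1$-consistent against $OPT$, the resulting OSAP is also truthful and $1$-consistent against $OPT$. For robustness, \cor{online auction} asserts that if $\Mcal$ is $\alpha$-robust against $\F$ or $\maxV$, then OSAP inherits $2\alpha$-robustness against the same benchmark. Plugging in $\alpha = 4.42$ gives $8.84$-robustness against $\F$, and plugging in $\alpha = e+1$ gives $2(e+1)$-robustness against $\maxV$, which are the two bounds claimed by the theorem.

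There is essentially no main obstacle here, since the heavy lifting has been done earlier in the section: the consistency argument relies on the observation that when $\bv = \bhv$ every prefix $\bv^{[t]}_\pi$ equals $\bhv^{[t]}_\pi$ so the underlying perfectly consistent offline mechanism charges each arriving bidder exactly $v_{\pi(t)}$, and the robustness argument is inherited from the Koutsoupias--Pierrakos black-box reduction of \lem{online sampling auction}. The only mild subtlety to mention is that both offline mechanisms from \cor{digital good auction} are indeed truthful and $1$-consistent, so the hypotheses of \lem{online auction} and \cor{online auction} are satisfied simultaneously, which is what allows the same OSAP instantiation to yield both the consistency and the robustness guarantees claimed.
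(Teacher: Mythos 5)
Your proposal is correct and follows exactly the paper's route: the theorem is obtained by instantiating OSAP with the offline mechanisms of \cor{digital good auction} and invoking \lem{online auction} for truthfulness and $1$-consistency together with \cor{online auction} (via \lem{online sampling auction}) for the factor-$2$ loss in robustness, yielding $8.84$ against $\F$ and $2(e+1)$ against $\maxV$.
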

 
\begin{remark}
    In this section, our reduction is from the offline $1$-consistent auction to the online $1$-consistent auction, which is different from the reductions in previous sections. This makes our technique in this section independent of the one introduced in \sect{techniques} and more similar to the existing reduction techniques. The reason for this difference is that online auctions have a key distinction from offline auctions. In an online auction, since we can only see the arrived bids, even if the number of wrong predictions is large, we still cannot guarantee that all bidders are applied with the competitive black-box mechanism. More specifically, before the first bidder with the wrong prediction has arrived, we can only offer the predicted price to maintain perfect consistency.
\end{remark}

\section{Error-Tolerant Design}\label{sect:error-tolerant}
Though the mechanisms proposed in the previous sections have a perfect $1$-consistency guarantee, they fail if the prediction is imperfect. We cannot guarantee a good competitive ratio with $OPT$ benchmark even if the error is very small. In other words, our mechanisms have very poor error-tolerant ability. Our task in this section is to give a more practical error-tolerant version of the mechanisms we propose. Our error-tolerant mechanisms are obtained by modifying the mechanism in the previous sections.

There are two ideas used in our modification. The first is, given a confidence level $\gamma\geq 1$, we replace $\#(\bv_{-i}\neq \bhv_{-i})$ with $\#_{\gamma}(\bv_{-i}\neq \bhv_{-i})$ in all mechanisms. Here $\#_{\gamma}(\bv_{-i}\neq \bhv_{-i}):= \#\left\{j\neq i \mid  v_{j} > \gamma\hat{v}_{j} \mbox{ or } v_{j} < \frac{1}{\gamma}\hat{v}_{j} \right\}$. The second is that we replace the sub-mechanism used to guarantee perfect consistency with another mechanism that guarantees a good competitive ratio when $v_i\in [\frac{1}{\gamma} \hat{v}_i, \gamma \hat{v}_i], \forall i\in [n]$. In this section, we present the error-tolerant mechanisms for digital good auctions and auctions with downward-closed permutation environments respectively.

Before we present our mechanisms, motivated by the random expansion technique proposed by Balcan et al.~\cite{balcan2023bicriteria}, we first define a distribution of price. Given parameters $\lambda>1$, $\bhv$, $\gamma$, we let distribution $\mathcal{P}_{\bhv,\lambda,\gamma}$ be a uniform distribution over $\left\{\frac{\lambda^i}{\gamma}\hat{v}\mid 
\frac{\lambda^i}{\gamma}\hat{v}\leq \gamma \hat{v}, i\in \mathbb{Z}\cap [0,+\infty) \right\}$. Let $p$ be sampled from $\mathcal{P}_{\bhv,\lambda,\gamma}$, then $\Pr\left(p = \frac{\lambda^i}{\gamma}\hat{v}\right) = \frac{1}{\lceil\log_{\lambda}(\gamma^2)\rceil}$ when $\hat{v}>0$.

Let $\mathcal{M}$ be an $\alpha$-competitive mechanism for digital good auctions, we modify \mech{dga 2} and \mech{dga 1} as follows.
\begin{framed}
    \begin{etmechanism}{mech:dga 2}\label{mech:et dga 2}
    We first run the black-box mechanism $\mathcal{M}$ and calculate the allocation rule and payment rule of every bidder. For each bidder $i$, if $\#_{\gamma}(\boldsymbol{v}_{-i} \neq \hat{\boldsymbol{v}}_{-i})=0$, then sample a price $p_i$ from $\mathcal{P}_{\hat{v}_i,\lambda,\gamma}$ and offer it to bidder $i$; if $\#_{\gamma}(\boldsymbol{v}_{-i} \neq \hat{\boldsymbol{v}}_{-i})\geq 1$, we apply the black-box mechanism $\mathcal{M}$ on bidder $i$.
    \end{etmechanism}

    \begin{etmechanism}{mech:dga 1}\label{mech:et dga 1}
    Sample price $p_i$ from $\mathcal{P}_{\hat{v}_i,\lambda,\gamma}$ and offer $p_i$ to bidder $i$.
    \end{etmechanism}
\end{framed}

\begin{lemma}\label{lem:et dga 2}
    If $\Mcal$ is $\alpha$-competitive against $f$, then \mech{et dga 2} is truthful, and it is $(\min\{\gamma^2,\lambda\}\lceil\log_{\lambda}(\gamma^2)\rceil)$-competitive against $OPT(\bv)\cdot \mathbb{I}[\#_{\gamma}(\boldsymbol{v}\neq \hat{\boldsymbol{v}})= 0]$ and $\alpha$-competitive against $f(\bv)\cdot \mathbb{I}[\#_{\gamma}(\boldsymbol{v}\neq \hat{\boldsymbol{v}})\geq 2]$.
\end{lemma}
\begin{proof}
    Since $\#_{\gamma}(\bv_{-i}\neq \bhv_{-i})$ is independent with $b_i$ for any $\gamma$, the mechanism is still a bid-independent combination. All the combined mechanisms are truthful, including the mechanism applied when $ \#_{\gamma}(\boldsymbol{v}_{-i} \neq \hat{\boldsymbol{v}}_{-i})=0$, which is a bid-independent pricing. Therefore, the truthfulness is straightforward.

    When $\#_{\gamma}(\bv\neq \bhv)=0$, each bidder $i$ is offered with price $p_i$ that is sampled from $\mathcal{P}_{\hat{v}_i,\lambda,\gamma}$. Since $v_i\in [\frac{1}{\gamma}\hat{v}_i,\gamma \hat{v}_i]$, there exist $k\in \mathbb{Z}$ such that $v_i\in[\frac{\lambda^k}{\gamma}\hat{v}_i,\frac{\lambda^{k+1}}{\gamma}\hat{v}_i]$ and $\frac{\lambda^k}{\gamma}\hat{v}_i \in \mbox{supp}(\mathcal{P}_{\hat{v}_i,\lambda,\gamma})$. By the definition of $\mathcal{P}_{\hat{v}_i,\lambda,\gamma}$, $\Pr(p_i = \frac{\lambda^k}{\gamma}\hat{v}_i )=\frac{1}{\lceil\log_{\lambda}(\gamma^2)\rceil}$. Conditioning on that $p_i = \frac{\lambda^k}{\gamma}\hat{v}_i$ happens, the bidder $i$ pays $\frac{\lambda^k}{\gamma}\hat{v}_i\geq \frac{v_i}{\min\{\gamma^2,\lambda\}}$. Therefore, the expected revenue of the bidder $i$ is at least $\frac{v_i}{\min\{\gamma^2,\lambda\} \lceil\log_{\lambda}(\gamma^2)\rceil}$, the expected total revenue is at least $\frac{\sum_{i\in [n]} v_i}{\min\{\gamma^2,\lambda\} \lceil\log_{\lambda}(\gamma^2)\rceil}= \frac{OPT(\bv)}{\min\{\gamma^2,\lambda\}\lceil\log_{\lambda}(\gamma^2)\rceil}$.

    When $\#_{\gamma}(\bv\neq \bhv)\geq 2$, the mechanism degenerates to $\mathcal{M}$, thus it guarantees a revenue of at least $f(\bv)/\alpha$.
\end{proof}

\begin{lemma}
    \mech{et dga 1} is truthful and it is $(\min\{\gamma^2,\lambda\}\lceil\log_{\lambda}(\gamma^2)\rceil)$-competitive against $OPT(\bv)\cdot \mathbb{I}[\#_{\gamma}(\boldsymbol{v}\neq \hat{\boldsymbol{v}})= 0]$ and $2(\min\{\gamma^2,\lambda\}\lceil\log_{\lambda}(\gamma^2)\rceil)$-competitive against $f(\bv)\cdot \mathbb{I}[\#_{\gamma}(\boldsymbol{v}\neq \hat{\boldsymbol{v}})\leq 1]$.
\end{lemma}
\begin{proof}
    The truthfulness and the competitive ratio when $\#_{\gamma}(\boldsymbol{v}\neq \hat{\boldsymbol{v}})= 0$ have been shown in \lem{et dga 2}. When $\#_{\gamma}(\boldsymbol{v}\neq \hat{\boldsymbol{v}})= 1$, let $j$ denote the bidder with $v_j\notin [\frac{1}{\gamma}\hat{v}_j, \gamma \hat{v}_j]$. Since for each bidder $i\neq j$, we have $v_i\in [\frac{1}{\gamma}\hat{v}_i, \gamma \hat{v}_i]$. Then by the same analysis, the bidder $i$ will pay at least $\frac{v_i}{\min\{\gamma^2,\lambda\}\lceil\log_{\lambda}(\gamma^2)\rceil}$ in expectation. Thus the expected revenue is at least
    \[\frac{\sum_{i\neq j}v_i}{\min\{\gamma^2,\lambda\}\lceil\log_{\lambda}(\gamma^2)\rceil}\geq \frac{\sum_{i\geq 2}v_i}{\min\{\gamma^2,\lambda\}\lceil\log_{\lambda}(\gamma^2)\rceil}\geq \frac{OPT^{(2)}(\bv)}{2\min\{\gamma^2,\lambda\}\lceil\log_{\lambda}(\gamma^2)\rceil}\geq \frac{f(\bv)}{2\min\{\gamma^2,\lambda\}\lceil\log_{\lambda}(\gamma^2)\rceil}.\]
    Thus we get the competitive ratio when $ \#_{\gamma}(\boldsymbol{v}\neq \hat{\boldsymbol{v}})\leq 1$.
\end{proof}

Run \mech{et dga 2} with probability $\frac{\alpha}{\alpha+2\min\{\gamma^2,\lambda\}\lceil\log_{\lambda}(\gamma^2)\rceil}$ and \mech{et dga 1} with probability $\frac{2\min\{\gamma^2,\lambda\}\lceil\log_{\lambda}(\gamma^2)\rceil}{\alpha+ 2\min\{\gamma^2,\lambda\}\lceil\log_{\lambda}(\gamma^2)\rceil}$, since we have the decomposition \[f(\bv) = f(\bv)\cdot \mathbb{I}[\#_{\gamma}(\boldsymbol{v}\neq \hat{\boldsymbol{v}})\leq 1]+ f(\bv)\cdot \mathbb{I}[\#_{\gamma}(\boldsymbol{v}\neq \hat{\boldsymbol{v}})\geq 2],\] we obtain the following theorem immediately.
\begin{theorem}[error-tolerant design of digital good auctions]\label{thm:et dga}
    Define the predicted error rate 
  \[\eta = \max_{i\in [n]} \left\{\frac{v_i}{\hat{v}_i},\frac{\hat{v}_i}{v_i} \right\}.\] If benchmark $f$ is dominated by $OPT^{(2)}$ and there is an $\alpha$-competitive truthful mechanism against $f$, then given parameter $\lambda$ and confidence level $\gamma$, there is a truthful mechanism $\mathcal{M}^{\bhv}_{\lambda,\gamma}$ for digital good auctions, whose revenue $\mathcal{M}^{\bhv}_{\lambda,\gamma}(\bv)$ satisfies:
    \begin{itemize}
        \item If $\eta \leq \gamma$, $\mathcal{M}^{\bhv}_{\lambda,\gamma}(\bv)\geq \frac{OPT(\bv)}{\min\{\gamma^2,\lambda\}\lceil\log_{\lambda}(\gamma^2)\rceil}$.
        \item If $\eta >\gamma$, $\mathcal{M}^{\bhv}_{\lambda,\gamma}(\bv)\geq \frac{f(\bv)}{\alpha+2\min\{\gamma^2,\lambda\}\lceil\log_{\lambda}(\gamma^2)\rceil}$.
    \end{itemize}
\end{theorem}

Next, we present the error-tolerant mechanism for the auctions with downward-closed environments. We first define the error-tolerant version of our Discard-and-Limit Weakest competitor VCG mechanism.

\paragraph{Discard-and-Limit Weakest Competitor VCG Mechanism (Error-Tolerant)} Given confidence level $\gamma$, the mechanism accepts a predicted value $\hat{v}_i$ and a private value lower bound $v_i^{\downarrow}\in [\frac{1}{\gamma}\hat{v}_i,\gamma \hat{v}_i]$ as the parameters for each bidder $i$. The Discard-and-Limit operator is redefined here, and we use $\mathtt{DaL}_{\bhv,\gamma}$ to denote the new  Discard-and-Limit operator as it accepts a new parameter $\gamma$ compared to the one in \sect{downward-closed}. Given the input bid vector $\boldsymbol{b}$, $\mathtt{DaL}_{\bhv,\gamma}$ outputs a bid vector $\widetilde{\boldsymbol{b}}$ such that $\widetilde{b}_i=0$ if $b_i< \frac{1}{\gamma}\hat{v}_i$; $\widetilde{b}_i=\min\{b_i,\gamma \hat{v}_i\}$ if $b_i\geq \frac{1}{\gamma}\hat{v}_i$. Without ambiguity, we also use $ \mathtt{DaL}_{\bhv,\gamma}(b_i)$ to represent $\tilde{b}_i$. The mechanism preprocesses the bid vector using  $\mathtt{DaL}_{\bhv,\gamma}$. Denote $\bv^{\downarrow}: =(v_1^{\downarrow},v_2^{\downarrow},\ldots,v_n^{\downarrow})$. Let $\boldsymbol{x}_{\mathtt{DWC}\mbox{-}\mathtt{VCG}},\boldsymbol{p}_{\mathtt{DWC}\mbox{-}\mathtt{VCG}}$ denote the allocation and the payment of the mechanism respectively. The mechanism uses the allocation that maximizes the social welfare with bid vector $\mathtt{DaL}_{\bhv,\gamma}(\bb)$, that is, $\boldsymbol{x}_{\mathtt{DWC}\mbox{-}\mathtt{VCG}}=\argmax_{\boldsymbol{x}\in \X}\sum_{i\in [n]} x_i\cdot \mathtt{DaL}_{\bhv,\gamma}(b_i)$, here we use $\mathtt{DaL}_{\bhv,\gamma}(b_i)$ to denote the $i$-th bid of the bid vector $\mathtt{DaL}_{\bhv,\gamma}(\boldsymbol{b})$. Note that we always break ties such that $x_{\mathtt{DWC}\mbox{-}\mathtt{VCG},i}=0$ if $\mathtt{DaL}(b_i)=0$. The payment $p_{\mathtt{DWC}\mbox{-}\mathtt{VCG},i}$ of the bidder $i$ is 
\begin{equation}\label{eq:payment}
    p_{\mathtt{DWC}\mbox{-}\mathtt{VCG},i} = \max_{\boldsymbol{x}\in\X} \left(x_i\cdot v_i^{\downarrow} +\sum_{j\neq i} x_j\cdot \mathtt{DaL}_{\bhv,\gamma}(b_i)\right)-\sum_{j\neq i}x_{\mathtt{DWC}\mbox{-}\mathtt{VCG},j}\cdot \mathtt{DaL}_{\bhv,\gamma}(b_i).
\end{equation}

To ensure the individual rationality, if $i\in \mathcal{I}:= \{i\mid x_{\mathtt{DWC}\mbox{-}\mathtt{VCG},j}\cdot b_i < p_{\mathtt{DWC}\mbox{-}\mathtt{VCG},i} \}$, we set both $x_{\mathtt{DWC}\mbox{-}\mathtt{VCG},j}$ and $p_{\mathtt{DWC}\mbox{-}\mathtt{VCG},j}$ to $0$. At last, we use $\mathtt{DWC\mbox{-}VCG}_{\bv^{\downarrow},\gamma,\bhv}$ to denote the error-tolerant Discard-and-Limit Weakest Competitor VCG Mechanism with parameter $\bv^{\downarrow}$, $\gamma$ and $\bhv$.

Let $\mathcal{M}$ be an $\alpha$-competitive mechanism for auctions with downward-closed permutation environments, we modify \mech{rank2 1}, \mech{insensitive benchmark} and \mech{rank2 4} as follows.

\begin{framed}
    \paragraph{Pre-sampling} Before all the following mechanisms start, we first sample $v^{\downarrow}_i$ from $\mathcal{P}_{\hat{v}_i,\lambda,\gamma}$ independently for each $i$. This pre-sampling procedure determines the parameter $\bv^{\downarrow}$ used for the Discard-and-Limit Weakest Competitor VCG Mechanism. 

    \begin{etmechanism}{mech:rank2 1}\label{mech:et rank2 1}
    For each bidder $i$, if $\#_{\gamma}(\boldsymbol{v}_{-i} \neq \hat{\boldsymbol{v}}_{-i})=0$, run $\mathtt{DWC\mbox{-}VCG}_{\bv^{\downarrow},\gamma,\bhv}$ mechanism on the bidder $i$. If $\#_{\gamma}(\boldsymbol{v}_{-i} \neq \hat{\boldsymbol{v}}_{-i})=1$, reject bidder $i$. If $\#_{\gamma}(\boldsymbol{v}_{-i} \neq \hat{\boldsymbol{v}}_{-i})\geq 2$, run $\mathcal{M}$ on bidder $i$. 
    \end{etmechanism}

    \begin{etmechanism}{mech:insensitive benchmark}\label{mech:et insensitive benchmark}
    Run $\mathtt{DWC\mbox{-}VCG}_{\bv^{\downarrow},\gamma,\bhv}$ on all bidders.
    \end{etmechanism}

    \begin{etmechanism}{mech:rank2 4}\label{mech:et rank2 4}
    For each bidder $i$, run $\mathtt{DWC\mbox{-}VCG}_{\bv^{\downarrow},\gamma,\bhv}$ on $i$ if $\#_{\gamma}(\boldsymbol{v}_{-i}\neq \hat{\boldsymbol{v}}_{-i})=0$. If $\#_{\gamma}(\boldsymbol{v}_{-i}\neq \hat{\boldsymbol{v}}_{-i})=1$, let $j$ be the bidder with $v_j\notin [\frac{1}{\gamma}v_j,\gamma v_j]$, run $\mathtt{ResVic}_j$ on $i$ and the order $\rho$ used in $\mathtt{ResVic}_j$ should be modified to the decreasing order of entries in $\mathtt{DaL}_{\bhv,\gamma}(\bv)$. If $\#_{\gamma}(\boldsymbol{v}_{-i}\neq \hat{\boldsymbol{v}}_{-i})\geq 2$, run single-item Vickrey auction on $i$.
    \end{etmechanism} 
\end{framed}

Our error-tolerant mechanism is to run \mech{et rank2 1}, \mech{et insensitive benchmark}, \mech{et rank2 4} with probability $\frac{\alpha}{\alpha+4+3\min\{\gamma^2,\lambda\}\lceil\log_{\lambda}(\gamma^2)\rceil}$, $\frac{4}{\alpha+4+3\min\{\gamma^2,\lambda\}\lceil\log_{\lambda}(\gamma^2)\rceil}$ and $\frac{3\min\{\gamma^2,\lambda\}\lceil\log_{\lambda}(\gamma^2)\rceil}{\alpha+4+3\min\{\gamma^2,\lambda\}\lceil\log_{\lambda}(\gamma^2)\rceil}$ respectively.

\begin{theorem}[error-tolerant design of auctions with downward-closed environments]\label{thm:et dc}
    Define the predicted error rate $\eta$ as in \thm{et dga}. If there is an $\alpha$-competitive truthful mechanism against $\EFOt{\X}$, then given parameter $\lambda$ and confidence level $\gamma$, there is a truthful mechanism $\mathcal{M}^{\bhv}_{\lambda,\gamma}$ for auctions with downward-closed permutation environments, whose revenue $\mathcal{M}^{\bhv}_{\lambda,\gamma}(\bv)$ satisfies:
    \begin{itemize}
        \item If $\eta \leq \gamma$, $\mathcal{M}^{\bhv}_{\lambda,\gamma}(\bv)\geq \frac{OPT_{\X}(\bv)}{\min\{\gamma^2,\lambda\}\lceil\log_{\lambda}(\gamma^2)\rceil}$.
        \item If $\eta >\gamma$, $\mathcal{M}^{\bhv}_{\lambda,\gamma}(\bv)\geq \frac{\EFOt{\X}(\bv)}{\alpha+4+3\min\{\gamma^2,\lambda\}\lceil\log_{\lambda}(\gamma^2)\rceil}$.
    \end{itemize}
\end{theorem}
The proof of \thm{et dc} can be found in \append{A}.

\section{Lower Bound for Perfect Consistency Mechanism}\label{sect:lower bound}
We prove the lower bound for the digital good auction in the $2$-bidders case and we do not order the private value vector in this section. Let $\mathcal{M}$ be a truthful mechanism with $1$-consistency ratio. Fix the prediction $\hat{\boldsymbol{v}} = (1,N)$, here $N$ is a number to be determined later. We prove our lower bound by a probabilistic method argument. 
We consider the distribution of the private value vector of bidders. The distribution is defined as a conditional distribution of the so-called equal-revenue distribution, which is used to prove the competitive ratio lower bound of the digital good auction. The probability density function of the equal-revenue distribution is $g(\boldsymbol{v}) = \frac{1}{v_1^2 v_2^2}, \forall \boldsymbol{v}\in [1,\infty)^2$, we denote the distribution $\mathcal{G}$. Define $V := [\sqrt{N}, N]^2$, the distribution we need is simply the conditional distribution of the equal-revenue distribution given that the event $\boldsymbol{v}\in V$ happens. We denote the distribution $\mathcal{G}_{|_V}$ and its probability density $g_{|_V}$. Formally, since 
\[\Pr_{\boldsymbol{v}\sim \mathcal{G}}(\boldsymbol{v}\in V) = \int_{\sqrt{N}}^N \int_{\sqrt{N}}^N \frac{1}{v_1^2 v_2^2}dv_1dv_2=\left(\frac{\sqrt{N}-1}{N}\right)^2,\]
we have
\[g_{|_V}(\boldsymbol{v}) = \frac{g(\boldsymbol{v})}{\Pr_{\boldsymbol{v}\sim \mathcal{G}}(\boldsymbol{v}\in V)}=\frac{N^2}{(\sqrt{N}-1)^2}\frac{1}{v_1^2 v_2^2}, \quad \forall \boldsymbol{v}\in V.\]

Our lower bound proof analyzes the expected revenue that the mechanism obtains from each bidder, let $\mathcal{M}_{i}^{\bhv}(\boldsymbol{v})$ denote the expected revenue that $\mathcal{M}^{\bhv}$ obtains from the bidder $i$ when the private value vector is $\bv$. 
There are some facts about $\mathcal{M}^{\bhv}_i(\boldsymbol{v})$ shown in previous works \cite{goldberg2004lower,chen2014optimal} which is useful in our proofs, and we restate them in the following lemma. 

\begin{lemma}[\cite{goldberg2004lower,chen2014optimal}]\label{lem:revenue lemma}
    Let $\mathcal{M}^{\bhv}$ be a truthful mechanism, then
    \begin{itemize}
        \item[(i)] $\mathcal{M}^{\bhv}_i(v_i;\boldsymbol{v}_{-i})$ is non-decreasing with respect to $v_i$.
        \item[(ii)] $\mathcal{M}^{\bhv}_i(\boldsymbol{v}) \leq v_i$.
        \item[(iii)] $\int_1^{\infty} \frac{1}{v_i^2}\mathcal{M}^{\bhv}_i(v_i;\boldsymbol{v}_{-i})d v_i \leq 1$.
    \end{itemize}
\end{lemma}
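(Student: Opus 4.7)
The plan is to derive all three items from Myerson's characterization of truthful single-parameter mechanisms applied to bidder $i$ with the other bids and predictions held fixed. After conditioning on $\boldsymbol{v}_{-i}$ and $\bhv$, the mechanism, viewed as a function of $v_i$, is a truthful single-parameter mechanism in $v_i$. Therefore, its allocation rule $x_i(v_i;\boldsymbol{v}_{-i},\bhv) \in [0,1]$ is non-decreasing in $v_i$ and its expected payment rule is given by
\begin{align*}
  \mathcal{M}_i^{\bhv}(v_i;\boldsymbol{v}_{-i}) \;=\; \int_0^{v_i} z \cdot dx_i(z;\boldsymbol{v}_{-i},\bhv).
\end{align*}

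For item (i), monotonicity follows because the integrand is non-negative (the Stieltjes measure of a non-decreasing function), so increasing the upper limit $v_i$ can only increase $\mathcal{M}_i^{\bhv}$. For item (ii), individual rationality gives $v_i \cdot x_i - \mathcal{M}_i^{\bhv} \geq 0$, and combined with $x_i \leq 1$ we get $\mathcal{M}_i^{\bhv}(\boldsymbol{v}) \leq v_i \cdot x_i \leq v_i$. This part I would verify by a direct computation: $\int_0^{v_i} z \, dx_i(z) \leq v_i \int_0^{v_i} dx_i(z) = v_i \, x_i(v_i) \leq v_i$.

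Item (iii) is the substantive one and I would prove it by switching the order of integration. Writing
\begin{align*}
  \int_1^\infty \frac{1}{v_i^2} \int_0^{v_i} z \, dx_i(z;\boldsymbol{v}_{-i},\bhv)\, dv_i
  \;=\; \int_0^\infty z \left(\int_{\max(z,1)}^\infty \frac{dv_i}{v_i^2}\right) dx_i(z;\boldsymbol{v}_{-i},\bhv),
\end{align*}
the inner integral equals $1$ when $z \leq 1$ and $1/z$ when $z \geq 1$, so the whole expression becomes
\begin{align*}
  \int_0^1 z \, dx_i(z) + \int_1^\infty dx_i(z)
  \;=\; \Big[z \cdot x_i(z)\Big]_0^1 - \int_0^1 x_i(z)\, dz + \bigl(x_i(\infty) - x_i(1)\bigr)
  \;=\; x_i(\infty) - \int_0^1 x_i(z)\, dz,
\end{align*}
which is at most $1$ since $x_i \leq 1$ and the integral term is non-negative. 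An alternative route, which may be the one used in the cited papers, is to observe that against the equal-revenue distribution on $[1,\infty)$ the expected revenue of any posted-price mechanism is at most $1$, and Myerson's lemma lets one view bidder $i$'s contribution as exactly such a (possibly randomized) posted-price mechanism conditional on $\boldsymbol{v}_{-i}$. I expect no real obstacle here; the only care needed is to handle the Stieltjes integral when $x_i$ has jumps, which is standard.
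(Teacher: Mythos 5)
Your proposal is correct, but note that the paper itself gives no proof of this lemma: it is restated verbatim from \cite{goldberg2004lower,chen2014optimal}, and the route used there (and implicitly relied on later in the paper, e.g.\ in the proof of \lem{M1 revenue}, which invokes the fact that truthful digital-good auctions are bid-independent pricings) is the one you sketch as the ``alternative'': conditional on $\boldsymbol{v}_{-i}$, a truthful mechanism offers bidder $i$ a (possibly random) price $p$ independent of $b_i$, so $\mathcal{M}^{\bhv}_i(v_i;\boldsymbol{v}_{-i})=\E_p\bigl[p\cdot\mathbb{I}[v_i\geq p]\bigr]$, from which (i) and (ii) are immediate and (iii) is exactly the statement that no price extracts expected revenue more than $1$ from the equal-revenue distribution, since $\int_1^\infty v^{-2}\,p\,\mathbb{I}[v\geq p]\,dv=\min\{p,1\}\cdot\min\{1,1/p\}\cdot p/\max\{p,1\}\leq 1$ for every realized $p$. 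Your primary derivation via Myerson's payment identity plus Fubini and integration by parts is a valid, essentially equivalent argument; it is marginally more general in that it handles fractional allocations $x_i\in[0,1]$ directly without invoking the bid-independence characterization, at the cost of having to be explicit about the normalization it silently uses (a bidder with value $0$ pays $0$, i.e.\ ex-post individual rationality with no positive transfers, which is what makes (ii) and (iii) true at all) and about Stieltjes technicalities at jump points, which, as you say, are routine. So there is no gap; the only substantive difference from the source is Myerson-style payment calculus versus the random-posted-price view, and the latter gives a one-line proof of (iii).
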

We first bound the expected revenue of $\mathcal{M}^{\bhv}$ and expectation of $\F$ benchmark on instance distribution $\mathcal{G}_{|_V}$ in \lem{benchmark lower bound} and \lem{revenue upper bound}. Then we use a probabilistic method argument to show the lower bound in \thm{lower bound F}.

A direct calculation gives the following lemma,
\begin{lemma}\label{lem:benchmark lower bound}
    \[\mathbb{E}_{\boldsymbol{v}\sim \mathcal{G}_{|_V}}\left[\mathcal{F}^{(2)}(\boldsymbol{v})\right]  = \frac{4N}{\sqrt{N}-1} - \frac{2N \ln N}{(\sqrt{N}-1)^2}.\]
\end{lemma}
\begin{proof}
  Since $\mathcal{F}^{(2)}(\boldsymbol{v}) = 2\min\{v_1,v_2\}$, we have
    \begin{align*}
      \mathbb{E}_{\boldsymbol{v}\sim \mathcal{G}_{|_V}}\left[\mathcal{F}^{(2)}(\boldsymbol{v})\right] &= \int_{\sqrt{N}}^{N} \int_{\sqrt{N}}^{N} 2 g_{|_V}(v_1,v_2) \min\{v_1,v_2\} dv_1dv_2
      \\&= 2\int_{\sqrt{N}}^N dv_2\int_{v_2}^{N} \frac{N^2}{(\sqrt{N}-1)^2}\frac{2v_2}{v_1^2 v_2^2} dv_1
      \\& =\frac{2N^2}{(\sqrt{N}-1)^2}\int_{\sqrt{N}}^N\left(\frac{1}{v_2}-\frac{1}{N}\right)\frac{2}{v_2} dv_2
      \\& = \frac{4N}{\sqrt{N}-1} - \frac{4N \ln(\sqrt{N})}{(\sqrt{N}-1)^2}=\frac{4N}{\sqrt{N}-1} - \frac{2N \ln N}{(\sqrt{N}-1)^2}.
    \end{align*}
\end{proof}

Before we bound $\mathbb{E}_{\boldsymbol{v}\sim \mathcal{G}_{|_V}}\left[\mathcal{M}^{\bhv}(\boldsymbol{v}) \right]$ via the similar techniques developed by Goldberg et al.~\cite{goldberg2004lower}, we first present the following lemma which capture an important property of the revenue of the bidder $1$ given that $\bhv=(1,N)$.

\begin{lemma}\label{lem:M1 revenue}
    Let $\mathcal{M}$ be a $1$-consistent, $\alpha$-robust mechanism. If we fix the prediction to be $\hat{\boldsymbol{v}}=(1,N)$, then 
    \[\mathcal{M}^{\bhv}_1(v_1,v_2)\geq \frac{2}{\alpha}, \forall v_1\in [1,\infty), \forall v_2<N.\]
\end{lemma}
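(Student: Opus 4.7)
The plan is to exploit the full strength of $1$-consistency at the predicted point $\bhv=(1,N)$ to force bidder $2$'s revenue to vanish whenever $v_2<N$, after which robustness plus monotonicity hands us the desired lower bound on bidder $1$'s revenue.

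First I would pin down the mechanism's behavior at the perfect-prediction instance. By $1$-consistency, $\mathcal{M}^{\bhv}(1,N)=OPT(1,N)=N+1$, while part (ii) of \lem{revenue lemma} gives $\mathcal{M}^{\bhv}_1(1,N)\le 1$ and $\mathcal{M}^{\bhv}_2(1,N)\le N$; these force $\mathcal{M}^{\bhv}_1(1,N)=1$ and $\mathcal{M}^{\bhv}_2(1,N)=N$. Writing $x_2(1,v_2)\in[0,1]$ for bidder $2$'s allocation when bidder $1$ bids $1$, truthfulness (Myerson) gives
\begin{align*}
    \mathcal{M}^{\bhv}_2(1,v_2)=v_2\,x_2(1,v_2)-\int_0^{v_2}x_2(1,z)\,dz,
\end{align*}
with $x_2(1,\cdot)$ non-decreasing. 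Plugging $v_2=N$ and using $\mathcal{M}^{\bhv}_2(1,N)=N$ is only possible when $x_2(1,N)=1$ and $x_2(1,z)=0$ for almost every $z<N$. Monotonicity of $x_2(1,\cdot)$ upgrades this to $x_2(1,v_2)=0$ for every $v_2<N$, whence $\mathcal{M}^{\bhv}_2(1,v_2)=0$ on that range.

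Next I would invoke $\alpha$-robustness at the instance $(1,v_2)$. Since $\F(1,v_2)=2\min(1,v_2)$, for any $v_2\in[1,N)$ we have $\F(1,v_2)=2$, and the vanishing of $\mathcal{M}^{\bhv}_2(1,v_2)$ lets the robustness bound collapse to
\begin{align*}
    \mathcal{M}^{\bhv}_1(1,v_2)=\mathcal{M}^{\bhv}(1,v_2)\ge\frac{\F(1,v_2)}{\alpha}=\frac{2}{\alpha}.
\end{align*}
Finally, part (i) of \lem{revenue lemma} says $\mathcal{M}^{\bhv}_1(\cdot,v_2)$ is non-decreasing in its first argument, so for every $v_1\in[1,\infty)$ and $v_2\in[1,N)$, $\mathcal{M}^{\bhv}_1(v_1,v_2)\ge\mathcal{M}^{\bhv}_1(1,v_2)\ge 2/\alpha$.

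The step I expect to be the crux is the first one: passing from the scalar consistency equation to the pointwise statement that bidder $2$ contributes nothing whenever $v_2<N$. The bridging tool is Myerson's payment identity together with allocation monotonicity, which pins down $x_2(1,\cdot)$ essentially uniquely on $[0,N]$. A minor caveat is that the direct robustness bound only delivers $\mathcal{M}^{\bhv}_1(1,v_2)\ge 2v_2/\alpha$ when $v_2<1$; however, the lemma is subsequently applied on the support $V=[\sqrt{N},N]^2$ of $\mathcal{G}_{|_V}$ (with $N$ large), which is contained in $[1,N)^2$, so the $v_2\ge 1$ case is exactly what is needed.
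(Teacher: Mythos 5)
Your proof is correct and takes essentially the same route as the paper: $1$-consistency at $\bhv=(1,N)$ together with \lem{revenue lemma}(ii) forces $\Mcal^{\bhv}_1(1,N)=1$ and $\Mcal^{\bhv}_2(1,N)=N$, one then shows bidder $2$ contributes nothing when $v_2<N$, applies $\alpha$-robustness against $\F$ at $(1,v_2)$, and finishes with monotonicity in $v_1$. The only (harmless) difference is in the middle step, where you use Myerson's payment identity to pin down $x_2(1,\cdot)$ directly, while the paper argues via bid-independent pricing and the integral constraint of \lem{revenue lemma}(iii); these are equivalent, and your caveat about $v_2<1$ matches the paper's implicit restriction to values in $[1,\infty)$.
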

\begin{proof}
    Since $\mathcal{M}$ is $1$-consistent, we have $\mathcal{M}^{\bhv}_1(\hat{\boldsymbol{v}})+\mathcal{M}^{\bhv}_{2}(\hat{\boldsymbol{v}})\geq N+1$. By \lem{revenue lemma}, $\mathcal{M}_1^{\bhv}(\hat{\boldsymbol{v}})\leq 1, \mathcal{M}_2^{\bhv}(\hat{\boldsymbol{v}})\leq N$, we have $\mathcal{M}_1^{\bhv}(\hat{\boldsymbol{v}}) = 1, \mathcal{M}_{2}^{\bhv}(\hat{\boldsymbol{v}}) = N$. Since any truthful mechanism in the digital good auction is bid-independent pricing, $\Mcal^{\bhv}$ must offer the deterministic price $1$ to the bidder $1$ when $b_2=N$ to make sure $\Mcal^{\bhv}_1(\bhv)=1$. Similarly, $\Mcal^{\bhv}$ must offer the deterministic price $N$ to the bidder $2$ when $b_1=1$ to make sure $\Mcal^{\bhv}_2(\bhv)=N$. Therefore, we have,
\begin{align}\label{eq:1andN}
\mathcal{M}_1^{\bhv}(v_1,N) = 1, \forall v_1\geq 1, \quad \mathcal{M}_{2}^{\bhv}(1,v_2) = N, \forall v_2\geq N.
\end{align}
By \lem{revenue lemma} (iii), we have
\begin{align*}
    \int_{1}^{N}\frac{1}{v_2^2}\mathcal{M}_2^{\bhv}(1,v_2)dv_2 &\leq 1-\int_{N}^{\infty}\frac{1}{v_2^2}\mathcal{M}_2^{\bhv}(1, v_2)dv_2 
    \\&=1- \int_{N}^{\infty}\frac{1}{v_2^2}N dv_2 
    \\&=1-N\times \frac{1}{N}= 0.
\end{align*}
The first equality is because of equation \eq{1andN}.
Thus, $\mathcal{M}_2^{\bhv}(1,v_2) = 0,\forall v_2\in [1,N)$. Let $\alpha$ be the robustness ratio of $\mathcal{M}$, then we have
\begin{align}
    \mathcal{M}_1^{\bhv}(\boldsymbol{v})+\mathcal{M}_{2}^{\bhv}(\boldsymbol{v})\geq \frac{1}{\alpha}\mathcal{F}^{(2)}(\boldsymbol{v}), \forall \boldsymbol{v}\in [1,\infty)^2.
\end{align}
Set $\boldsymbol{v} = (1,v_2)$ where $v_2\in [1, N)$, we have $\mathcal{M}_1^{\bhv}(1,v_2)\geq \frac{1}{\alpha} \mathcal{F}^{(2)}(\boldsymbol{v})=\frac{2}{\alpha}, \forall v_2<N$. By \lem{revenue lemma} (i), $\mathcal{M}_1^{\bhv}$ is monotone with respect to $v_1$, so we obtain the desired result.
\end{proof}

Specifically, this lemma shows that the bidder $1$ must contribute (waste) a large revenue outside $V$ to maintain the perfect consistency. Thus, the value of $\mathbb{E}_{\boldsymbol{v}\sim \mathcal{G}_{|_V}}\left[\mathcal{M}^{\bhv}(\boldsymbol{v}) \right]$ is smaller compared to the setting without the consistency requirement.

\begin{lemma}\label{lem:revenue upper bound}
  \[\mathbb{E}_{\boldsymbol{v}\sim \mathcal{G}_{|_V}}\left[\mathcal{M}^{\bhv}(\boldsymbol{v}) \right]\leq \frac{N}{\sqrt{N}-1}\left(2-\frac{1}{\sqrt{N}}-\frac{2}{\alpha}+\frac{2}{\alpha\sqrt{N}}\right).\]
\end{lemma}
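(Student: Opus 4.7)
The plan is to expand the expectation as
\begin{equation*}
\mathbb{E}_{\bv\sim \mathcal{G}_{|_V}}\bigl[\Mcal^{\bhv}(\bv)\bigr] \;=\; \frac{N^2}{(\sqrt{N}-1)^2}\int_V \frac{\Mcal^{\bhv}_1(\bv)+\Mcal^{\bhv}_2(\bv)}{v_1^2 v_2^2}\,d\bv,
\end{equation*}
and then bound the contributions of the two bidders separately. The two sides of the argument are asymmetric because the prediction $\bhv=(1,N)$ is asymmetric, so different tools are needed for each.

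For bidder $1$, the plan is to fix any $v_2\in[\sqrt{N},N)$ and combine part (iii) of \lem{revenue lemma}, which gives $\int_1^{\infty}\Mcal^{\bhv}_1(v_1,v_2)/v_1^2\,dv_1\leq 1$, with the pointwise lower bound $\Mcal^{\bhv}_1(v_1,v_2)\geq 2/\alpha$ supplied by \lem{M1 revenue} (applicable because $v_2<N$). Integrating the lower bound over $v_1\in[1,\sqrt{N}]$ and discarding the nonnegative tail over $[N,\infty)$ yields
\begin{equation*}
\int_{\sqrt{N}}^N \frac{\Mcal^{\bhv}_1(v_1,v_2)}{v_1^2}\,dv_1 \;\leq\; 1-\frac{2}{\alpha}\Bigl(1-\frac{1}{\sqrt{N}}\Bigr),
\end{equation*}
and integrating this estimate over $v_2\in[\sqrt{N},N]$ contributes an extra factor $(\sqrt{N}-1)/N$ to the bidder $1$ contribution.

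For bidder $2$, \lem{M1 revenue} does not directly supply a useful lower bound, so the plan is to exploit the bid-independent pricing structure of truthful digital-good auctions: write $\Mcal^{\bhv}_2(v_1,v_2)=\mathbb{E}\bigl[p_2(v_1)\cdot\mathbb{I}[v_2\geq p_2(v_1)]\bigr]$ for a (possibly random) price $p_2(v_1)$ that does not depend on $v_2$. For each deterministic realisation $p\geq 0$, a direct calculation shows that $\int_{\sqrt{N}}^N p\cdot\mathbb{I}[v_2\geq p]/v_2^2\,dv_2$ equals $p\cdot(\sqrt{N}-1)/N$ for $p\leq\sqrt{N}$, equals $1-p/N$ for $p\in[\sqrt{N},N]$, and vanishes for $p>N$; in every case it is at most $1-\tfrac{1}{\sqrt{N}}$, with equality at $p=\sqrt{N}$. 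Taking expectation over the price randomness and integrating over $v_1\in[\sqrt{N},N]$ then produces a bidder $2$ contribution bounded by $\bigl(1-\tfrac{1}{\sqrt{N}}\bigr)\cdot(\sqrt{N}-1)/N$.

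Summing the two contributions and multiplying through by $N^2/(\sqrt{N}-1)^2$ yields exactly $\frac{N}{\sqrt{N}-1}\bigl(2-\tfrac{1}{\sqrt{N}}-\tfrac{2}{\alpha}+\tfrac{2}{\alpha\sqrt{N}}\bigr)$. The main obstacle is the bidder $2$ bound: the naive use of \lem{revenue lemma}(iii) alone yields only $\int_{\sqrt{N}}^N \Mcal^{\bhv}_2/v_2^2\,dv_2\leq 1$, which is too loose by the decisive $\tfrac{1}{\sqrt{N}}$ term. The improvement genuinely uses the characterisation of truthful digital-good mechanisms as bid-independent pricing, together with the observation that the single-price integrand is maximised inside $V$ at $p=\sqrt{N}$.
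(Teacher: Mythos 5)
Your proof is correct, and for bidder $2$ it takes a genuinely different route from the paper. For bidder $1$ the two arguments coincide exactly: subtract the $\int_1^{\sqrt N}$ mass forced by \lem{M1 revenue} from the total budget of $1$ supplied by \lem{revenue lemma}(iii). For bidder $2$, however, the paper proceeds abstractly: it defines $h(x)=\int_{\sqrt N}^{1/x}\Mcal^{\bhv}_2(v_1,v_2)/v_2^2\,dv_2$, observes that $h'(x)=-\Mcal^{\bhv}_2(v_1,1/x)$ is non-decreasing by the monotonicity of $\Mcal^{\bhv}_2$ in $v_2$, hence $h$ is convex, and then bounds $h(1/N)$ by the chord of $h$ between $x=0$ (where $h(0)\leq 1$ by \lem{revenue lemma}(iii)) and $x=1/\sqrt N$ (where $h=0$), yielding $h(1/N)\leq 1-1/\sqrt N$. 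You instead unwind the mechanism into its bid-independent posted-price form, compute $\int_{\sqrt N}^N p\,\mathbb{I}[v_2\geq p]/v_2^2\,dv_2$ explicitly for each price $p$, and observe that this single-price revenue integral is maximized at $p=\sqrt N$ with the same value $1-1/\sqrt N$. Both approaches are sound and give the identical bound; yours makes the extremal price transparent and bypasses \lem{revenue lemma} for bidder $2$ entirely, while the paper's convexity argument only needs monotonicity of $\Mcal^{\bhv}_2$ and the revenue-budget bound, and so would survive in settings where an explicit pricing decomposition is less convenient to invoke. The final arithmetic combining the two contributions matches in both treatments.
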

\begin{proof}
  \begin{align*}
    \mathbb{E}_{\boldsymbol{v}\sim \mathcal{G}_{|_V}}\left[\mathcal{M}_1^{\bhv}(\boldsymbol{v})\right]
    &=\frac{N^2}{(\sqrt{N}-1)^2}\int_{\sqrt{N}}^N dv_2 \int_{\sqrt{N}}^N \frac{1}{v_1^2 v_2^2}\mathcal{M}^{\bhv}_1(v_1,v_2) dv_1 .
  \end{align*}
  By \lem{M1 revenue} and \lem{revenue lemma} (iii),
  \begin{align*}
    \int_{\sqrt{N}}^N \frac{1}{v_1^2 v_2^2}\mathcal{M}^{\bhv}_1(v_1,v_2) dv_1
    &=\int_{1}^{N} \frac{1}{v_1^2 v_2^2}\mathcal{M}^{\bhv}_1(v_1,v_2) dv_1-\int_{1}^{\sqrt{N}} \frac{1}{v_1^2 v_2^2}\mathcal{M}^{\bhv}_1(v_1,v_2) dv_1
    \\&\leq \frac{1}{v_2^2}-\int_{1}^{\sqrt{N}} \frac{2}{\alpha v_1^2 v_2^2}dv_1
    \\&=\frac{1}{v_2^2}\left(1-\frac{2}{\alpha}+\frac{2}{\alpha\sqrt{N}}\right).
  \end{align*}
  Thus,
  \begin{align*}
    \mathbb{E}_{\boldsymbol{v}\sim \mathcal{G}_{|_V}}\left[\mathcal{M}^{\bhv}_1(\boldsymbol{v})\right]
    &\leq \frac{N^2}{(\sqrt{N}-1)^2} \left(1-\frac{2}{\alpha}+\frac{2}{\alpha\sqrt{N}}\right)\int_{\sqrt{N}}^N \frac{1}{v_2^2} dv_2
    \\&\leq \frac{N}{\sqrt{N}-1} \left(1-\frac{2}{\alpha}+\frac{2}{\alpha\sqrt{N}}\right).
  \end{align*}
  Next, we bound $\mathbb{E}_{\boldsymbol{v}\sim \mathcal{G}_{|_V}}\left[\mathcal{M}^{\bhv}_2(\boldsymbol{v})\right]$.
  \begin{align*}
    \mathbb{E}_{\boldsymbol{v}\sim \mathcal{G}_{|_V}}\left[\mathcal{M}^{\bhv}_2(\boldsymbol{v})\right]
    &= \frac{N^2}{(\sqrt{N}-1)^2}\int_{\sqrt{N}}^N \int_{\sqrt{N}}^N \frac{1}{v_1^2 v_2^2} \mathcal{M}^{\bhv}_2(\boldsymbol{v}) dv_2dv_1.
  \end{align*}
  We consider the function $h(x) = \int_{\sqrt{N}}^{\frac{1}{x}} \frac{1}{v_2^2}\mathcal{M}^{\bhv}_2(\boldsymbol{v})dv_2$, the derivative of $h$ is $h'(x) = -\mathcal{M}^{\bhv}_2(v_1,\frac{1}{x})$. Since $\mathcal{M}^{\bhv}_2(v_1,v_2)$ is non-decreasing, $\mathcal{M}^{\bhv}_2(v_1,\frac{1}{x})$ is non-increasing w.r.t. $x$. Therefore $h'(x)$ is non-decreasing, which implies that $h(x)$ is convex. By Jensen's inequality,
  \begin{align*}
    h\left(\frac{1}{N}\right)
    &\leq \frac{\frac{1}{\sqrt{N}}-\frac{1}{N}}{\frac{1}{\sqrt{N}}}h(0)+\frac{\frac{1}{N}}{\frac{1}{\sqrt{N}}}h\left(\frac{1}{\sqrt{N}}\right)
    \\&=\left(1-\frac{1}{\sqrt{N}}\right)h(0)+\frac{1}{\sqrt{N}} h\left(\frac{1}{\sqrt{N}}\right)
    \\&=\left(1-\frac{1}{\sqrt{N}}\right)\int_{\sqrt{N}}^{\infty} \frac{1}{v_2^2}\mathcal{M}^{\bhv}_2(\boldsymbol{v})dv_2
    \\&\leq 1-\frac{1}{\sqrt{N}}.
  \end{align*}
  Then,
  \begin{align*}
    \mathbb{E}_{\boldsymbol{v}\sim \mathcal{G}_{|_V}}\left[\mathcal{M}^{\bhv}_2(\boldsymbol{v})\right]
    &= \frac{N^2}{(\sqrt{N}-1)^2}\int_{\sqrt{N}}^N \frac{h\left(\frac{1}{N}\right)}{v_1^2}dv_1
    \leq \sqrt{N}.
  \end{align*}
  Therefore, the total expected revenue is bounded as
  \begin{align*}
    \mathbb{E}_{\boldsymbol{v}\sim \mathcal{G}_{|_V}}\left[\mathcal{M}^{\bhv}(\boldsymbol{v})\right]&\leq \frac{N}{\sqrt{N}-1} \left(1-\frac{2}{\alpha}+\frac{2}{\alpha\sqrt{N}}\right)+ \sqrt{N}
    =\frac{N}{\sqrt{N}-1}\left(2-\frac{1}{\sqrt{N}}-\frac{2}{\alpha}+\frac{2}{\alpha\sqrt{N}}\right).
  \end{align*} 
\end{proof}

\begin{thmbis}{thm:lower bound F}
  Any $1$-consistent truthful mechanism for digital good auction has a robustness ratio against $\F$ no less than $3$.
\end{thmbis}
\begin{proof}
  By \lem{benchmark lower bound} and \lem{revenue upper bound},
  \begin{align*}
    \frac{\mathbb{E}_{\boldsymbol{v}\sim \mathcal{G}_{|_V}}\left[\mathcal{F}^{(2)}(\boldsymbol{v})\right]}{\mathbb{E}_{\boldsymbol{v}\sim \mathcal{G}_{|_V}}\left[\mathcal{M}^{\bhv}(\boldsymbol{v})\right]}
    &\geq \frac{\frac{4N}{\sqrt{N}-1} - \frac{2N \ln(N)}{(\sqrt{N}-1)^2}}{\frac{N}{\sqrt{N}-1}\left(2-\frac{1}{\sqrt{N}}-\frac{2}{\alpha}+\frac{2}{\alpha\sqrt{N}}\right)}\geq \frac{4-\frac{2 \ln N }{\sqrt{N}-1}}{2-\frac{1}{\sqrt{N}}-\frac{2}{\alpha}+\frac{2}{\alpha\sqrt{N}}}.
  \end{align*}

  Thus there must exist a $\boldsymbol{v}\in V$ such that,
  \[\frac{\mathcal{F}^{(2)}(\boldsymbol{v})}{\mathbb{E}\left[\mathcal{M}^{\bhv}(\boldsymbol{v})\right]}\geq \frac{4-\frac{2 \ln N }{\sqrt{N}-1}}{2-\frac{2}{\alpha}-\frac{1}{\sqrt{N}}(1-\frac{2}{\alpha})}.\]
  The expectation is on the randomness of the randomized mechanism. Note that $\alpha$ is the robustness ratio of $\mathcal{M}$, then
  \[\alpha\geq \frac{4-\frac{2 \ln N }{\sqrt{N}-1}}{2-\frac{2}{\alpha}-\frac{1}{\sqrt{N}}(1-\frac{2}{\alpha})}\Longrightarrow \alpha\geq \frac{3-\frac{\ln N}{\sqrt{N}-1}-\frac{1}{\sqrt{N}}}{1-\frac{1}{2\sqrt{N}}}.\]
  Let $N\rightarrow \infty$, we have $\alpha\geq 3$.
\end{proof}

The optimal competitive ratio in digital good auctions is $2.42$ \cite{chen2014optimal}. Our robustness lower bound shows that it is impossible to maintain the optimal worst-case ratio while having perfect consistency.

\begin{remark}
    Note that the digital good auction is a special case of the more general limited-supply auction and downward-closed permutation environments, and the $\F$ benchmark is the same benchmark with $\EFOt{}$ or $\Fl$ in the digital good auction. Therefore, our lower bound in this section is also the lower bound for more general limited-supply auctions and downward-closed permutation environments.
\end{remark}

\section{Discussion}
    In this paper, we design perfect consistent learning-augmented mechanisms for several competitive auctions. It is worth mentioning that we consider a maximization problem that has a trivial random combination approach to design a learning-augmented mechanism. The existing works on the learning-augmented mechanism mainly focus on deterministic mechanism. We proposed the Bid-Independent Combination trick, which significantly outperforms the random combination approach. We recognize it as a conceptual contribution that imperfect predictions can also be leveraged non-trivially in the randomized mechanism design and maximization problems.
    
    Since our mechanisms employ mechanisms without predictions, an interesting question is whether one can improve the robustness ratio (with perfect consistency) by using non-black-box mechanisms. Specifically, let us focus on the simplest digital good auction, what is the optimal robustness ratio with perfect consistency? We have already shown that the optimal ratio is in $[3,4.42]$. The immediate idea is to revisit the proof idea of the optimal $2.42$ competitive ratio \cite{chen2014optimal}. However, the proof heavily relies on the fact that the revenue function of the optimal mechanism can be monotone. This property is inherently unachievable by mechanisms with perfect consistency.

\section*{Aknowledgement}
    This work was supported by the National Natural Science Foundation of China Grants No. 62325210 and No. 62272441. The authors thank anonymous reviewers for their helpful suggestions.

\bibliographystyle{alphaUrlePrint}
\bibliography{mech}

\appendix
\section{Missing Proofs in Section \ref{sect:error-tolerant}}
\subsection{Proof of Theorem \ref{thm:et dc}}\label{append:A}
We first show some properties of the error-tolerant Discard-and-Limit weakest competitor VCG mechanism that are similar to \lem{payment lower no ET}$\sim$\lem{DWC no ET}. The proofs are simply generalized from the proof of \lem{payment lower no ET}$\sim$\lem{DWC no ET}. For simplicity, we omit the subscript of $\mathtt{DaL}_{\bhv,\gamma}$ in this section.

\begin{lemma}\label{lem:payment lower}
    Given bid vector $\boldsymbol{b}$ and private value lower bound $\bv^{\downarrow}$. If $v^{\downarrow}_i\in [\frac{1}{\gamma}\hat{v}_i, \gamma \hat{v}_i], \forall i\in [n]$, the payment $p_{\mathtt{DWC}\mbox{-}\mathtt{VCG},i}$ calculated from \eq{payment} satisfies $p_{\mathtt{DWC}\mbox{-}\mathtt{VCG},i} \geq x_{\mathtt{DWC\mbox{-}VCG},i}\cdot v^{\downarrow}_i$. If $b_i\geq v_i^{\downarrow}$, then $x_{\mathtt{DWC\mbox{-}VCG},i}\cdot b_i \geq p_{\mathtt{DWC}\mbox{-}\mathtt{VCG},i}$.
\end{lemma}
\begin{proof}
\begin{align*}
    p_{\mathtt{DWC}\mbox{-}\mathtt{VCG},i} &= \max_{x\in\X} \left(x_i\cdot v_i^{\downarrow} +\sum_{j\neq i} x_j\cdot \mathtt{DaL}(b_j)\right)-\sum_{j\neq i}x_{\mathtt{DWC}\mbox{-}\mathtt{VCG},j}\cdot \mathtt{DaL}(b_j)
    \\&\geq x_{\mathtt{DWC}\mbox{-}\mathtt{VCG},i}\cdot v_i^{\downarrow} +\sum_{j\neq i} x_{\mathtt{DWC}\mbox{-}\mathtt{VCG},j}\cdot \mathtt{DaL}(b_j) - \sum_{j\neq i}x_{\mathtt{DWC}\mbox{-}\mathtt{VCG},j}\cdot \mathtt{DaL}(b_j)
    \\&\geq x_{\mathtt{DWC\mbox{-}VCG},i}\cdot v^{\downarrow}_i.
\end{align*}
    For the second inequality,
    \begin{align*}
        p_{\mathtt{DWC}\mbox{-}\mathtt{VCG},i} &= \max_{x\in\X} \left(x_i\cdot v_i^{\downarrow} +\sum_{j\neq i} x_j\cdot \mathtt{DaL}(b_j)\right)-\sum_{j\neq i}x_{\mathtt{DWC}\mbox{-}\mathtt{VCG},j}\cdot \mathtt{DaL}(b_j)
    \\&\leq \max_{x\in\X} \left(x_i\cdot \mathtt{DaL}(b_i) +\sum_{j\neq i} x_j\cdot \mathtt{DaL}(b_j)\right)-\sum_{j\neq i}x_{\mathtt{DWC}\mbox{-}\mathtt{VCG},j}\cdot \mathtt{DaL}(b_j)
    \\&= \sum_{j\in [n]}x_{\mathtt{DWC\mbox{-}VCG},i}\cdot \mathtt{DaL}(b_j  ) -\sum_{j\neq i}x_{\mathtt{DWC\mbox{-}VCG},i}\cdot \mathtt{DaL}(b_j) = x_{\mathtt{DWC\mbox{-}VCG},i}\cdot \mathtt{DaL}(b_i)
    \\&\leq x_{\mathtt{DWC\mbox{-}VCG},i}\cdot b_i.
    \end{align*}
\end{proof}

\begin{lemma}\label{lem:DWC truthful}
    The Error-Tolerant Discard-and-Limit Weakest Competitor VCG Mechanism is truthful.
\end{lemma}
\begin{proof}
    Let $\boldsymbol{x}'$ be the allocation when the bidder $j$ report $b_j'$ and other bidders report $\boldsymbol{b}_{-j}$. Therefore, $\boldsymbol{x}' = \argmax_{\boldsymbol{x}^*\in \X} \left(x_j^*]\cdot \mathtt{DaL}(b_j')+\sum_{i\neq j} x_i^* \cdot \mathtt{DaL}(b_i)\right)$. The utility of bidder $j$ is $ x_j' \cdot v_j + \sum_{i\neq j} x_i' \cdot \mathtt{DaL}(b_i) - \max_{\boldsymbol{x}^*\in \X}\left(x^*_j\cdot \hat{v}_j + \sum_{i\neq j}x^*_i\cdot \mathtt{DaL}(b_i)\right) $. Since the $\max$ term is independent with $j$'s reported value, the bidder $j$ should maximize the term $x_j' \cdot v_j + \sum_{i\neq j} x_i' \cdot \mathtt{DaL}(b_i)$ to maximize his utility. 
    
    If $v_j\in [\frac{1}{\gamma}\hat{v}_j,\gamma\hat{v}_j]$, then $v_j=\mathtt{DaL}(v_j)$, reporting truthfully leads to the allocation which maximizes $x_j' \cdot \mathtt{DaL}(v_j) + \sum_{i\neq j} x_i' \cdot \mathtt{DaL}(b_i) = x_j' \cdot v_j + \sum_{i\neq j} x_i' \cdot \mathtt{DaL}(b_i)$. Therefore, the bidder $j$ will report truthfully.

    If $v_j>\gamma \hat{v}_j$, we rewrite the term as
    \[x_j' \cdot (v_j-\gamma \hat{v}_j)+ \underbrace{x_j'\cdot \mathtt{DaL}(v_j)+\sum_{i\neq j} x_i' \cdot \mathtt{DaL}(b_i)}_{(A)}.\]
    The $(A)$ term is maximized by reporting truthfully. Now we focus on the first term and prove that it is also maximized by reporting truthfully. Note that if $j$ report $b_j'\geq \gamma \hat{v}_j$, the resulting allocation $x_j'$ is the same since the bid is truncated to $\gamma \hat{v}_j$. Thus misreporting $b_j'\geq \gamma\hat{v}_j$ does not change the utility. Let $\boldsymbol{x}$ be the allocation when $j$ reports truthfully. If $j$ report $b_j'< \gamma \hat{v}_j$, we prove that $x_j'\leq x_j$. Suppose that $x_j'>x_j$, we have 
    \begin{equation}\label{eq:90}
        x_j' \cdot  \mathtt{DaL}(b_j')+\sum_{i\neq j}x_i' \cdot  \mathtt{DaL}(b_i)=\max_{\boldsymbol{x}^*\in \X}\left(x_j^* \cdot  \mathtt{DaL}(b_j')+\sum_{i\neq j}x_i^* \cdot  \mathtt{DaL}(b_i)\right)\geq x_j \cdot  \mathtt{DaL}(b_j')+\sum_{i\neq j}x_i \cdot  \mathtt{DaL}(b_i)
    \end{equation}
    and 
    \[x_j \cdot  \mathtt{DaL}(v_j)+\sum_{i\neq j}x_i \cdot  \mathtt{DaL}(b_i)=\max_{\boldsymbol{x}^*\in \X}\left( x_j^* \cdot  \mathtt{DaL}(v_j)+\sum_{i\neq j}x^*_i \cdot  \mathtt{DaL}(b_i)\right) \geq x'_j \cdot  \mathtt{DaL}(v_j)+\sum_{i\neq j}x'_i \cdot  \mathtt{DaL}(b_i) .\]
    Since $v_j> b_j'$ and $x_j'>x_j$, we have 
    \begin{align*}
        \sum_{i\neq j}(x_i-x_i')\cdot \mathtt{DaL}(b_i) \geq (x_j'-x_j)\cdot \mathtt{DaL}(v_j)> (x_j'-x_j)\cdot \mathtt{DaL}(b_j')
    \end{align*}
    Rearrange the inequality, we have
    \[x_j' \cdot  \mathtt{DaL}(b_j')+\sum_{i\neq j}x_i' \cdot  \mathtt{DaL}(b_i)< x_j \cdot  \mathtt{DaL}(b_j')+\sum_{i\neq j}x_i \cdot  \mathtt{DaL}(b_i),\]
    which is contradictory to \eq{90}. Therefore, if $j$ misreports $b_j'<\gamma \hat{v}_j$, $x_j' \cdot (v_j-\gamma \hat{v}_j)$ will not increase and the first term is maximized by reporting truthfully.

    If $v_j< \frac{1}{\gamma}\hat{v}_j$, by \lem{payment lower}, the unit payment is at least $ v_i^{\downarrow} \geq \frac{1}{\gamma}\hat{v}_j>v_j$, if $j$ report a higher value $b_j'\geq v_i^{\downarrow}$, the utility will be negative. If $j$ report value $b_j' < v_i^{\downarrow}$, it will be rejected directly and receive $0$ utility.
\end{proof}

\begin{lemma}\label{lem:DWC}
     If $v^{\downarrow}_i$'s are sampled from $\mathcal{P}_{\hat{v}_i,\lambda,\gamma}$ independently, Then the total expected revenue of $\mathtt{DWC\mbox{-}VCG}_{\bv^{\downarrow},\gamma,\bhv}$ is at least $\frac{OPT_{\X}(\mathtt{DaL}(\bv))}{\min\{\gamma^2,\lambda\}\lceil \log_{\lambda}(\gamma^2)\rceil}$.
\end{lemma}
\begin{proof}
    Let $\widetilde{v}_i^{\downarrow}$ be the maximal value in $\mbox{supp}(\mathcal{P}_{\hat{v}_i,\lambda,\gamma})$ such that $\widetilde{v}_i^{\downarrow}\leq v_i$. 
    
    For bidder $i$ with $v_i\geq \frac{1}{\gamma}\hat{v}_i$, we have $\widetilde{v}_i^{\downarrow}\geq \frac{1}{\min\{\gamma^2,\lambda\}}\max\{v_i,\gamma\hat{v}_i\}= \frac{1}{\min\{\gamma^2,\lambda\}}\mathtt{DaL}(v_i)$. By \lem{payment lower}, when $v_i^{\downarrow}=\widetilde{v}_i^{\downarrow}$, $p_{\mathtt{DWC}\mbox{-}\mathtt{VCG},i}\geq x_{\mathtt{DWC\mbox{-}VCG},i}\cdot v^{\downarrow}_i\geq \frac{1}{\min\{\gamma^2,\lambda\}}x_{\mathtt{DWC\mbox{-}VCG},i}\cdot \mathtt{DaL}(v_i)$. Since $\Pr(v_i^{\downarrow}=\widetilde{v}_i^{\downarrow})=\frac{1}{\lceil\log_{\lambda}(\gamma^2)\rceil}$, the expected payment of bidder $i$ is at least $\frac{1}{\min\{\gamma^2,\lambda\}\lceil\log_{\lambda}(\gamma^2)\rceil}x_{\mathtt{DWC\mbox{-}VCG},i}\cdot \mathtt{DaL}(v_i)$. Then the total expected revenue is at least 
    \begin{align*}
        \sum_{v_i\geq \frac{1}{\gamma}\hat{v}_i}\frac{1}{\min\{\gamma^2,\lambda\}\lceil\log_{\lambda}(\gamma^2)\rceil}x_{\mathtt{DWC\mbox{-}VCG},i}\cdot \mathtt{DaL}(v_i) &= \sum_{i\in [n]}\frac{1}{\min\{\gamma^2,\lambda\}\lceil\log_{\lambda}(\gamma^2)\rceil}x_{\mathtt{DWC\mbox{-}VCG},i}\cdot \mathtt{DaL}(v_i)
        \\&= \frac{1}{\min\{\gamma^2,\lambda\}\lceil\log_{\lambda}(\gamma^2)\rceil} OPT_{\X}(\mathtt{DaL}(\bv)).
    \end{align*}
    Where the first equality is because $\mathtt{DaL}(v_i)=0$ when $v_i<\frac{1}{\gamma}\hat{v}_i$.
\end{proof}

Similar to the proof of \thm{general}, we first prove the revenue guarantee of three sub-mechanisms.

\begin{lemma}
    \mech{et rank2 1} is truthful and feasible. Under the symmetric environment, assuming that $\mathcal{M}$ is $\alpha$-competitive against $\EFOt{\X}$, \mech{et rank2 1} is $(\min\{\gamma^2,\lambda\}\lceil \log_{\lambda}(\gamma^2) \rceil)$-competitive against $OPT_{\X}(\bv)\cdot \mathbb{I}[\#_{\gamma}(\bv\neq \bhv)=0]$ and $\alpha$-competitive against the benchmark $EFO^{(2)}_{\mathcal{X}}(\boldsymbol{v})\cdot \mathbb{I}[\#_{\gamma}(\boldsymbol{v}\neq \hat{\boldsymbol{v}})\geq 3]$.
\end{lemma}
\begin{proof}
    \textbf{Truthfulness:} Since $\#_{\gamma}(\bv_{-i}\neq \bhv_{-i})$ is independent with $b_i$ and \mech{et rank2 1} is a bid-independent combination of $\mathtt{DWC\mbox{-}VCG}_{\bv^{\downarrow},\gamma,\bhv}$, $\Mcal_{\emptyset}$ and $\Mcal$, which are truthful. Then \mech{et rank2 1} is truthful.
    
    \textbf{Feasibility:} When $\#_{\gamma}(\bv\neq \bhv)=0$, \mech{et rank2 1} degenerates to $OPT_{\X}^{\lambda,\gamma}(\bhv)$ mechanism, thus it is feasible. When $\#_{\gamma}(\bv\neq \bhv) \geq 3$, then $\forall i\in [n]$, we have $\#_{\gamma}(\bv_{-i}\neq \bhv_{-i})\geq 2$, the mechanism degenerates to $\Mcal$ and becomes feasible. When $\#_{\gamma}(\bv\neq \bhv) = 1$, only $\mathtt{DWC\mbox{-}VCG}_{\bv^{\downarrow},\gamma,\bhv}$ and $\Mcal_{\emptyset}$ are active, then the allocation is feasible since $\X$ is downward-closed. When $\#_{\gamma}(\boldsymbol{v}\neq\hat{\boldsymbol{v}}) = 2$, only $\Mcal$ and $\Mcal_{\emptyset}$ are active, the allocation is also feasible.

    \textbf{Competitive ratio:} When $\#_{\gamma}(\bv\neq \bhv)=0$, \mech{et rank2 1} degenerates to $\mathtt{DWC\mbox{-}VCG}_{\bv^{\downarrow},\gamma,\bhv}$. By \lem{DWC}, the expected revenue is at least $\frac{OPT_{\X}(\bv)}{\min\{\gamma^2,\lambda\}\lceil \log_{\lambda}(\gamma^2)\rceil}$. When $\#_{\gamma}(\boldsymbol{v}\neq \hat{\boldsymbol{v}})\geq 3$, \mech{et rank2 1} degenerates to $\mathcal{M}$, thus it is feasible and $\alpha$-competitive against $EFO_{\mathcal{X}}^{(2)}(\boldsymbol{v})$.
\end{proof}

\begin{lemma}
    \mech{et insensitive benchmark} is truthful and feasible. Moreover, it is $(\min\{\gamma^2,\lambda\}\lceil \log_{\lambda}(\gamma^2) \rceil)$-competitive against $OPT_{\X}(\bv)\cdot \mathbb{I}[\#_{\gamma}(\bv\neq \bhv)=0]$ and $3(\min\{\gamma^2,\lambda\}\lceil \log_{\lambda}(\gamma^2) \rceil)$-competitive against the benchmark $EFO_{\X}^{(2)}(\bv)\cdot\mathbb{I}[\#(\bv\neq \bhv) = 1]+EFO_{\X}^{(3)}\mathbb{I}[\#(\bv\neq \bhv) = 2]$.
\end{lemma}
\begin{proof}
    Let $\boldsymbol{x}$ be the allocation of \mech{et insensitive benchmark}.
    
    \textbf{Truthfulness} is due to \lem{DWC truthful}. \textbf{Feasibility} is obvious. 

    \textbf{Competitive ratio:} By \lem{DWC}, the total expected revenue is at least $\frac{OPT_{\X}(\mathtt{DaL}(\bv))}{\min\{\gamma^2,\lambda\}\lceil \log_{\lambda}(\gamma^2)\rceil}$. When $\#_{\gamma}(\bv\neq \bhv)=0$, $\mathtt{DaL}(\bv) = \bv$. Therefore the expected revenue is $\frac{OPT_{\X}(\bv)}{\min\{\gamma^2,\lambda\}\lceil \log_{\lambda}(\gamma^2)\rceil}$.

    When $\#_{\gamma}(\bv\neq \bhv) \geq 1$, let $S := \{k \mid v_k\notin [\frac{1}{\gamma}\hat{v}_k, \gamma \hat{v}_k]\}$, then $\mathtt{DaL}(\bv)\geq \bv_{-S}$. For any $m\geq 1$
    \[OPT_{\X}(\mathtt{DaL}(\bv))\geq OPT_{\X}(\bv_{-S})\geq EFO_{\X}(\bv_{-S})\geq \frac{m+1-|S|}{m+1}EFO^{(m+1)}_{\X}(\bv).\]
    Let $m=2$ and $m=3$, we have the expected revenue is at least
    \begin{align*}
        &\frac{1}{\min\{\gamma^2,\lambda\}\lceil \log_{\lambda}(\gamma^2)\rceil}\max\left\{\frac{1}{2} EFO_{\X}^{(2)}(\bv)\cdot\mathbb{I}[\#_{\gamma}(\bv\neq \bhv) = 1],\frac{1}{3} EFO_{\X}^{(3)}(\bv)\cdot\mathbb{I}[\#_{\gamma}(\bv\neq \bhv) = 2] \right\} 
        \\&\geq\frac{1}{3\min\{\gamma^2,\lambda\}\lceil \log_{\lambda}(\gamma^2)\rceil} \left(EFO_{\X}^{(2)}(\bv)\cdot\mathbb{I}[\#_{\gamma}(\bv\neq \bhv) = 1]+ EFO_{\X}^{(3)}(\bv)\cdot\mathbb{I}[\#_{\gamma}(\bv\neq \bhv) = 2]\right).
    \end{align*}
    
\end{proof}
\begin{lemma}\label{lem:et rank2 4}
    \mech{et rank2 4} is truthful and feasible. Moreover, \mech{et rank2 4} is $\min\{\gamma^2,\lambda\} \lceil \log_{\lambda}(\gamma^2) \rceil$-competitive against $OPT_{\X}(\bv)\cdot \mathbb{I}[\#_{\gamma}(\bv\neq \bhv)=0]$ and $4$-competitive against the benchmark $2v_2\cdot \mathbb{I}[\#_{\gamma}(\bv\neq \bhv)=2]$.
\end{lemma}

\begin{proof}
    \textbf{Feasibility:} When $\#_{\gamma}(\boldsymbol{v}\neq \hat{\boldsymbol{v}})= 1$, let $j$ be the bidder with $v_j\notin [\frac{1}{\gamma}\hat{v}_j,\gamma\hat{v}_j]$. In this case, $j$ is applied with $\mathtt{DWC\mbox{-}VCG}_{\bv^{\downarrow},\gamma,\bhv}$, and others are applied with $\mathtt{ResVic}_j$. Let the allocation be $\boldsymbol{x}$, then we have $x_j = x^*_j$, $x_i = x_i^{\downarrow}, \forall i\neq j$. Note that $x_i^{\downarrow} >0$ only when $i=1$. If $j=1$, then $\boldsymbol{x}\leq \boldsymbol{e}_1$ thus $\boldsymbol{x}$ is feasible. If $v_j<\frac{1}{\gamma}\hat{v}_j$, then $x_j = 0$ and the feasibility holds.
    Therefore, we assume $v_j>\gamma\hat{v}_j$ and $j\neq 1$ below.
    Under this assumption, $j\neq \rho(1)$, otherwise $v_j> \gamma\hat{v}_j=\gamma\hat{v}_{\rho(1)}= \mathtt{DaL}_{\hat{v}_\rho(1),\gamma}(v_{\rho(1)})\geq\mathtt{DaL}_{\hat{v}_k,\gamma}(v_k)=v_k, \forall k\neq j$, where the last equality comes from the fact that $v_k\in [\frac{1}{\gamma}\hat{v}_j, \gamma\hat{v}_j]$ and the definition of $\mathtt{Dal}_{\bhv,\gamma}$. Thus $j=1$, which contradicts our assumption. By the definition of $\mathtt{ResVic}_j$, $x_1=\max\{x^*_{\rho(1)},1-x^*_{j}\}$. 
    If $x_1=x^*_{\rho(1)}$, we have $\boldsymbol{x}= x_{\rho(1)}^*\boldsymbol{e}_{1}+x_j^*\boldsymbol{e}_{j}$.  Since $\rho(1)\neq j$, we have $x_{\rho(1)}^*\boldsymbol{e}_{\rho(1)}+x_j^* \boldsymbol{e}_{j}\leq \boldsymbol{x}^*\in\X$, then $\boldsymbol{x}\in \X$ by the symmetry of $\X$. If $x_1 = 1- x^*_j$, then $\boldsymbol{x}\leq x_j^* \boldsymbol{e}_j+ (1-x_j^*)\boldsymbol{e}_1\in \X$ by the convexity of $\X$ since $\boldsymbol{e}_j\X$ and $\boldsymbol{e}_1\in\X$. Therefore $\boldsymbol{x}\in \X$.
    
    When $\#_{\gamma}(\boldsymbol{v}\neq \hat{\boldsymbol{v}})\geq 2$, all bidders are applied with the single-item Vickrey auction or its restricted version,  only $x_1>0$, so the allocation is feasible.
    
    
    \textbf{Competitive ratio:}
    When $\#_{\gamma}(\boldsymbol{v} \neq \hat{\boldsymbol{v}})=0$, the mechanism degenerates to $\mathtt{DWC\mbox{-}VCG}_{\bv^{\downarrow},\gamma,\bhv}$. By \lem{DWC}, it is $\min\{\gamma^2,\lambda\} \lceil \log_{\lambda}(\gamma^2) \rceil$-competitive against $OPT_{\X}(\mathtt{DaL}(\bv))=OPT_{\X}(\bv)$.

    When $\#_{\gamma}(\boldsymbol{v} \neq \hat{\boldsymbol{v}})=2$, if $\#_{\gamma}(\boldsymbol{v}_{-1}\neq \hat{\boldsymbol{v}}_{-1})= 2$, then bidder $1$ is applied with single-item Vickrey auction and pay $v_2$, thus the revenue is at least $v_2$. If $\#_{\gamma}(\boldsymbol{v}_{-1}\neq \hat{\boldsymbol{v}}_{-1})= 1$, let $j\neq 1$ be the bidder with $v_j\notin [\frac{1}{\gamma}\hat{v}_j,\gamma \hat{v}_j]$, the bidder $1$ is applied with $\mathtt{ResVic}_j$, and pays $\max\{x^*_1,1-x^*_j\}\cdot v_2$. Since  $\mathtt{DaL}_{\hat{v}_j,\gamma}(v_j)\leq \mathtt{DaL}_{\hat{v}_{\rho(1)},\gamma}(v_{\rho(1)})$ and \lem{tie-breaking}, we have $x^*_{\rho(1)}\geq x^*_j$, then $\max\{x_{\rho(1)}^*, 1-x_j^*\}\geq \max\{x_j^*, 1-x_j^*\}\geq \frac{1}{2}$, thus the revenue is at least $\frac{v_2}{2}$. Overall, the competitive ratio against $2v_2\cdot \mathbb{I}[\#_{\gamma}(\bv\neq \bhv)=2]$ is at most $4$.
\end{proof}

Overall, since we have the following benchmark decomposition
\begin{align*}
    EFO^{(2)}_{\mathcal{X}}(\boldsymbol{v}) &\leq  EFO^{(2)}_{\mathcal{X}}(\boldsymbol{v})\cdot \mathbb{I}[\#_{\gamma}(\boldsymbol{v}\neq \hat{\boldsymbol{v}})\geq 3]+2v_2\cdot \mathbb{I}[\#_{\gamma}(\boldsymbol{v}\neq \hat{\boldsymbol{v}})=2]\notag
    \\&\quad\quad +EFO^{(2)}_{\mathcal{X}}(\boldsymbol{v})\cdot \mathbb{I}[\#_{\gamma}(\boldsymbol{v}\neq \hat{\boldsymbol{v}})\leq 1] +EFO^{(3)}_{\mathcal{X}}(\boldsymbol{v})\cdot \mathbb{I}[\#_{\gamma}(\boldsymbol{v}\neq \hat{\boldsymbol{v}})=2],
\end{align*}
by \lem{benchmark decomposition}, \thm{et dc} is straightforward.

\end{document}